\documentclass[11pt]{article}
\usepackage[T1]{fontenc}
\usepackage{lmodern}
\usepackage{hyphenat}

    \usepackage{fullpage}
    \usepackage{amsmath,amssymb,amsthm}
    \usepackage{mathtools}
    \usepackage{microtype}
    \usepackage{chngcntr} 
    \usepackage[ruled,vlined,linesnumbered]{algorithm2e}
    \SetVlineSkip{0pt}           
    \SetInd{0.6em}{0.6em}        
    \usepackage{float} 
    \usepackage{enumitem}
    \setlist{leftmargin=*, itemsep=0.2em, topsep=0.2em, parsep=0em, partopsep=0em}
    \usepackage{xcolor}
    \usepackage[colorlinks=true,linkcolor=blue,citecolor=blue,urlcolor=blue]{hyperref}
    \usepackage{complexity}
    \newtheorem{theorem}{Theorem}[section]
    \newtheorem{lemma}[theorem]{Lemma}
    \newtheorem{proposition}[theorem]{Proposition}
    \newtheorem{corollary}[theorem]{Corollary}

    \theoremstyle{definition}
    \newtheorem{definition}[theorem]{Definition}

    \numberwithin{equation}{section}
    \counterwithin{figure}{section}
    \counterwithin{table}{section}
    \counterwithin{algocf}{section} 

    \newcommand{\Jobs}{\mathcal{J}}

    \newcommand{\pos}{\operatorname{pos}}
    \newcommand{\EDD}{\textsc{EDD}}
    \newcommand{\SPT}{\textsc{SPT}}
    \newcommand{\OPT}{\textsc{OPT}}

\SetAlFnt{\small}              
\SetAlgoNlRelativeSize{-1}     
\SetInd{0.6em}{0.6em}          

    \title{The \texorpdfstring{$k$}{k}-Sum Lateness Problem on a Single Machine}
    \author{Ricardo Arancibia-Castillo\thanks{Department of Mathematical Engineering, Universidad de Chile.} \and José A. Soto\footnotemark[1] \thanks{Center for Mathematical Modeling CNRS-IRL 2807, Universidad de Chile.}}
    \date{}
    
\begin{document}
    \maketitle
    
\begin{abstract}
We study a single-machine scheduling problem in which each job $j$ has a
nonnegative processing time $p_j\ge 0$ and a due date $d_j\in\mathbb{R}$. For a
non-idling schedule $S$, let $C_j(S)$ be the completion time and let
$L_j(S)=C_j(S)-d_j$ be the (possibly negative) lateness. The objective is to
minimize the sum of the $k$ largest lateness values, interpolating between
maximum lateness ($k=1$) and total lateness ($k=n$).

We prove that the decision version is weakly NP-complete when $k$ is part of the
input. For fixed $k$, we give an $O(k^2n^{k+2})$ algorithm. As a consequence, we
resolve a conjecture of Woeginger on the top-$k$ tardiness problem and obtain an
$O(n^{k+2})$ algorithm for every fixed $k$.

Our main structural result shows that there exists an optimal schedule that
admits a block--island decomposition. Outside a suitable top-$k$ set, jobs form
due-date blocks ordered by due date. Within each due-date class, the top-$k$
jobs form a suffix in lexicographic shortest-processing-time (\SPT{}) order. This structure also yields an
FPT algorithm parameterized by $D+k$, where $D$ is the number of distinct due
dates.

Independently, a standard dual representation of the top-$k$ objective reduces
the problem to a family of total-tardiness instances with uniformly shifted due
dates. For integral data, this gives a pseudopolynomial algorithm and a fully
polynomial additive approximation scheme with error at most $\varepsilon M$,
where $M=\max\{1,\max_j p_j,\max_j |d_j|\}$. The same route also gives XP
algorithms for fixed $P$ and fixed $D$, where $P$ is the number of distinct
processing times.
\end{abstract}

    
\section{Introduction}

    Single-machine scheduling with due dates captures situations in which a set of jobs must be processed on a single machine.
    Each job $j$ has a nonnegative processing time $p_j\ge 0$ and a due date $d_j\in\mathbb{R}$ \cite{Graham1979,Lawler1993,Pinedo2016}.
    A schedule is a permutation $S$ of the jobs (with no idle time). It induces a completion time $C_j(S)$ for each job.

    A standard due-date performance metric is the \emph{lateness} $L_j(S)=C_j(S)-d_j$.
    It is negative if job $j$ completes before its due date and positive if it completes after its due date.
    The related notion of \emph{tardiness} is $T_j(S)=\max\{L_j(S),0\}$, so it does not reward early completion.

    Two classical objectives are to minimize \emph{maximum lateness}, $\max_j L_j(S)$, or \emph{total lateness}, $\sum_j L_j(S)$.
    Maximum lateness is minimized by the \emph{earliest-due-date} (\EDD{}) rule, which orders jobs by nondecreasing due dates~\cite{Jackson1955}.
    Since $\sum_j L_j(S)=\sum_j C_j(S)-\sum_j d_j$, minimizing total lateness is equivalent to minimizing total completion time, achieved by the
    \emph{shortest-processing-time} (\SPT{}) dispatching rule, which orders jobs by
    nondecreasing processing times~\cite{Smith1956}.
    
    We study the objective of minimizing the sum of the $k$ largest lateness values. This provides a natural
    interpolation between maximum lateness ($k=1$) and total lateness ($k=n$).

    This objective has a natural interpretation in delivery planning. Suppose that after processing on
    the machine, job $j$ must be delivered, requiring an additional delivery time $s_j$. Setting the due date
    to $d_j=-s_j$ yields $L_j(S)=C_j(S)+s_j$, which is exactly the delivery completion time of job $j$.
    Thus, the objective minimizes the sum of the $k$ latest delivery completion times.
    
    \paragraph{The optimization problem.}
    An instance of \emph{$k$-sum lateness on one machine} consists of $n$ jobs
    $\Jobs=\{1,\dots,n\}$ with processing times $p_j\ge 0$, due dates $d_j\in\mathbb{R}$,
    and an integer $k\in\{1,\dots,n\}$. A solution is a non-idling schedule $S$.
    With completion times $C_j(S)$ and lateness $L_j(S)=C_j(S)-d_j$, let
    $L_{(1)}(S)\ge \cdots \ge L_{(n)}(S)$ be the lateness values in non-increasing order and define
    \[
    \lambda_k(S)=\sum_{t=1}^k L_{(t)}(S).
    \]
    We seek $\min_S\lambda_k(S)$, which interpolates between maximum lateness ($k=1$) and total lateness ($k=n$).
    The objective is a special case of an ordered weighted averaging criterion and is closely related to discrete conditional value-at-risk (CVaR), motivating the dual formulation in Section~\ref{sec:dual}.
    
    \paragraph{Why the general case is different.}
    Unlike the extreme cases $k=1$ and $k=n$, the general $k$-sum lateness problem admits no
    simple dispatching rule based only on individual job parameters, i.e., a fixed total order
    $\prec$ on the parameter space (e.g., on $\mathbb{R}^2$ for $(p_j,d_j)$) and a schedule that
    processes jobs in nondecreasing order of their own parameter vectors (unless
    such an order explicitly distinguishes between identical jobs).
    In particular, it cannot be optimized by interpolating between the \EDD{} and \SPT{} orders.
    For example, consider three jobs with $k=2$ and
    $(p_1,d_1)=(10,7)$, $(p_2,d_2)=(10,7)$, and $(p_3,d_3)=(9,8)$.
    Both the \EDD{} order $(1,2,3)$ and the \SPT{} order $(3,1,2)$ have objective value $34$,
    whereas the schedule $(1,3,2)$ has lateness values $3,11,22$ and thus $\lambda_2=33$.
    Thus, in general, no fixed per-job sorting rule can be optimal; in particular, one cannot obtain an
    optimum by interpolating between the \EDD{} and \SPT{} orders.

\paragraph{Our contributions.}
Let $D$ and $P$ denote the numbers of distinct due dates and distinct processing
times, respectively. We make four main contributions.

\begin{itemize}[leftmargin=2.2em]
\item \textbf{Complexity.} We give an objective-preserving reduction from
$k$-sum tardiness to $k$-sum lateness. Thus every algorithm for $k$-sum lateness
gives the corresponding algorithm for $k$-sum tardiness with only a linear-size
increase in the instance. Using this, we prove
that the decision version of $k$-sum lateness is weakly NP-complete when $k$ is
part of the input (Theorem~\ref{thm:NP-completeness}).

\item \textbf{Consequences from duality.} We use the standard dual representation of the sum
of the $k$ largest coordinates to reduce $k$-sum lateness to a family of
total-tardiness instances with uniformly shifted due dates. The dual identity
itself is standard; the algorithmic point is that one can restrict the shift to
a finite set of breakpoints and then use total-tardiness subroutines to solve
$k$-sum lateness. This gives a pseudopolynomial algorithm for integral data
(Theorem~\ref{thm:pseudopoly}), an additive approximation scheme with error at
most $\varepsilon M$ (Theorem~\ref{thm:additive-approx}), and XP algorithms for
fixed $P$ and fixed $D$ (Theorems~\ref{thm:fixed-P} and~\ref{thm:fixed-D-dual}).

\item \textbf{Structure and fixed $k$.} We prove that some optimal schedule is
$(k-1)$-adjacent to the \EDD{} order (Theorem~\ref{thm:edd-proximity}). This
gives an $O(n^{2k-1})$ XP algorithm by enumeration. We then prove a stronger
block--island decomposition (Theorem~\ref{thm:block-island}). In this
decomposition, the jobs outside a suitable top-$k$ set form due-date blocks,
and the top-$k$ jobs appear in islands between these blocks. The same theorem
also gives a suffix property inside each due-date class. These properties yield
an $O(k^2n^{k+2})$ algorithm for fixed $k$ (Theorem~\ref{thm:xp-k-improved}).
As a corollary, we resolve Woeginger's question on $k$-sum tardiness with
$c=2$ (Corollary~\ref{cor:woeginger-tardiness}).

\item \textbf{FPT algorithm.} The block--island decomposition and the suffix
property also yield an FPT algorithm parameterized by $D+k$, running in
$O(n\log n+k^2D^{k+2})$ time (Theorem~\ref{thm:fpt-kD-dp}).
\end{itemize}

Our results highlight an asymmetry between the small-$k$ and large-$k$ regimes. For fixed $k$, our algorithms rely on proximity to \EDD{}. For $k$ close to $n$, a similar proximity to \SPT{} does not hold, even when $n-k=1$ (Proposition~\ref{prop:no-spt-proximity-h1} in Section~\ref{sec:concl}).
    \paragraph{Organization.}
    Section~\ref{sec:prelim} fixes notation. Section~\ref{sec:hardness} proves weak NP-completeness
    when $k$ is part of the input (Theorem~\ref{thm:NP-completeness}).

    The paper then develops two largely independent algorithmic routes; readers may focus on
    either route without reading the other.
    \begin{itemize}[leftmargin=2.2em]
    \item \emph{Dual (shifted-total-tardiness) route:} Section~\ref{sec:dual} (in particular, Section~\ref{sec:dual-consequences}).
    This route yields a pseudopolynomial algorithm for integral data (Theorem~\ref{thm:pseudopoly}),
    a fully polynomial additive approximation scheme (Theorem~\ref{thm:additive-approx}),
    and XP algorithms for fixed $P$ and fixed $D$ (Theorems~\ref{thm:fixed-P} and~\ref{thm:fixed-D-dual}).

    \item \emph{Structural route:} Section~\ref{sec:structure} and its algorithmic consequences in Section~\ref{sec:xp-structural}.
    This route proves the block--island decomposition and the suffix property (Theorem~\ref{thm:block-island})
    and derives an $O(k^2n^{k+2})$ XP algorithm for fixed $k$ (Theorem~\ref{thm:xp-k-improved})
    and an FPT algorithm parameterized by $D+k$ (Theorem~\ref{thm:fpt-kD-dp}).
    \end{itemize}

\paragraph{Related work.}

For $k=1$ and $k=n$, our objective reduces to well-studied single-machine problems. Jackson's \EDD{} rule minimizes maximum lateness on one machine \cite{Jackson1955}, and Lawler extended this type of result to more general maximum-cost objectives with precedence constraints \cite{Lawler1973}. Smith's rule orders jobs by nondecreasing processing time and minimizes total completion time \cite{Smith1956}; since total lateness differs from total completion time by the constant $\sum_j d_j$, the same rule also minimizes total lateness. These extremes lie on the boundary of polynomially solvable single-machine due-date problems \cite{Graham1979,Brucker1977}.

The tardiness analogue is substantially harder. Du and Leung showed that total tardiness on one machine is weakly NP-hard \cite{DuLeung1990}. Lawler gave a pseudopolynomial dynamic program \cite{Lawler1977} and later an approximation scheme \cite{Lawler1982}. For the ordered objective that sums the $k$ largest tardiness values, Woeginger obtained an $O(n^{2k-1})$ algorithm for every fixed $k$ \cite{Woeginger1991}. At the end of his paper, he conjectured that the time complexity could be improved to $O(n^{k+c})$ for some small constant $c$. In this work, we resolve this conjecture by giving an $O(k^2 n^{k+2})$ algorithm (Corollary~\ref{cor:woeginger-tardiness}), corresponding to $c=2$.

Our fixed-$k$ algorithms are inspired by Woeginger's proximity approach for $k$-sum tardiness \cite{Woeginger1991}, where one searches for an optimum ``close'' to a canonical order. Extending this viewpoint from tardiness to lateness is not immediate, since lateness values can be negative and do not have the same truncation property as tardiness. These differences motivate the structural results developed in Section~\ref{sec:structure}.

Our objective also connects to the broader area of $k$-sum and ordered-median optimization. Gupta and Punnen introduced $k$-sum optimization as an interpolation between min-sum and min-max objectives \cite{GuptaPunnen1990}, and Punnen and Aneja extended the framework to more general feasible-set systems \cite{PunnenAneja1996}. Puerto, Rodriguez-Chia, and Tamir studied continuous and integer formulations for $k$-sum optimization \cite{PuertoRodriguezChiaTamir2017}. Related ordered weighted averaging (OWA) and ordered-median formulations were studied by Ogryczak and Tamir \cite{OgryczakTamir2003} and by Ogryczak and \'Sliwi\'nski \cite{OgryczakSliwinski2003}, and compact MILP formulations linking $k$-sum objectives with CVaR were developed by Filippi, Ogryczak, and Speranza \cite{FilippiOgryczakSperanza2019}. In scheduling, however, the ordered vector is induced by the schedule itself (lateness or tardiness values), which limits the direct applicability of generic $k$-sum formulations; see also \cite{Heeger2024}.

Recent work in parameterized scheduling studies due-date objectives with respect to the number of distinct due dates or processing times; e.g., Hermelin, Karhi, Pinedo, and Shabtay \cite{Hermelin2021} and Kaul, Mnich, and Molter \cite{DBLP:conf/iwpec/KaulMM24} consider minimizing the number of tardy jobs under such parameters. Our fixed-$D$ and fixed-$P$ algorithms extend this perspective to the $k$-sum lateness objective.

\section{Preliminaries}
    \label{sec:prelim}
    
    \paragraph{Instance and schedules.}
    For an integer $m\ge 1$, write $[m]=\{1,\ldots,m\}$.
    An instance consists of $n$ jobs $\Jobs=\{1,\dots,n\}$ with nonnegative
processing times $p_j\ge 0$, due dates $d_j\in\mathbb{R}$, and an integer
$k\in[n]$.
A schedule $S$ is a permutation of $\Jobs$ (with no idle time). Zero-processing
jobs are allowed and are ordered by the same tie-breaking rules as all other
jobs.

    Let $\pos_S(j)$ be the position of job $j$ in $S$.
    The completion time of $j$ is
    \[
    C_j(S)=\sum_{\ell:\,\pos_S(\ell)\le \pos_S(j)} p_\ell,
    \]
    and its lateness and tardiness are $L_j(S)=C_j(S)-d_j$ and $T_j(S)=\max\{0,L_j(S)\}$, respectively.

    \paragraph{Tie-breaking.}
    Fix a strict total order $\prec$ on jobs used only for tie-breaking.
    We refine processing time into a strict order $\prec_p$ by declaring $i\prec_p j$ if $p_i<p_j$, or if $p_i=p_j$ and $i\prec j$.
    Whenever we sort by \SPT{}, we mean sorting according to $\prec_p$.

\paragraph{Input model.}
Unless stated otherwise, we assume that processing times and due dates are rational numbers
encoded in binary, and that all arithmetic comparisons and additions are performed exactly.
When stating the pseudopolynomial algorithm and the additive approximation scheme, we
explicitly restrict the input to integral processing times and integral due dates. In contrast, our
structural statements remain valid regardless of the encoding model, as they rely only on
comparisons and algebraic identities.

    \paragraph{The top-$k$ objective.}
    Let $L_{(1)}(S)\ge\cdots\ge L_{(n)}(S)$ be the lateness values in
    non-increasing order. The $k$-sum lateness objective is
    \[
    \lambda_k(S)=\sum_{t=1}^k L_{(t)}(S).
    \]
    We use the convention $\max\emptyset=-\infty$. A set $X\subseteq\Jobs$ is a \emph{top-$k$ set for $S$} if $|X|=k$ and
    $\min_{x\in X} L_x(S)\ge \max_{y\notin X} L_y(S)$. For every top-$k$ set $X$
    for $S$, we have $\lambda_k(S)=\sum_{x\in X}L_x(S)$. When a canonical top-$k$
    set is needed, ties in lateness are broken by $\prec_p$, and the resulting
    set is denoted by $S[k]$. We also use the $k$-sum tardiness objective
    $\tau_k(S)=\sum_{t=1}^k \max\{0,L_{(t)}(S)\}$.

    \paragraph{Special cases and dispatching rules.}
    We make dispatching rules deterministic via the tie-breaking above: \EDD{} sorts by nondecreasing due dates and breaks ties by $\prec_p$, while \SPT{} sorts by $\prec_p$.
    For $k=1$, minimizing $\lambda_k$ is the maximum-lateness problem, and \EDD{} is optimal \cite{Jackson1955}.
    For $k=n$, minimizing $\lambda_k$ is equivalent to minimizing $\sum_j C_j(S)$, and \SPT{} is optimal \cite{Smith1956}.

\section{Weak NP-completeness of \texorpdfstring{$k$}{k}-sum lateness}
    \label{sec:hardness}

    This section proves that the decision version of $k$-sum lateness is weakly
NP-complete when $k$ is part of the input. We reduce from $k$-sum tardiness,
which is weakly NP-hard since it contains total tardiness as the special case
$k=n$ \cite{DuLeung1990}. We first give an objective-preserving reduction using
$k$ zero-processing dummy jobs. We then give a scaled version showing that the
hardness holds even under the stronger restriction that all processing times are
integers at least one.

    \begin{proposition}
    \label{prop:zero-dummy-reduction}
    There is an objective-preserving polynomial-time reduction from $k$-sum
tardiness to $k$-sum lateness using $k$ zero-processing dummy jobs.
    An instance with $n$ jobs is mapped to an instance with $n+k$ jobs, and any optimal schedule for the lateness instance can be transformed in linear time into an optimal schedule for the original tardiness instance.
    \end{proposition}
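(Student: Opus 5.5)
Given a $k$-sum tardiness instance with jobs $\Jobs=[n]$, data $(p_j,d_j)$ and parameter $k$, I build the lateness instance by adding $k$ dummy jobs $z_1,\dots,z_k$ with $p_{z_i}=0$ and $d_{z_i}=B$. Here $B$ is a large value, say $B=-\sum_j p_j$ minus a margin, or simply $B\le 0$ chosen so that each dummy has lateness exactly $C_{z_i}-B$. On reflection I want the opposite: dummy lateness should be exactly $0$ wherever the dummy is placed. That cannot hold for every position, but it holds if the dummies are scheduled first, where $C=0$. So I set $d_{z_i}=0$ and $p_{z_i}=0$. A dummy placed at the front then has lateness $0$, and a dummy placed anywhere else has lateness $C\ge 0$. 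The parameter $k$ is kept. The instance has $n+k$ jobs and is clearly built in polynomial time.

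\textbf{Key identity.} For any schedule $\sigma$ of the original jobs, let $S_\sigma$ be $\sigma$ preceded by the $k$ dummies. Every real job keeps its completion time, so its lateness is unchanged, and the dummies contribute $k$ lateness values equal to $0$. The top $k$ values of the multiset $\{L_j(\sigma)\}\cup\{0\}^k$ are the values $\max\{0,L_{(t)}(\sigma)\}$ for $t=1,\dots,k$. Indeed, padding with $k$ zeros means the $t$-th largest value is $\max(L_{(t)},0)$. Hence $\lambda_k(S_\sigma)=\tau_k(\sigma)$, which gives $\min\lambda_k\le\min\tau_k$.

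\textbf{Reverse inequality and transformation.} For an arbitrary schedule $S$ of the new instance, let $\sigma$ be the restriction of $S$ to the real jobs. I claim $\lambda_k(S)\ge\lambda_k(S_\sigma)=\tau_k(\sigma)$. Moving all dummies to the front never increases any real job's completion time, because dummies have zero length and so real completion times are literally unchanged. The move also lowers each dummy's lateness from $C\ge 0$ to $0$. Lowering coordinates entrywise can only decrease the sum of the $k$ largest, since $\lambda_k$ is monotone in each coordinate. So $\lambda_k(S)\ge\lambda_k(S_\sigma)=\tau_k(\sigma)\ge \min\tau_k$. Consequently the optima coincide, which is the objective-preserving claim. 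Moreover, if $S$ is optimal for the lateness instance, then $\tau_k(\sigma)\le\lambda_k(S)=\min\lambda_k=\min\tau_k$, so $\sigma$ is optimal for the tardiness instance. Deleting the dummies from $S$ takes linear time.

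\textbf{Main obstacle.} The only delicate step is the padding identity: the $t$-th largest element of the padded multiset equals $\max\{0,L_{(t)}\}$ for $t\le k$. I would verify it by letting $m$ be the number of positive values among the $L_j(\sigma)$. For $t\le m$ the $t$-th largest is $L_{(t)}>0$. For $m<t\le k$ there are enough zeros, namely $k\ge t-m$ of them, so that the $t$-th largest is $0$. I would also note monotonicity of $\lambda_k$ under coordinatewise decrease, which follows from writing $\lambda_k$ as a maximum over $k$-subsets of sums.
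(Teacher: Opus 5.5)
Your proposal is correct and follows the paper's proof essentially step for step. Both use the same $k$ dummies with $p=d=0$ and the front-loading identity $\lambda_k(S_\sigma)=\tau_k(\sigma)$. Both then use coordinatewise monotonicity to get $\lambda_k(S)\ge\tau_k(S|_{\Jobs})$, which gives equality of the optima and optimality of the restriction. In a final write-up, drop the abandoned first attempt with $d_{z_i}=B$, since only the choice $d_{z_i}=0$ is used.
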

    
    \begin{proof}
    Let $\mathcal{I}$ be an instance of $k$-sum tardiness on job set $\Jobs$.
    Add a set $Z$ of $k$ dummy jobs, each with processing time $0$ and due date $0$, and keep all
    original jobs unchanged. The resulting instance is an instance of $k$-sum lateness on
    $\Jobs\cup Z$.

    First, consider a schedule $S$ of $\Jobs$. Let $Z[S]$ be the schedule that places all jobs
    of $Z$ first and then schedules $\Jobs$ according to $S$. Since the dummy jobs do not take
    time, the lateness of each original job is unchanged, and each dummy job has lateness $0$.
    Because there are $k$ such zero-lateness jobs, the $k$ largest lateness values in $Z[S]$
    are exactly the positive part of the $k$ largest lateness values of $S$. Hence
    \[
    \lambda_k(Z[S])=\tau_k(S).
    \]
    
    Now let $Q$ be any schedule of $\Jobs\cup Z$, and let $S=Q|_{\Jobs}$ be the restriction of
    $Q$ to the original jobs. Moving all dummy jobs to the beginning does not change the
    completion time of any original job, since dummy jobs have processing time $0$. It changes
    each dummy lateness to $0$, and therefore cannot increase the top-$k$ lateness sum. Thus
    \[
    \lambda_k(Q)\ge \lambda_k(Z[S])=\tau_k(S).
    \]
    
    Let $\OPT_L$ be the optimal value of the constructed lateness instance, and let $\OPT_T$
    be the optimal value of the original tardiness instance. The first part gives
    $\OPT_L\le \OPT_T$, by applying it to an optimal tardiness schedule. The second part gives
    $\OPT_L\ge \OPT_T$, because every lateness schedule has value at least the tardiness value
    of its restriction, and this restriction has value at least $\OPT_T$. Therefore
    \[
    \OPT_L=\OPT_T.
    \]
    
    Finally, let $Q$ be an optimal lateness schedule and set $S=Q|_{\Jobs}$. By the inequality
    above,
    \[
    \OPT_T=\OPT_L=\lambda_k(Q)\ge \tau_k(S)\ge \OPT_T.
    \]
    Hence $\tau_k(S)=\OPT_T$, so the restriction of any optimal lateness schedule is an
    optimal tardiness schedule for $\mathcal{I}$.
    \end{proof}
    
    Since zero-processing jobs are part of our model, Proposition~\ref{prop:zero-dummy-reduction}
    already implies weak NP-hardness. The next theorem records the stronger fact that the
    hardness persists even when all processing times are at least one.

    \begin{theorem}\label{thm:NP-completeness}
    The decision version of $k$-sum lateness is NP-complete when $k$ is part of the input, even
    when all processing times are integers at least one, all due dates are integers, and the
    decision threshold is an integer. Moreover, together with the pseudopolynomial algorithm of
    Theorem~\ref{thm:pseudopoly}, this implies weak NP-completeness.
    \end{theorem}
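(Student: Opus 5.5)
Membership in NP is immediate: a certificate is a permutation $S$ of the jobs, and from it we compute all completion times, latenesses and the sum of the $k$ largest values in polynomial time by sorting. We then compare with the threshold. For hardness we reduce from total tardiness (the case $k=n$ of $k$-sum tardiness), which is NP-hard by \cite{DuLeung1990} with integral data. The reduction modifies the construction of Proposition~\ref{prop:zero-dummy-reduction} so that the dummy jobs have processing time one rather than zero.

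\textbf{Construction.} Given integer data $(p_j,d_j)_{j\in\Jobs}$ with $p_j\ge 1$, $|\Jobs|=n$, and threshold $y$, fix a scaling factor $N$, say $N=n+1$ or a polynomially larger integer. Scale every original job to $p'_j=Np_j$ and $d'_j=Nd_j+n$. Add a set $Z$ of $k=n$ dummy jobs with $p=1$ and due date $n$. The shift by $n$ accounts for the at most $n$ units of dummy time that may precede any original job. Set $k=n$ and threshold $y'=Ny+$ a small correction term. The intended schedule places all dummies first. The dummies then finish at times $1,\dots,n$ and have lateness $\le 0$, while each original job has lateness $N(C_j-d_j)$. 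The top-$n$ sum among $2n$ jobs is then at least the sum of the positive parts, which is $N\cdot\sum_j T_j$.

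\textbf{Two inequalities.} (i) A tardiness schedule of value $\le y$ yields a lateness schedule of value $\le Ny$. In it the $n$ largest latenesses consist of the positive original latenesses, padded by values $\le 0$ coming from dummies or on-time jobs. We need this padding to be exactly $0$, and this is the delicate point. I will choose the dummy due dates so that the last dummy has lateness exactly $0$, and compare against the structure of Proposition~\ref{prop:zero-dummy-reduction}. Any leftover negative slack is absorbed by putting the correction into $y'$ as an upper bound. (ii) Conversely, from any lateness schedule $Q$ of value $\le y'$, restricting to $\Jobs$ yields a schedule whose original completion times differ from $N C_j(S)$ by at most $n<N$. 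Since the top-$k$ lateness sum dominates the sum of the positive parts minus a bounded error, we obtain $N\tau(S)\le y'+O(n^2)$. Choosing $N>n^2+$ (correction) and using the integrality of $\tau(S)$ then gives $\tau(S)\le y$.

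\textbf{Main obstacle.} The hard step is the rounding argument in (ii). Dummies with positive processing time shift original jobs, and negative dummy latenesses may enter the top-$k$ set. I expect to handle both by bounding the total perturbation by a polynomial $E(n)$, then taking $N>2E(n)$ so that integrality of the original objective absorbs the error.

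\textbf{Weak NP-completeness.} Strong NP-hardness is excluded (unless P$=$NP) by the pseudopolynomial algorithm of Theorem~\ref{thm:pseudopoly}, so the problem is weakly NP-complete. All produced numbers are integers of polynomial size, and processing times are $\ge 1$, as the statement requires.
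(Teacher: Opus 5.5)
Your reduction is essentially the paper's construction: scale the original jobs by a large integer, shift their due dates by $n$ to absorb the at most $n$ units of dummy time, add $n$ unit dummies, set $k=n$, and conclude by integrality of the tardiness value. The one real difference is the dummy due date, and it changes which step needs care.

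\textbf{Paper's choice.} The paper gives the dummies due date $0$. Every dummy then has lateness at least $1$ in every schedule. So all $k$ largest values are positive, and $\lambda_k(Q)\ge\sum_{j\in H}L'_j(Q)$ is immediate. The price is the additive term $\Delta_k=k(k+1)/2$ in the threshold and the requirement $M>\Delta_k+k^2$.

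\textbf{Your choice.} With due date $n$, the intended schedule has dummy latenesses $1-n,\dots,0\le 0$. So (i) holds with $y'=Ny$ and no correction. But (ii) must then handle negative dummy values, and your proposal only promises a bound $E(n)$ for this. The bound is one line. $\lambda_n(Q)$ is at least the lateness sum of \emph{any} $n$-subset. Take the tardy set $H$ of $S=Q|_{\Jobs}$ together with any $n-|H|$ dummies. Each dummy completes at time $\ge 1$, so its lateness is $\ge 1-n$. Each $j\in H$ has $L'_j(Q)\ge N L_j(S)-n$. Hence
\[
\lambda_n(Q)\;\ge\; N\tau_n(S)-n|H|-(n-1)\bigl(n-|H|\bigr)\;=\;N\tau_n(S)-n^2+n-|H|\;\ge\; N\tau_n(S)-n^2 .
\]
So $y'=Ny$ together with any integer $N>n^2$ works. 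Drop the tentative $N=n+1$, which this estimate does not support.

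\textbf{Corrections.} Two statements in your write-up should be fixed.
\begin{itemize}
\item In the intended schedule the top-$n$ sum is \emph{at most} $N\sum_j T_j(S)$, not at least, because the padding values are $\le 0$. That upper bound is exactly what (i) uses.
\item The padding does not need to be exactly $0$. Nonpositive padding is harmless in (i), so the ``delicate point'' you raise there does not actually arise.
\end{itemize}
With these changes your argument is a complete proof. It uses a slightly cleaner threshold than the paper's, at the cost of the small extra estimate above.
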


    \begin{proof}
    Membership in NP is immediate: a certificate is a permutation of the jobs, and its objective
    value can be computed in polynomial time. We reduce from the decision version of total
    tardiness, which is weakly NP-hard by Du and Leung~\cite{DuLeung1990}.

    Let $\mathcal{I}$ be a total-tardiness instance on job set $\Jobs$, with $n=|\Jobs|$, positive
    integral processing times, integral due dates, and an integral target value $B$. Set $k:=n$
    and view $\mathcal{I}$ as a $k$-sum tardiness instance. Let $\Delta_k=k(k+1)/2$ and choose an
    integer $M>\Delta_k+k^2$. Construct a lateness instance as follows.
    For every original job $j$, set $p'_j=Mp_j$ and $d'_j=Md_j+k$.
    Add $k$ dummy jobs, each with processing time $1$ and due date $0$.
    Set the lateness target to $B'=MB+\Delta_k$.
    
    If $S$ is a schedule of the original jobs with $\tau_k(S)\le B$, let $X_S$ be the schedule
    that places the dummy jobs first and then schedules the original jobs according to $S$. In
    $X_S$, the dummy jobs have lateness values $1,2,\ldots,k$. For every original job $j$, the
    $k$ units of dummy processing time are canceled by the $+k$ shift in the due date, and thus
    \[
    L'_j(X_S)=M\bigl(C_j(S)-d_j\bigr)=M L_j(S).
    \]
    In particular, original jobs with positive lateness in $S$ have constructed lateness at least
    $M>k$, while original jobs with nonpositive lateness in $S$ have constructed lateness at most
    $0$. Hence the $k$ largest lateness values of $X_S$ come from the positive lateness values of
    $S$ (scaled by $M$) and, if fewer than $k$ such values exist, from some of the dummy values
    $1,\ldots,k$. Therefore
    \[
    \lambda_k(X_S) \le M\tau_k(S)+\Delta_k \le MB+\Delta_k = B'.
    \]
    
    Conversely, let $Q$ be any schedule of the constructed lateness instance, and let
    $S=Q|_{\Jobs}$. For an original job $j$, let $q_j$ be the number of dummy jobs scheduled
    before $j$ in $Q$. Then
    \[
    L'_j(Q)=MC_j(S)+q_j-(Md_j+k)=M\bigl(C_j(S)-d_j\bigr)-(k-q_j)\ge M\bigl(C_j(S)-d_j\bigr)-k.
    \]
    Let $H$ be the set of original jobs with positive tardiness in $S$ that contribute to
    $\tau_k(S)$ (if fewer than $k$ jobs have positive tardiness, take all of them). Every job in
    $H$ has $C_j(S)-d_j\ge 1$ and thus $L'_j(Q)\ge M-k>0$.

    Since the $k$ dummy jobs also have positive lateness, the $k$ largest lateness values in $Q$
    are all positive, and therefore
    \[
    \lambda_k(Q)\ge \sum_{j\in H} L'_j(Q)\ge \sum_{j\in H} \bigl(M(C_j(S)-d_j)-k\bigr)=M\tau_k(S)-k|H|\ge M\tau_k(S)-k^2.
    \]
    If $\tau_k(S)\ge B+1$, then $\lambda_k(Q)\ge M(B+1)-k^2>MB+\Delta_k=B'$ by the choice of
    $M$. Therefore, any schedule with $\lambda_k(Q)\le B'$ yields a schedule $S$ with
    $\tau_k(S)\le B$.
    The construction has polynomial encoding length, and the argument proves
    NP-hardness. Membership in NP was shown above, so the decision problem is
    NP-complete. The pseudopolynomial algorithm given later in
    Theorem~\ref{thm:pseudopoly} shows that this NP-completeness is weak.
    \end{proof}

\section{From top-\texorpdfstring{$k$}{k} lateness to shifted total tardiness using duality}
    \label{sec:dual}

    This section presents an algorithmic route based on a continuous shift parameter $r$.
    Using a standard dual representation of the sum of the $k$ largest lateness values, we reduce
    $k$-sum lateness to a family of shifted total-tardiness instances with due dates $d_j+r$.
    We then show how to restrict attention to a finite set of candidate shifts and how to solve
    each shifted instance using variants of Lawler's dynamic program.

    This yields a pseudopolynomial algorithm and a fully polynomial additive approximation scheme
    for integral data, as well as XP algorithms for fixed $P$ and fixed $D$.
    The section is independent of the combinatorial structural results developed later.
    We first derive the min--min formulation, then recall the total-tardiness subroutines used to compute $\Phi(r)$, and finally restrict the relevant shifts for the resulting algorithms.

\paragraph{Duality.}
For an arbitrary schedule $S$, the top-$k$ objective can be written as the following 0--1 linear program:
    \begin{equation}
    \label{eq:topk-as-max}
    \lambda_k(S)=
    \max\Bigl\{\sum_{j=1}^n x_j L_j(S): \sum_{j=1}^n x_j=k,\; x\in\{0,1\}^n\Bigr\}.
    \end{equation}

    Rewriting $L_j(S)=C_j(S)-d_j$ and relaxing $x\in\{0,1\}^n$ to $x\in[0,1]^n$ gives
    \[
    \max\left\{\sum_{j=1}^n x_j(C_j(S)-d_j):\sum_{j=1}^n x_j=k,\ 0\le x_j\le 1\ \forall j\right\}.
    \]
    This relaxation has the same optimal value as~\eqref{eq:topk-as-max}, since the polytope
    $\{x\in[0,1]^n:\sum_j x_j=k\}$ is integral (e.g., its constraint matrix is totally unimodular and the
    right-hand side is integral). Its dual has a free variable $r$ for the equality constraint and
    nonnegative variables $y_j$ for the upper bounds $x_j\le 1$:
    \begin{align}
    \lambda_k(S)
      &= \min\Bigl\{kr+\sum_{j=1}^n y_j : r+y_j\ge C_j(S)-d_j,\; y\ge 0,\; r\in\mathbb{R}\Bigr\}\notag \\
      &= \min_{r\in\mathbb{R}} \Bigl\{kr+\sum_{j=1}^n (C_j(S)-d_j-r)_+\Bigr\}.\label{eq:dual}
    \end{align}

    Therefore, by minimizing over all possible schedules,
    \begin{equation}
    \label{eq:minmin}
    \min_S \lambda_k(S)=
    \min_{r\in\mathbb{R}} \Bigl\{kr+ \min_S \sum_{j=1}^n (C_j(S)-(d_j+r))_+ \Bigr\}.
    \end{equation}

    \begin{theorem}[Top-$k$ dual reduction]
    \label{thm:topk-dual-reduction}
    For every instance, the identity~\eqref{eq:minmin} holds. Thus, after fixing $r$, the
    inner problem is a total-tardiness instance with due dates $d_j+r$.
    \end{theorem}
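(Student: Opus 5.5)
The plan is to establish the per-schedule identity~\eqref{eq:dual} rigorously, and then exchange the two minimizations. The exchange is harmless because both are minimizations: $\min_S\min_r=\min_r\min_S$ holds whenever all the minima involved exist.

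\textbf{Step 1 (per-schedule identity).} Fix $S$ and write $a_j=L_j(S)$. Rather than rely on LP duality, I will verify~\eqref{eq:dual} directly with two inequalities.

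For every $r\in\mathbb{R}$ and every top-$k$ set $X$ for $S$, we have
\[
\sum_{x\in X}a_x=\sum_{x\in X}\bigl(r+(a_x-r)\bigr)\le kr+\sum_{x\in X}(a_x-r)_+\le kr+\sum_{j}(a_j-r)_+ .
\]
This gives $\lambda_k(S)\le g_S(r):=kr+\sum_j(a_j-r)_+$ for all $r$.

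For the reverse direction, choose $r^*=L_{(k)}(S)$. Then $(a_j-r^*)_+=a_j-r^*$ for $j$ in a top-$k$ set, and $(a_j-r^*)_+=0$ for the remaining jobs, since they have $a_j\le r^*$. Hence $g_S(r^*)=\sum_{t\le k}L_{(t)}(S)=\lambda_k(S)$. So the infimum over $r$ is attained, and it equals $\lambda_k(S)$.

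This also confirms the LP argument sketched before the statement: the TU integrality of $\{x\in[0,1]^n:\sum_j x_j=k\}$ together with strong duality yields the same identity. The elementary argument above sidesteps having to check feasibility and boundedness.

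\textbf{Step 2 (exchange of minima).} Define $\Phi(r)=\min_S\sum_j(C_j(S)-(d_j+r))_+$. This is a minimum over the finitely many permutations, so it is well defined.

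By Step~1, $\min_S\lambda_k(S)=\min_S\min_r g_S(r)$. For any $S$ and $r$ we have $g_S(r)\ge kr+\Phi(r)$. Therefore $\min_S\lambda_k(S)\ge\inf_r\{kr+\Phi(r)\}$.

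Conversely, take $S^*$ optimal for $\lambda_k$ and $r^*=L_{(k)}(S^*)$. For every $r$ and every $S$ with $S$ attaining $\Phi(r)$, we get $kr+\Phi(r)=g_S(r)\ge\lambda_k(S)\ge\lambda_k(S^*)$. So the infimum over $r$ is at least $\lambda_k(S^*)$, and it is attained at $r^*$, because $kr^*+\Phi(r^*)\le g_{S^*}(r^*)=\lambda_k(S^*)$. This proves~\eqref{eq:minmin} with a genuine minimum on the right-hand side.

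\textbf{Step 3 (interpretation).} For fixed $r$, the inner problem $\min_S\sum_j(C_j(S)-(d_j+r))_+$ is, by definition, total tardiness with due dates $d_j+r$. This is the second sentence of the statement.

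The only point needing care is the attainment of the outer minimum over $r\in\mathbb{R}$, since $\mathbb{R}$ is not compact. I expect this to be the main (mild) obstacle. It is handled by the explicit witness $r^*=L_{(k)}(S^*)$ above, so no continuity or convexity argument for $\Phi$ is needed.
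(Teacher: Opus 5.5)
Your proof is correct, but it obtains the per-schedule identity \eqref{eq:dual} by a different route from the paper's. The paper derives \eqref{eq:dual} from LP theory. It writes $\lambda_k(S)$ as a $0$--$1$ program and relaxes it to $\{x\in[0,1]^n:\sum_j x_j=k\}$ using total unimodularity. It then takes the LP dual, with a free variable $r$ and slacks $y_j\ge 0$, and eliminates $y$. The theorem's proof then just minimizes over $S$.

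You verify \eqref{eq:dual} by hand instead. You prove the weak-duality bound $\lambda_k(S)\le kr+\sum_j(L_j(S)-r)_+$ for every $r$, and you exhibit the explicit witness $r^*=L_{(k)}(S)$ that attains equality. You also make explicit the exchange of the two minimizations and the attainment of the outer minimum over $r\in\mathbb{R}$, which the paper leaves implicit.

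The paper's route is shorter given standard LP facts, and it fits the CVaR/LP viewpoint the paper cites. Yours is self-contained, since it needs no integrality or strong-duality check. It also gives something concrete: $r^*=L_{(k)}(S^*)=C_j(S^*)-d_j$ for some job $j$ of an optimal schedule $S^*$. So your Step 2 already shows that the minimum in \eqref{eq:shift-tt} is attained at a breakpoint, which is exactly Lemma~\ref{lem:breakpoint-reduction}. This avoids the piecewise-linear slope analysis, with its separate $k=n$ case, that the paper uses to prove that lemma.
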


    \begin{proof}
    The derivation above proves the dual identity~\eqref{eq:dual} for every
    fixed schedule $S$. Minimizing that identity over all schedules gives
    \eqref{eq:minmin}. For a fixed value of $r$, the inner objective is
    $\sum_{j=1}^n (C_j(S)-(d_j+r))_+$, which is exactly total tardiness for the
    instance with shifted due dates $d_j+r$. This proves the claim.
    \end{proof}

    The inner minimization is total tardiness for due dates shifted by $r$. Thus, once a shift
    $r$ is fixed, the remaining task is a total-tardiness instance. For integral data we use
    Lawler's pseudopolynomial algorithm for total tardiness.

The dual formulation reduces $k$-sum lateness to a family of shifted
total-tardiness instances. For a real number $r$, let $\Phi(r)$ be the
optimum value of the total-tardiness instance obtained by shifting every due
date from $d_j$ to $d_j+r$:
\[
\Phi(r)
=
\min_S \sum_{j\in\Jobs} (C_j(S)-d_j-r)_+ .
\]
Thus the dual formulation~\eqref{eq:minmin} can be written as
\[
\OPT
=
\min_{r\in\mathbb{R}}
\{kr+\Phi(r)\}.
\tag{Shift-Total Tardiness}
\label{eq:shift-tt}
\]
The algorithmic consequences of this shifted formulation depend on two tasks:
restricting the set of shifts $r$ that must be considered, and computing $\Phi(r)$
efficiently for each such shift.

\subsection{Revisiting total tardiness}
\label{sec:tt-subroutines}

We next recall Lawler's decomposition dynamic program for single-machine total tardiness,
which serves as our main subroutine. We then describe two refinements tailored to our use
in the shifted formulation: a compressed-state implementation for a constant number $P$ of
distinct processing times, and a direct dynamic program for a constant number $D$ of
distinct due dates.
\subsubsection{Lawler's decomposition dynamic program}
\label{sec:lawler-dp}

We recall Lawler's dynamic program for single-machine total tardiness, which is the main subroutine used below.

Fix an arbitrary total-tardiness instance with due dates $\bar d_j$. Let
\[
E=(e_1,e_2,\ldots,e_n)
\]
be the order of the jobs by nondecreasing due date, using the fixed tie-breaking
order from Section~\ref{sec:prelim}. For a job $j$, let $\pos_E(j)$ be its
position in this order. In this subsection, an interval $[a,b]$ always refers
to positions in the order $E$.

Let $\prec_p$ be the strict processing-time order from
Section~\ref{sec:prelim}. For a threshold job $q$, define
\[
J(a,b,q)
=
\{h\in\Jobs:
a\le \pos_E(h)\le b,\ h\prec_p q\}.
\]
The threshold job $q$ is used only to restrict the processing-time order. It
need not belong to the interval $[a,b]$. We also use a dummy threshold job
$q_0$ satisfying $h\prec_p q_0$ for every real job $h$.

Let $F(a,b,q,t)$ be the minimum total tardiness, with due dates $\bar d_j$, of the jobs in
$J(a,b,q)$ when the subproblem starts at time $t$. The boundary conditions are
\[
F(a,b,q,t)=0
\quad\text{if }J(a,b,q)=\emptyset,
\]
and
\[
F(a,b,q,t)=(t+p_j-\bar d_j)_+
\quad\text{if }J(a,b,q)=\{j\}.
\]

Assume now that $|J(a,b,q)|\ge 2$. Let $u$ be the $\prec_p$-maximum job in $J(a,b,q)$, and set
$s:=\pos_E(u)$. For each integer $\alpha$ with $0\le \alpha\le b-s$, set
$A_\alpha:=J(a,s+\alpha,u)$. In Lawler's decomposition, the jobs in $A_\alpha$
are scheduled before $u$, and the jobs in $J(s+\alpha+1,b,u)$ are scheduled
after $u$.
For such a value of $\alpha$, define
\[
C_u(\alpha)
=
t+p_u+\sum_{h\in J(a,s+\alpha,u)}p_h .
\]
Then Lawler shows the following recurrence
\[
F(a,b,q,t)
=
\min_{0\le \alpha\le b-s}
\left\{
F(a,s+\alpha,u,t)
+
(C_u(\alpha)-\bar d_u)_+
+
F(s+\alpha+1,b,u,C_u(\alpha))
\right\}.
\tag{Lawler-DP}
\label{eq:lawler-dp}
\]
The value of the total-tardiness instance is $
F(1,n,q_0,0).$

\begin{theorem}[Lawler]
\label{thm:lawler-pseudopoly}
For integral processing times, total tardiness on one machine can be solved by
recurrence~\eqref{eq:lawler-dp} in time
\[
O(n^4(P_\Sigma+1)),
\qquad
P_\Sigma=\sum_{j\in\Jobs}p_j .
\]
That is, there is a pseudopolynomial-time algorithm for total tardiness.
\end{theorem}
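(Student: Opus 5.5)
The plan has two parts: establish correctness of recurrence~\eqref{eq:lawler-dp} via Lawler's decomposition principle, then bound the state space and the work per state. For correctness, it suffices to show that $F(a,b,q,t)$ correctly evaluates the subproblem on $J(a,b,q)$ started at time $t$. The structural facts we need are Lawler's classical ones.

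\begin{itemize}
\item \textbf{Dominance (Emmons-type).} Some optimal schedule processes two jobs $i$ and $j$ with $i\prec_p j$ and $\pos_E(i)<\pos_E(j)$ in the order $i$ before $j$.
\item \textbf{Decomposition.} Let $u$ be the $\prec_p$-maximum job, with $\pos_E(u)=s$. Then some optimal schedule places exactly the jobs of $J(a,s+\alpha,u)$ before $u$, for some $\alpha\ge 0$, and the remaining jobs after $u$.
\end{itemize}

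The decomposition is proved by an exchange argument. Take an optimal schedule in which $u$ is placed as late as possible. Every job after $u$ then has EDD position larger than that of every job before $u$; otherwise one moves $u$ later without increasing tardiness, using the standard pairwise-interchange inequality that needs only $p_u\ge p_h$ and $\bar d_u\le\bar d_h$. Both sides of the split are subproblems of the same form, with threshold $u$, so induction on $|J(a,b,q)|$ gives the recurrence. Since the theorem is attributed to Lawler, I would cite \cite{Lawler1977} for the decomposition lemma rather than reprove it. The point needing care is that the tie-breaking via $\prec_p$ and the EDD tie order keep the interchange argument valid, including for zero-processing jobs.

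For the running time, a subproblem's jobs are determined by $(a,b,q)$, which gives $O(n^3)$ choices. The start time $t$ is always $0$ plus a sum of processing times of a subset, so with integral data $t\in\{0,\dots,P_\Sigma\}$. States thus number $O(n^3(P_\Sigma+1))$. Each state minimizes over at most $n$ values of $\alpha$. The partial sums $\sum_{h\in J(a,s+\alpha,u)}p_h$ can be maintained incrementally in $\alpha$, or precomputed for all triples in $O(n^3)$ total, so each term costs $O(1)$. Finding $u$ for each $(a,b,q)$ is also a precomputation of $O(n^3)$ or less. This yields $O(n^4(P_\Sigma+1))$ overall, computed by memoization in order of increasing subproblem size.

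The main obstacle is the correctness of the decomposition step. In particular, we must verify that the set scheduled before $u$ is exactly an EDD-prefix restricted to jobs $\prec_p u$, and that the set after $u$ is the complementary suffix starting at position $s+\alpha+1$ with the same threshold $u$. Both claims rest on Lawler's lemma. I would first try to import it verbatim from \cite{Lawler1977}, checking only that the tie-breaking convention matches.
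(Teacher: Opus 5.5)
Your plan matches the paper's proof. Correctness of \eqref{eq:lawler-dp} is taken from \cite{Lawler1977}. The bound comes from $O(n^3)$ triples $(a,b,q)$, $P_\Sigma+1$ integral start times, and at most $n$ values of $\alpha$ per state, each evaluated in $O(1)$ after precomputing the sums $\sum_{h\in J(a,s+\alpha,u)}p_h$.

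One correction, although the step does not carry weight in your proof: the exchange sketch you give for the decomposition lemma is false as stated, so drop it or replace it. Take $p_u=10$, $\bar d_u=11$, and two unit jobs $j,i$ with $\bar d_j=100<\bar d_i=101$.
\begin{itemize}
\item The schedule $(i,u,j)$ has total tardiness $0$.
\item It completes $u$ at time $11$, which is the latest completion time of $u$ in any optimal schedule.
\item Yet $j$ follows $u$ while preceding $i$ in the \EDD{} order. So the set scheduled before $u$ is $\{i\}$, which is not of the form $J(1,1+\alpha,u)$ for any $\alpha$.
\item Moving $u$ later makes it tardy.
\end{itemize}
The interchange you invoke is also not valid in general. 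Moving a longer job with an earlier due date past a shorter one can increase tardiness: with $p_u=10$, $\bar d_u=10$ and $p_h=1$, $\bar d_h=11$, the order $(u,h)$ has tardiness $0$ but $(h,u)$ has tardiness $1$.

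Lawler's actual argument uses his due-date modification theorem with $d'_u=\max\{\bar d_u,C'_u\}$, where $C'_u$ is the latest completion time of $u$ over optimal schedules. Since you cite Lawler for this step, exactly as the paper does, your proof stands.
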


\begin{proof}
We only account for the running time of recurrence~\eqref{eq:lawler-dp}; its correctness is due to
Lawler~\cite{Lawler1977}.

Since processing times are integral, every start time $t$ that appears in a subproblem is an
integer in $\{0,\ldots,P_\Sigma\}$. There are $O(n^2)$ choices for the interval $[a,b]$, $O(n)$
choices for the threshold job $q$, and $O(P_\Sigma+1)$ possible values of $t$, so the number of
states $(a,b,q,t)$ is $O(n^3(P_\Sigma+1))$.

For each state, the recurrence considers at most $n$ values of the split parameter $\alpha$.
After a standard preprocessing step that stores cumulative sums of processing times along the
relevant job orders (for each possible threshold job $u$), the term
$\sum_{h\in J(a,s+\alpha,u)} p_h$ can be obtained in $O(1)$ time. Hence each transition is
evaluated in $O(1)$ time after preprocessing.

The total running time is therefore
$O(n^3(P_\Sigma+1)\cdot n)=O(n^4(P_\Sigma+1))$.
\end{proof}

\subsubsection{A compressed implementation for fixed \texorpdfstring{$P$}{P}}
\label{sec:lawler-fixed-P}

We also need Lawler's dynamic program when the number of distinct processing
times is fixed.  Let
\[
\rho_1,\ldots,\rho_P
\]
be the distinct processing times. One of these values may be $0$.

In Lawler's pseudopolynomial implementation, the state stores the start time
$t$.  When $P$ is fixed, we do not need to store $t$ as an arbitrary integer.
Instead, we store how many previously scheduled jobs have each processing time.

For a vector $c=(c_1,\ldots,c_P)\in\{0,\ldots,n\}^P$, define $t(c)=\sum_{\ell=1}^P c_\ell\rho_\ell$.
Thus $c_\ell$ is the number of jobs of processing time $\rho_\ell$ that have
already been scheduled, and $t(c)$ is the corresponding start time.

The vector $c$ is not meant to encode the set of jobs already scheduled. It
only encodes the start time of the subproblem. This is sufficient because
Lawler's recurrence depends on the previously scheduled jobs only through the
start time.

For rational input data, all values $t(c)$ and all tardiness terms are rational
numbers of polynomial bit length for fixed $P$, and they can be compared and
added exactly. Thus, by counting the number of DP states below, we obtain a
polynomial-time algorithm for fixed $P$.

For a set $A\subseteq\Jobs$, define
\[
\chi(A)=(\chi_1(A),\ldots,\chi_P(A)),
\]
where $\chi_\ell(A)$ is the number of jobs in $A$ with processing time
$\rho_\ell$.  If $u$ is a job with processing time $\rho_\ell$, let $e(u)$ be
the vector with a $1$ in coordinate $\ell$ and $0$ in all other coordinates.

The compressed state is $H(a,b,q,c)$. It has the same meaning as Lawler's state
$F(a,b,q,t)$, except that the start time is $t(c)$ instead of an explicitly
stored integer $t$. The boundary conditions are
\[
H(a,b,q,c)=0
\quad\text{if }J(a,b,q)=\emptyset,
\]
and
\[
H(a,b,q,c)=(t(c)+p_j-\bar d_j)_+
\quad\text{if }J(a,b,q)=\{j\}.
\]

Assume now that $|J(a,b,q)|\ge 2$. Let $u$ be the $\prec_p$-maximum job in
$J(a,b,q)$, and set $s:=\pos_E(u)$. For each integer $\alpha$ with
$0\le \alpha\le b-s$, set $A_\alpha:=J(a,s+\alpha,u)$. In the corresponding
Lawler decomposition, the jobs in $A_\alpha$ are scheduled before $u$, and the
jobs in $J(s+\alpha+1,b,u)$ are scheduled after $u$.

If the subproblem starts at time $t(c)$, then $u$ completes at time
$t(c)+\sum_{h\in A_\alpha}p_h+p_u$. After scheduling $A_\alpha$ and then $u$,
the vector $c$ becomes $c+\chi(A_\alpha)+e(u)$.
Therefore the compressed recurrence is
\[
\begin{aligned}
H(a,b,q,c)
=
\min_{0\le \alpha\le b-s}
\Big\{&
H(a,s+\alpha,u,c)
+
\left(
t(c)+\sum_{h\in A_\alpha}p_h+p_u-\bar d_u
\right)_+
\\
&+
H\!\Bigl(
s+\alpha+1,b,u,
c+\chi(A_\alpha)+e(u)
\Bigr)
\Big\}.
\end{aligned}
\tag{Lawler-DP-P}
\label{eq:lawler-dp-fixed-P}
\]
The value of the total-tardiness instance is $H(1,n,q_0,0)$, where $0$ denotes
the all-zero vector.

We allow all vectors $c\in\{0,\ldots,n\}^P$ in the state space. States that are
not reachable from the initial state may be computed, but they do not affect the
value of $H(1,n,q_0,0)$. If a transition produces a vector outside this box, we
ignore that transition.

\begin{theorem}
\label{thm:TT-fixed-P}
If there are $P$ distinct processing times, then total tardiness on one machine
can be solved in time $O(Pn^{P+4})$. In particular, for fixed $P$, this is
$O(n^{P+4})$.
\end{theorem}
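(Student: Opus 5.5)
The plan is to prove correctness and running time of recurrence~\eqref{eq:lawler-dp-fixed-P} separately, reusing Lawler's correctness for~\eqref{eq:lawler-dp} (Theorem~\ref{thm:lawler-pseudopoly}).

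\emph{Correctness.} I will first show that $H(a,b,q,c)=F(a,b,q,t(c))$ for every state reachable from the initial state $(1,n,q_0,0)$. The argument is by induction on $|J(a,b,q)|$.

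The boundary cases agree by definition. For the inductive step, fix $\alpha$. The first subproblem starts at $t(c)$, so it matches $F(a,s+\alpha,u,t(c))$. The middle term equals $(C_u(\alpha)-\bar d_u)_+$, since $C_u(\alpha)=t(c)+\sum_{h\in A_\alpha}p_h+p_u$.

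For the third subproblem, the jobs of $A_\alpha$ and $u$ are disjoint from the jobs counted by $c$. Indeed, along any path from the root the counted jobs are exactly the jobs scheduled earlier, and these are distinct jobs. Hence $c+\chi(A_\alpha)+e(u)\le (n,\dots,n)$ coordinatewise, so no valid transition is discarded along reachable paths. Moreover, $t(c+\chi(A_\alpha)+e(u))=t(c)+\sum_{h\in A_\alpha}p_h+p_u=C_u(\alpha)$, so the third term matches $F(s+\alpha+1,b,u,C_u(\alpha))$.

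Taking the minimum over $\alpha$ reproduces~\eqref{eq:lawler-dp}. Hence $H(1,n,q_0,0)=F(1,n,q_0,0)$, which is the optimum by Lawler. Unreachable states may hold arbitrary values, but they are never consulted from reachable ones.

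\emph{Running time.} The number of states is $O(n^2)$ intervals, times $O(n)$ thresholds, times $(n+1)^P$ vectors, giving $O(n^{P+3})$. Each state scans at most $n$ values of $\alpha$.

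I will precompute, for every threshold $u$ and every prefix $[a,s+\alpha]$, the sum $\sum_{h\in A_\alpha}p_h$ and the vector $\chi(A_\alpha)$. Using cumulative arrays along $E$ restricted to $\{h: h\prec_p u\}$, this takes $O(Pn^3)$ time. A transition then costs $O(P)$: one vector addition, the box check, and indexing of the successor state (which can be encoded as a mixed-radix integer in $O(P)$). I also precompute $t(c)$ for all $c$ in $O(P(n+1)^P)$ time.

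The total is $O(n^{P+3}\cdot n\cdot P)=O(Pn^{P+4})$.

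\emph{Main obstacle.} The delicate point is justifying that compressing $t$ into $c$ loses nothing. This rests on two facts. First, Lawler's recurrence depends on the past only through the start time. Second, transitions that overflow the box never occur on reachable paths. I would argue the second fact by the disjointness invariant above: the set of jobs accounted for in $c$ together with $J(a,b,q)$ are distinct jobs. This invariant holds at the root and is preserved by both child transitions.
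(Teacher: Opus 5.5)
Your proposal is correct and follows essentially the same route as the paper. Both prove the invariant $H(a,b,q,c)=F(a,b,q,t(c))$, which reduces correctness to Lawler's recurrence, and both count $O(n^{P+3})$ states, each with at most $n$ transitions costing $O(P)$. Your induction on $|J(a,b,q)|$ plays the role of the paper's acyclicity argument, and your disjointness invariant makes explicit a point the paper only asserts: transitions leaving the box $\{0,\ldots,n\}^P$ never occur on reachable paths.
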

\begin{proof}
We use the same recurrence as Lawler, but we replace the scalar start time by
the vector $c$. The recurrence remains valid when some processing times are
zero; the fixed order $\prec_p$ makes the maximum job in each set $J(a,b,q)$
unique. The intended invariant is that $H(a,b,q,c)$ has the same value as
Lawler's state $F(a,b,q,t(c))$, where $t(c)=\sum_{\ell=1}^P c_\ell\rho_\ell$.
Each transition schedules the same block as in Lawler's recurrence and updates
$c$ by adding the processing-time count vector of that block and of the pivot
job. Hence the compressed recurrence is exactly Lawler's recurrence evaluated at
the represented start time.
The recurrence is over states $(a,b,q,c)$, where $c$ is the processing-time
count vector that represents the start time. Consider a state with
$|J(a,b,q)|\ge 2$, and let $u$ be the $\prec_p$-maximum job in $J(a,b,q)$.
For each integer $\alpha$ with $0\le \alpha\le b-s$, the two recursive calls are
\[
H(a,s+\alpha,u,c)
\quad\text{and}\quad
H(s+\alpha+1,b,u,c+\chi(A_\alpha)+e(u)).
\]
Since the third argument in both calls is $u$, the job set of each recursive
subproblem is defined using the condition $h\prec_p u$. Thus neither recursive
subproblem contains $u$. Consequently, the job set strictly decreases in every
recursive call. The vector $c$ may change in the second call, but it only
records the new start time; it does not change which jobs belong to the
subproblem. Therefore the recurrence has no cyclic dependence.

There are $O(n^2)$ choices of the interval $[a,b]$ and $O(n)$ choices of the
threshold job $q$.  Thus there are $O(n^3)$ choices of $(a,b,q)$.  Since
$c\in\{0,\ldots,n\}^P$, there are $O(n^P)$ possible vectors $c$.  Hence the
number of states $(a,b,q,c)$ is $O(n^{P+3})$.

For each state, the recurrence tries at most $n$ values of $\alpha$. The values
$\sum_{h\in A_\alpha}p_h$ and $\chi(A_\alpha)$ can be precomputed, or computed
from prefix counts for each possible threshold job. Updating and indexing the
vector
\[
c+\chi(A_\alpha)+e(u)
\]
takes $O(P)$ time in a uniform implementation.

Thus each state has $O(n)$ transitions, and each transition can be evaluated in
$O(P)$ time.  The total running time is
\[
O(n^{P+3}\cdot n\cdot P)=O(Pn^{P+4}).
\]
For fixed $P$, this is $O(n^{P+4})$.\qedhere

\end{proof}

\subsubsection{A dynamic program for fixed \texorpdfstring{$D$}{D}}
\label{sec:TT-fixed-D}

We now consider the shifted total-tardiness problem when the number of distinct
due dates is fixed. Total tardiness with a fixed number of due dates is known
to be polynomially solvable; see Tian, Ng, and Cheng~\cite{TianNC05}. We use a
direct dynamic program, since this is the form needed later for computing
$\Phi(r)$.

Let
\[
\delta_1<\cdots<\delta_D
\]
be the distinct due dates. For each $i\in[D]$, let
\[
J_i=\{j\in\Jobs:d_j=\delta_i\},
\qquad
n_i=|J_i|.
\]
Sort the jobs in $J_i$ according to the strict SPT order $\prec_p$. From now on
in this subsection, we regard $J_i$ as this ordered list, and write $J_i[h]$ for
its $h$-th job.

\begin{lemma}
\label{lem:spt-within-due-date}
Fix a shift $r$. In the total-tardiness instance with due dates $d_j+r$, there
is an optimal schedule in which, for every $i\in[D]$, the jobs of $J_i$ appear
in the order
\[
J_i[1],J_i[2],\ldots,J_i[n_i].
\]
\end{lemma}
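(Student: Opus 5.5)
Since the shift $r$ is fixed, we work with the total-tardiness instance with due dates $\bar d_j=d_j+r$. All jobs in $J_i$ share the common due date $\bar\delta_i=\delta_i+r$. The plan is a pairwise exchange argument. Start from an arbitrary optimal schedule $S$ and repeatedly repair ``inversions'' inside a due-date class while never increasing total tardiness. An inversion is a pair $x,y\in J_i$ with $y\prec_p x$ but $x$ scheduled before $y$.

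The key step is a swap lemma. Suppose $x,y\in J_i$, $x$ is before $y$ in $S$, and $y\prec_p x$, so $p_y\le p_x$. Let $S'$ be obtained by exchanging the positions of $x$ and $y$, keeping every other job in its position. Let $t$ be the start time of $x$ in $S$, and let $W$ be the total processing time of the jobs strictly between them.

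First consider the jobs strictly between $x$ and $y$. In $S'$ each of them is shifted by $p_y-p_x\le 0$, so its tardiness does not increase. Jobs before $x$ or after $y$ keep their completion times, since the total processing time of the segment is unchanged.

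Now consider the two swapped jobs, which share the due date $\delta:=\bar\delta_i$. In $S$ they complete at $t+p_x$ and $t+p_x+W+p_y$. In $S'$ they complete at $t+p_y$ and $t+p_y+W+p_x$. The latest completion, $C:=t+p_x+p_y+W$, is the same in both schedules. The earlier completion can only decrease, from $t+p_x$ to $t+p_y$. Since $z\mapsto(z-\delta)_+$ is nondecreasing, the sum of the two tardiness terms satisfies
\[
(t+p_y-\delta)_+ + (C-\delta)_+ \;\le\; (t+p_x-\delta)_+ + (C-\delta)_+ .
\]
Hence the total tardiness of $S'$ is at most that of $S$, and $S'$ is again optimal.

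To conclude, apply the swap to an inversion, and use a potential argument for termination. Each swap of an inverted pair $x,y$ in the same class makes the pair non-inverted. Because all jobs between them move only relative to $x$ and $y$, one checks as for adjacent transpositions that the number of inverted pairs within classes strictly decreases. To avoid that check, the cleaner route is to pick the inversion to be adjacent within the class, meaning no other job of $J_i$ lies between $x$ and $y$. Swapping such a pair changes the relative order of only that pair among jobs of $J_i$, so the within-class inversion count drops by exactly one. Such an adjacent inversion exists whenever the class is not sorted.

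After finitely many swaps we reach an optimal schedule in which every $J_i$ appears as $J_i[1],\ldots,J_i[n_i]$. Ties $p_x=p_y$ are covered by the weak inequality, since $\prec_p$ orders them by $\prec$. Zero processing times are handled the same way.

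The main obstacle, though mild, is the behaviour of the intermediate jobs and the termination measure. Restricting to class-adjacent inversions handles both.
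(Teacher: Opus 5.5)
Your proposal is correct and follows essentially the same exchange argument as the paper: swap an out-of-order pair from the same due-date class, note that the later of the two completion times is unchanged while the earlier one and all intermediate completion times can only decrease, and repeat. Restricting the swap to class-adjacent inversions is a small refinement that makes the termination step explicit, which the paper leaves implicit in ``by repeating this process.''
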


\begin{proof}
Start with an optimal schedule $S$. Suppose that, for some $i\in[D]$, two jobs
$a,b\in J_i$ appear in the wrong order: $a$ is scheduled before $b$, but
$b\prec_p a$. Then
\[
d_a=d_b=\delta_i
\qquad\text{and}\qquad
p_b\le p_a.
\]
The jobs $a$ and $b$ need not be consecutive.

Let $H$ be the block of jobs scheduled strictly between $a$ and $b$. Form a new
schedule $S'$ by exchanging the positions of $a$ and $b$, while keeping the
jobs of $H$ in the same relative order.

Compare completion times. Job $b$ in $S'$ completes no later than job $a$ did
in $S$, because $p_b\le p_a$. Job $a$ in $S'$ completes at exactly the same time
at which job $b$ completed in $S$, because the total processing time of
$\{a\}\cup H\cup \{b\}$ is unchanged. Every job in $H$ completes no later in $S'$
than in $S$, since job $b$ (with $p_b\le p_a$) is moved before $H$. All other
jobs have the same completion time in both schedules.

Since $(\cdot)^+$ is nondecreasing in the completion time $C$, and $a$ and $b$ have the same
shifted due date $\delta_i+r$, the total shifted tardiness of $a$ and $b$ does
not increase. The shifted tardiness of every job in $H$ also does not increase,
and all other contributions are unchanged. Hence $S'$ is also optimal.

By repeating this process, we eliminate all inversions inside each list $J_i$ and obtain an optimal schedule in which every $J_i$ is sorted.
\end{proof}

By Lemma~\ref{lem:spt-within-due-date}, it is enough to consider schedules
obtained by merging the $D$ ordered lists $J_1,\ldots,J_D$. Equivalently, a
schedule is built by repeatedly choosing one list and taking the next
unscheduled job from that list. For $h=0,\ldots,n_i$, let
$Q_i(h)=\sum_{\ell=1}^h p_{J_i[\ell]}$, with $Q_i(0)=0$. Thus $Q_i(h)$ is the
total processing time of the first $h$ jobs of list $J_i$.

For a vector $a=(a_1,\ldots,a_D)$ with $0\le a_i\le n_i$, define
$Q(a)=\sum_{i=1}^D Q_i(a_i)$. Thus $Q(a)$ is the total processing time of the
partial schedule containing the first $a_i$ jobs of each list $J_i$.

Fix a shift $r$. Let $G_r(a)$ be the minimum shifted total tardiness among all
merges of the first $a_i$ jobs of each list $J_i$. The boundary condition is
$G_r(0,\ldots,0)=0$.

Now consider a nonzero vector $a$. If the last job is taken from list $J_i$,
then $a_i>0$, and the previous vector is $a-e_i$. The last job has completion
time $Q(a)$ and shifted due date $\delta_i+r$. Therefore
\[
G_r(a)
=
\min_{i:a_i>0}
\left\{
G_r(a-e_i)+(Q(a)-\delta_i-r)_+
\right\},
\tag{DP-D}
\label{eq:DP-D}
\]
where $e_i$ is the $i$-th unit vector.

The value of the shifted total-tardiness instance is $\Phi(r)=G_r(n_1,\ldots,n_D)$.

\begin{lemma}
\label{lem:TT-fixed-D}
After the lists $J_1,\ldots,J_D$ have been sorted, a shifted total-tardiness
instance with $D$ distinct due dates can be solved in time
$O\left(D\prod_{i=1}^D(n_i+1)\right)\le O(D(n+1)^D)$. 
\end{lemma}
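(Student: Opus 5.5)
The plan is to argue correctness and running time of recurrence~\eqref{eq:DP-D} separately.

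\emph{Correctness.} By Lemma~\ref{lem:spt-within-due-date}, some optimal schedule for the shifted instance merges the sorted lists $J_1,\ldots,J_D$, so $\Phi(r)$ is the minimum over all such merges. I will prove by induction on $\sum_i a_i$ that $G_r(a)$, computed by~\eqref{eq:DP-D}, equals the minimum shifted tardiness over merges of the prefixes $J_i[1..a_i]$.

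In any such merge the last job is $J_i[a_i]$ for some $i$ with $a_i>0$. It completes at $Q(a)$, because the schedule has no idle time and its job set is exactly the union of the prefixes. Its shifted due date is $\delta_i+r$. The remaining jobs form a merge of the prefixes indexed by $a-e_i$, and their completion times do not depend on which job comes last. Hence the cost of the merge splits as the cost of a merge for $a-e_i$ plus $(Q(a)-\delta_i-r)_+$. Minimizing over $i$ and over merges for $a-e_i$ gives exactly the recurrence. Conversely, every choice in the recurrence corresponds to an actual merge, so the value is attained.

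\emph{Running time.} The states are the vectors $a$ with $0\le a_i\le n_i$, so there are $\prod_{i=1}^D(n_i+1)$ of them. First precompute the prefix sums $Q_i(h)$ in $O(n)$ total time from the sorted lists.

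The point I expect to need the most care is evaluating $Q(a)$ in $O(1)$ per state rather than $O(D)$. Two ways handle it:
\begin{itemize}
\item Process the states in an order, for example lexicographic, where each $a\ne 0$ is reached from some predecessor $a-e_i$ with $a_i>0$. Then $Q(a)=Q(a-e_i)+p_{J_i[a_i]}$, which costs $O(1)$ if the $Q$ values are stored alongside $G_r$.
\item Simply compute $Q(a)$ in $O(D)$ time. This cost is absorbed anyway, because the minimization itself already takes $D$ terms.
\end{itemize}
Either way each state costs $O(D)$: there are at most $D$ candidate predecessors, each evaluated in $O(1)$. Indexing the vector $a$ via a mixed-radix encoding makes lookups of $G_r(a-e_i)$ take $O(1)$ time.

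The total running time is therefore $O\bigl(D\prod_i(n_i+1)\bigr)$. Finally, $\prod_i(n_i+1)\le (n+1)^D$ because each $n_i\le n$, which gives the bound $O(D(n+1)^D)$.
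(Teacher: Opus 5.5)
Your proposal is correct and follows essentially the same route as the paper. Both use induction on $a_1+\cdots+a_D$, relying on the fact that every merge of the prefixes indexed by $a$ has total processing time $Q(a)$ and ends with a job from some list $J_i$ with $a_i>0$. Both then count $\prod_{i=1}^D(n_i+1)$ states with at most $D$ constant-time transitions each. Your two options for obtaining $Q(a)$ and your mixed-radix indexing spell out the preprocessing that the paper states in a single sentence.
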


\begin{proof}
Lemma~\ref{lem:spt-within-due-date} shows that it is enough to optimize over
merges of the ordered lists $J_1,\ldots,J_D$.

For a vector $a$, every merge of the first $a_i$ jobs of each list has the same
total processing time, namely $Q(a)$. We prove by induction on $a_1+\cdots+a_D$
that $G_r(a)$ is the minimum shifted total tardiness among all such merges.

The claim is clear for $a=(0,\ldots,0)$. Let $a\ne 0$. In any merge counted by
$a$, the last job must come from some list $J_i$ with $a_i>0$. If this last job
comes from $J_i$, then the previous vector is $a-e_i$. The completion time of
the last job is $Q(a)$, and its shifted due date is $\delta_i+r$. Hence its
shifted tardiness is $(Q(a)-\delta_i-r)_+$.

By the induction hypothesis, the best value for the previous part is
$G_r(a-e_i)$. Minimizing over all possible choices of the last list gives
recurrence~\eqref{eq:DP-D}.

There are $\prod_{i=1}^D(n_i+1)$ vectors $a$. For each vector, the recurrence
checks at most $D$ choices of $i$. The values $Q_i(h)$ are precomputed. We store
the DP table in a mixed-radix array indexed by the vector $a=(a_1,\ldots,a_D)$;
within the same asymptotic bound we compute all values $Q(a)$ and the array
indices of all valid predecessors $a-e_i$. Hence each transition is evaluated
in constant time after this preprocessing.
Therefore the running time is $O\left(D\prod_{i=1}^D(n_i+1)\right)$. Since
$n_i\le n$ for every $i$, this is at most $O(D(n+1)^D)$.
\end{proof}

\subsection{Algorithmic consequences}
\label{sec:dual-consequences}

Recall from Section~\ref{sec:tt-subroutines} that, for a real number $r$,
\[
\Phi(r)
=
\min_S \sum_{j\in\Jobs}(C_j(S)-d_j-r)_+,
\]
and that the dual formulation gives
\[
\OPT
=
\min_{r\in\mathbb{R}}\{kr+\Phi(r)\}.
\]
This is exactly the identity in~\eqref{eq:shift-tt}. We use it to obtain four algorithmic
consequences of the shifted formulation for $k$-sum lateness. The total-tardiness
subroutines for computing $\Phi(r)$ were given in Section~\ref{sec:tt-subroutines}; here we
only specify which values of $r$ must be tried.

\paragraph{Candidate shifts (breakpoints).}

For a fixed schedule $S$, define
\[
f_S(r)
=
kr+\sum_{j\in\Jobs}(C_j(S)-d_j-r)_+ .
\]
This is the value obtained by using schedule $S$ in the shifted
total-tardiness formulation.

\begin{lemma}
\label{lem:breakpoint-reduction}
The minimum in~\eqref{eq:shift-tt} is attained at a value
\[
r=C_j(S)-d_j
\]
for some job $j$ and some schedule $S$.
\end{lemma}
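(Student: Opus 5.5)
Since there are only finitely many schedules, I will swap the order of the two minimizations and argue one schedule at a time. We have
\[
\min_{r\in\mathbb{R}}\{kr+\Phi(r)\}=\min_{r}\min_S f_S(r)=\min_S\min_{r} f_S(r).
\]
The minimum over the finitely many permutations exists. So the statement reduces to a claim about a single schedule: for each fixed $S$, $\min_r f_S(r)$ is finite and attained at some breakpoint $r=C_j(S)-d_j$. Once this holds, pick a schedule $S^*$ minimizing $\min_r f_S(r)$, and pick a breakpoint $r^*=C_j(S^*)-d_j$ at which $f_{S^*}$ attains its minimum. Then
\[
kr^*+\Phi(r^*)\le f_{S^*}(r^*)=\OPT\le kr^*+\Phi(r^*),
\]
so the minimum in~\eqref{eq:shift-tt} is attained at $r^*$.

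For the single-schedule claim, I will show that $f_S$ is convex and piecewise linear, with breakpoints only at the values $\ell_j:=C_j(S)-d_j$. Each term $(\ell_j-r)_+$ is convex and piecewise linear, with its only kink at $\ell_j$, and $kr$ is linear. To see that the minimum is finite and attained, I will look at the slope of $f_S$ on each open interval between consecutive breakpoints. This slope is $k-|\{j:\ell_j>r\}|$.

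\begin{itemize}
\item For $r<\min_j \ell_j$, the slope is $k-n\le 0$.
\item For $r>\max_j\ell_j$, the slope is $k\ge 1>0$.
\end{itemize}
Hence $f_S$ tends to $+\infty$ as $r\to+\infty$ and is nonincreasing as $r\to-\infty$. The boundary case $k=n$ gives slope $0$ on the left ray; there $f_S$ is constant and equals its value at $\min_j\ell_j$, which is itself a breakpoint. A convex piecewise-linear function with these end slopes attains its minimum, and some minimizer is a breakpoint. Concretely, I will take the smallest breakpoint $\ell_{(t)}$ at which the right slope becomes positive, equivalently the point where the count of $\ell_j>r$ drops to at most $k$. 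This identifies the minimizer as the $k$-th largest lateness $L_{(k)}(S)$, which is consistent with~\eqref{eq:dual}.

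I expect the main obstacle to be bookkeeping rather than a real difficulty. The points needing care are the $k=n$ case where the left slope is zero, ties among the $\ell_j$, and making the interchange of minima explicit. One could try to argue directly that $\Phi$ is piecewise linear with breakpoints among the $C_j(S)-d_j$, but $\Phi$ is a minimum over schedules, so that route is messier. Going schedule by schedule avoids it. If needed, I will also verify directly that $f_S(L_{(k)}(S))=\lambda_k(S)$, which gives an independent check of the minimizer.
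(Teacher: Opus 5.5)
Your proposal is correct and is essentially the paper's argument: finiteness of the set of schedules lets you work with a single $f_S$, the end slopes $k-n\le 0$ and $k>0$ (with the $k=n$ left ray handled at the leftmost breakpoint) give a breakpoint minimizer, and the chain $kr^*+\Phi(r^*)\le f_{S^*}(r^*)=\OPT\le kr^*+\Phi(r^*)$ finishes the proof exactly as in the paper. The convexity remark and the identification of the minimizer as $L_{(k)}(S)$ are harmless extras. One small slip: ``right slope positive'' means $|\{j:\ell_j>r\}|\le k-1$, not $\le k$, but both criteria pick out a minimizer, so nothing breaks.
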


\begin{proof}
Such a minimizing pair exists. Indeed, there are finitely many schedules. For
each fixed schedule $S$, the function $f_S(r)$ is piecewise linear, has slope
$k>0$ for all sufficiently large $r$, and has slope $k-n\le 0$ for all
sufficiently small $r$. If $k<n$, the function tends to $+\infty$ at both
ends. If $k=n$, it is constant on the leftmost unbounded interval, and the same
value is attained at the leftmost breakpoint. Hence the minimum over all
schedules and all real $r$ is attained.

Choose a pair $(S^\star,r^\star)$ minimizing $f_S(r)$ over all schedules $S$
and all real numbers $r$. This is the same as minimizing the right-hand side
of~\eqref{eq:shift-tt}.

Fix the schedule $S^\star$. The function $f_{S^\star}$ is piecewise linear,
and its breakpoints are among the values $C_j(S^\star)-d_j$ for $j\in\Jobs$.
On an interval with no breakpoint, the function is affine. If the slope is
nonzero, an interior point of the interval is not needed for the minimum. If
the slope is zero, the same value is attained at an endpoint. The unbounded
intervals cause no problem. For large $r$ the slope is $k>0$. For small $r$
the slope is $k-n\le 0$; if $k=n$, the function is constant on the leftmost
unbounded interval, so the same value is attained at the leftmost breakpoint.

Therefore there is a breakpoint $r'$ of $f_{S^\star}$ with
$f_{S^\star}(r')=f_{S^\star}(r^\star)$. Since
$kr'+\Phi(r')\le f_{S^\star}(r')$,
and $(S^\star,r^\star)$ was globally optimal, equality must hold. Hence the
minimum in~\eqref{eq:shift-tt} is attained at this breakpoint.
\end{proof}

\subsubsection{A pseudopolynomial algorithm for \texorpdfstring{$k$}{k}-sum lateness}
\label{sec:pseudo}

Assume in this subsection that all processing times and due dates are integral.
Let $P_\Sigma=\sum_{j\in\Jobs}p_j$. For every schedule, every completion time
belongs to $\{0,\ldots,P_\Sigma\}$. Hence every breakpoint from
Lemma~\ref{lem:breakpoint-reduction} belongs to
\[
R=\{t-d_j:j\in\Jobs,\ t\in\{0,\ldots,P_\Sigma\}\}.
\]
Thus it is enough to try the shifts in $R$.

\begin{algorithm}[htbp]
\caption{Pseudopolynomial algorithm for integral $k$-sum lateness}
\label{alg:pseudopoly}
\DontPrintSemicolon
$P_\Sigma\leftarrow\sum_{j\in\Jobs}p_j$\;
$R\leftarrow\{t-d_j:j\in\Jobs,\ t\in\{0,\ldots,P_\Sigma\}\}$\;
$v_{\mathrm{best}}\leftarrow+\infty$; $S_{\mathrm{best}}\leftarrow\textsc{null}$\;
\ForEach{$r\in R$}{
  Solve the shifted total-tardiness instance with due dates $d_j+r$ using Theorem~\ref{thm:lawler-pseudopoly}\;
  Let $S_r$ be the returned schedule, and let $\Phi(r)$ be its value\;
  \If{$kr+\Phi(r)<v_{\mathrm{best}}$}{
    $v_{\mathrm{best}}\leftarrow kr+\Phi(r)$\;
    $S_{\mathrm{best}}\leftarrow S_r$\;
  }
}
\Return $S_{\mathrm{best}}$\;
\end{algorithm}

\begin{theorem}
\label{thm:pseudopoly}
For integral data, $k$-sum lateness admits a pseudopolynomial algorithm. A
direct implementation of Algorithm~\ref{alg:pseudopoly} runs in time
\[
O(n^5(P_\Sigma+1)^2).
\]
\end{theorem}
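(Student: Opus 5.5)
The proof has two parts: correctness of Algorithm~\ref{alg:pseudopoly}, and a count of its running time.

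\emph{Correctness.} By Theorem~\ref{thm:topk-dual-reduction}, $\OPT=\min_{r\in\mathbb{R}}\{kr+\Phi(r)\}$. By Lemma~\ref{lem:breakpoint-reduction}, this minimum is attained at some $r=C_j(S)-d_j$. For integral data, every completion time lies in $\{0,\ldots,P_\Sigma\}$, so such an $r$ belongs to $R$. Hence $\min_{r\in R}\{kr+\Phi(r)\}=\OPT$.

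We also need the returned schedule $S_{\mathrm{best}}$ to be optimal, not just the value. Let $r^\ast$ be the minimizing shift found and $S_{r^\ast}$ its schedule. By the per-schedule identity~\eqref{eq:dual},
\[
\lambda_k(S_{r^\ast})\le kr^\ast+\sum_j (C_j(S_{r^\ast})-d_j-r^\ast)_+ = kr^\ast+\Phi(r^\ast)=\OPT .
\]
So $S_{r^\ast}$ is optimal. This assumes Lawler's DP returns an optimal schedule as well as a value, which standard backtracking over the recurrence gives at no extra asymptotic cost.

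\emph{Running time.} The set $R$ has at most $n(P_\Sigma+1)$ elements, and it can be built and deduplicated within the overall bound. The shifted instance has due dates $d_j+r$ but the same integral processing times. Theorem~\ref{thm:lawler-pseudopoly} therefore solves it in $O(n^4(P_\Sigma+1))$ time, a bound independent of the due dates. Computing $kr+\Phi(r)$ and updating the best value take constant time per shift. Multiplying gives
\[
O\bigl(n(P_\Sigma+1)\cdot n^4(P_\Sigma+1)\bigr)=O\bigl(n^5(P_\Sigma+1)^2\bigr).
\]

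\emph{Main subtlety.} The only point needing care is that Theorem~\ref{thm:lawler-pseudopoly} still applies when the due dates $d_j+r$ are arbitrary integers, possibly negative. It does: Lawler's decomposition (and its state count) depends only on the EDD order and on the integrality of start times. The EDD order is unchanged by a uniform shift, and the start times stay integral.
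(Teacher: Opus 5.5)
Your proposal is correct and follows the paper's proof. You restrict the shift to $R$ via Lemma~\ref{lem:breakpoint-reduction} and the integrality of completion times, then multiply $|R|\le n(P_\Sigma+1)$ by Lawler's $O(n^4(P_\Sigma+1))$ bound per shift. Your extra step, $\lambda_k(S_{r^\ast})\le kr^\ast+\Phi(r^\ast)=\OPT$ via~\eqref{eq:dual}, makes explicit why the returned schedule, and not just the value, is optimal; the paper only asserts this.
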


\begin{proof}
By Lemma~\ref{lem:breakpoint-reduction} and the definition of $R$, the minimum
in~\eqref{eq:shift-tt} is attained at some shift $r\in R$. The set $R$ has size
at most $n(P_\Sigma+1)$.

For each $r\in R$, Theorem~\ref{thm:lawler-pseudopoly} computes $\Phi(r)$ in
time $O(n^4(P_\Sigma+1))$. The algorithm returns the schedule that minimizes
$kr+\Phi(r)$ over all $r\in R$. Hence it is optimal by~\eqref{eq:shift-tt}.
The running time is $O(n(P_\Sigma+1)\cdot n^4(P_\Sigma+1))=O(n^5(P_\Sigma+1)^2)$.\qedhere
\end{proof}

\subsubsection{A fully polynomial additive approximation scheme}
\label{sec:additive-approx}

Since lateness values may be negative, a multiplicative approximation guarantee
is not meaningful in general, so we use an additive one. For integral data, by
a fully polynomial additive approximation scheme we mean an algorithm that, for
every $\varepsilon\in(0,1]$, runs in time polynomial in $n$, $k$, and
$1/\varepsilon$, and returns a schedule $S$ such that
\[
\lambda_k(S)\le \OPT+\varepsilon M,
\qquad
M=\max\{1,\max_j p_j,\max_j |d_j|\}.
\]

Fix $\varepsilon\in(0,1]$ and set
\[
\Delta=\frac{\varepsilon M}{4k(n+1)}.
\]
For each job $j$, choose multiples $\tilde p_j$ and $\tilde d_j$ of $\Delta$
with $\tilde p_j\ge 0$ such that
\[
|p_j-\tilde p_j|\le \Delta,
\qquad
|d_j-\tilde d_j|\le \Delta.
\]
For example, one may round $p_j$ down to the nearest nonnegative multiple of
$\Delta$ and round $d_j$ to the nearest multiple of $\Delta$. Scale the rounded
instance by $1/\Delta$. The scaled processing times and due dates are integers.
Moreover,
\[
\sum_j \frac{\tilde p_j}{\Delta}
\le
\frac{\sum_j p_j}{\Delta}+n
=
O\left(\frac{kn^2}{\varepsilon}\right).
\]
We solve this scaled instance exactly using Theorem~\ref{thm:pseudopoly} and
return the resulting schedule for the original instance. Let $\tilde\lambda_k$
denote the objective value in the rounded instance.

\begin{lemma}
\label{lem:rounding-error}
For every schedule $S$,
\[
|\lambda_k(S)-\tilde\lambda_k(S)|
\le
k(n+1)\Delta.
\]
\end{lemma}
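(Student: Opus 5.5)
We bound the error one job at a time and then pass from individual lateness values to the top-$k$ sum. Fix a schedule $S$; the same permutation is used in both instances. For a job $j$, its completion time is a sum of at most $n$ processing times, namely those of the jobs at positions up to $\pos_S(j)$. Each of these terms is perturbed by at most $\Delta$, so
\[
|C_j(S)-\tilde C_j(S)|\le \sum_{\ell:\,\pos_S(\ell)\le\pos_S(j)}|p_\ell-\tilde p_\ell|\le n\Delta .
\]
The due date moves by at most $\Delta$. Therefore $|L_j(S)-\tilde L_j(S)|\le (n+1)\Delta$ for every job $j$.

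Next we transfer this coordinatewise bound to the top-$k$ sum. We use the representation~\eqref{eq:topk-as-max}: $\lambda_k(S)=\max_{X:|X|=k}\sum_{x\in X}L_x(S)$, and likewise $\tilde\lambda_k(S)=\max_{X:|X|=k}\sum_{x\in X}\tilde L_x(S)$. Both maxima run over the same family of $k$-subsets. For each fixed $X$, the two sums differ by at most $k(n+1)\Delta$, by the per-job bound. A maximum of functions that pointwise differ by at most $\eta$ also differs by at most $\eta$. Concretely, let $X$ be a top-$k$ set for $S$ in the original instance. Then
\[
\tilde\lambda_k(S)\ge\sum_{x\in X}\tilde L_x(S)\ge \lambda_k(S)-k(n+1)\Delta ,
\]
and the symmetric argument gives the other inequality. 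This yields $|\lambda_k(S)-\tilde\lambda_k(S)|\le k(n+1)\Delta$.

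There is no real obstacle. The one point to get right is that the top-$k$ sets may differ between the two instances, and the max-over-subsets (variational) form handles this cleanly instead of arguing with sorted orders. It also matters that the rounding step keeps $\tilde p_j\ge0$, so that $\tilde\lambda_k(S)$ is the objective of a valid instance. The rounding bound itself is used only through $|p_j-\tilde p_j|\le\Delta$.
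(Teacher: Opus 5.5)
Your proof is correct and follows the paper's argument. The paper uses the same per-job bound $|L_j(S)-\tilde L_j(S)|\le (n+1)\Delta$ and then the fact that a coordinatewise perturbation of size $\eta$ changes the top-$k$ sum by at most $k\eta$. The only difference is that the paper merely asserts this last fact, while you prove it via the max-over-$k$-subsets representation of $\lambda_k$.
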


\begin{proof}
Fix a schedule $S$. Each completion time changes by at most $n\Delta$, because
it is the sum of at most $n$ processing times. Each due date changes by at most
$\Delta$. Hence each lateness value changes by at most $(n+1)\Delta$.

If two vectors differ coordinatewise by at most $\eta$, then the sums of their
$k$ largest entries differ by at most $k\eta$. Applying this with
$\eta=(n+1)\Delta$ gives the claim.
\end{proof}

\begin{theorem}
\label{thm:additive-approx}
For integral data, $k$-sum lateness admits a fully polynomial additive
approximation scheme. Given $\varepsilon\in(0,1]$, it returns a schedule of
value at most
\[
\OPT+\varepsilon M,
\qquad
M=\max\{1,\max_j p_j,\max_j |d_j|\}.
\]
Its running time is polynomial in $n$, $k$, and $1/\varepsilon$.
\end{theorem}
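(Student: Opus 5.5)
The plan is the standard round-and-solve argument, built from Lemma~\ref{lem:rounding-error} and Theorem~\ref{thm:pseudopoly}. Let $S^\ast$ be an optimal schedule for the original instance, and let $\tilde S$ be the schedule returned by the exact algorithm on the scaled rounded instance. Scaling all processing times and due dates by $1/\Delta$ multiplies every lateness, and hence $\tilde\lambda_k$, by the same positive factor. Therefore $\tilde S$ minimizes $\tilde\lambda_k$ over all schedules, and in particular $\tilde\lambda_k(\tilde S)\le\tilde\lambda_k(S^\ast)$.

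Applying Lemma~\ref{lem:rounding-error} twice gives
\[
\lambda_k(\tilde S)\le \tilde\lambda_k(\tilde S)+k(n+1)\Delta\le \tilde\lambda_k(S^\ast)+k(n+1)\Delta\le \OPT+2k(n+1)\Delta .
\]
With $\Delta=\varepsilon M/(4k(n+1))$ the right-hand side equals $\OPT+\varepsilon M/2\le\OPT+\varepsilon M$. This proves the approximation guarantee.

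One point needs care before invoking Theorem~\ref{thm:pseudopoly}: the scaled data must be integral. Rounding $p_j$ down to a nonnegative multiple of $\Delta$ gives $\tilde p_j\ge0$ with $|p_j-\tilde p_j|\le\Delta$, since $p_j\ge0$. Rounding $d_j$ to the nearest multiple of $\Delta$ gives $|d_j-\tilde d_j|\le\Delta/2$. So the hypotheses of Lemma~\ref{lem:rounding-error} hold, and $\tilde p_j/\Delta$ and $\tilde d_j/\Delta$ are integers.

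For the running time, write $\tilde P_\Sigma=\sum_j\tilde p_j/\Delta$. Since $p_j\le M$ for all $j$,
\[
\tilde P_\Sigma\le \frac{nM}{\Delta}=\frac{4kn(n+1)}{\varepsilon}=O\!\left(\frac{kn^2}{\varepsilon}\right).
\]
Theorem~\ref{thm:pseudopoly} then runs in time $O(n^5(\tilde P_\Sigma+1)^2)=O(k^2n^9/\varepsilon^2)$, which is polynomial in $n$, $k$ and $1/\varepsilon$.

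The main obstacle is that this running-time bound depends only on $\tilde P_\Sigma$, not on the magnitude of the scaled due dates. I will justify this from the algorithm itself. The candidate set $R$ has size at most $n(\tilde P_\Sigma+1)$, independent of the due dates. Lawler's dynamic program (Theorem~\ref{thm:lawler-pseudopoly}) is also pseudopolynomial only in $\tilde P_\Sigma$. Arithmetic on numbers such as $\tilde d_j/\Delta$, which may be about $4k(n+1)/\varepsilon$ in size, involves polynomially many bits.

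The remaining preprocessing is computing $M$ and $\Delta$, and rounding each value, which takes $O(n)$ operations on rationals of polynomial bit length. Finally, the returned permutation $\tilde S$ is interpreted directly as a schedule for the original instance.
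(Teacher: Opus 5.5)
Your proposal is correct and follows the paper's proof essentially verbatim. You round to multiples of $\Delta=\varepsilon M/(4k(n+1))$, solve the scaled integral instance exactly with Theorem~\ref{thm:pseudopoly}, chain Lemma~\ref{lem:rounding-error} twice to get error $2k(n+1)\Delta\le\varepsilon M$, and bound the running time through $\tilde P_\Sigma=O(kn^2/\varepsilon)$. Your extra remarks are correct refinements of steps the paper leaves implicit: that scaling by $1/\Delta$ preserves optimal schedules, the direct bound $\tilde P_\Sigma\le nM/\Delta$ from rounding down, and the explicit $O(k^2n^9/\varepsilon^2)$ time.
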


\begin{proof}
Let $S^\star$ be optimal for the original instance. Let $\widetilde S$ be
optimal for the rounded instance. By Lemma~\ref{lem:rounding-error},
\[
\lambda_k(\widetilde S)
\le
\tilde\lambda_k(\widetilde S)+k(n+1)\Delta
\le
\tilde\lambda_k(S^\star)+k(n+1)\Delta
\le
\lambda_k(S^\star)+2k(n+1)\Delta.
\]
Since $2k(n+1)\Delta\le \varepsilon M$, we get
$\lambda_k(\widetilde S)\le \OPT+\varepsilon M$.

It remains to bound the running time. The scaled rounded instance has integral
data and total processing time
\[
\sum_j \frac{\tilde p_j}{\Delta}
=O\left(\frac{kn^2}{\varepsilon}\right).
\]
By Theorem~\ref{thm:pseudopoly}, the exact algorithm on the scaled instance
therefore runs in time polynomial in $n$, $k$, and $1/\varepsilon$. The numbers
used by the rounded instance have polynomial bit length under exact rational
arithmetic. This proves the claimed running time.
\end{proof}

\subsubsection{An XP algorithm for fixed \texorpdfstring{$P$}{P}}
\label{sec:xp-p}

Assume that there are $P$ distinct processing times,
$\rho_1<\cdots<\rho_P$. We build a finite set of shifts that contains all
breakpoints needed in
Lemma~\ref{lem:breakpoint-reduction}.

Define
\[
\widehat{\mathcal L}_P
=
\left\{
\sum_{\ell=1}^P a_\ell\rho_\ell+p_j-d_j:
j\in\Jobs,\ a\in\{0,\ldots,n\}^P
\right\}.
\]

\begin{lemma}
\label{lem:fixed-P-candidate-shifts}
Every breakpoint from Lemma~\ref{lem:breakpoint-reduction} belongs to
$\widehat{\mathcal L}_P$. Moreover,
\[
|\widehat{\mathcal L}_P|=O(n^{P+1}).
\]
\end{lemma}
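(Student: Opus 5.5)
The plan is to verify membership directly from the form of a breakpoint, and then bound the size of the set by counting parameter choices.

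\emph{Membership.} By Lemma~\ref{lem:breakpoint-reduction}, every breakpoint has the form $r=C_j(S)-d_j$ for some schedule $S$ and some job $j$. Write $C_j(S)=p_j+\sum_{\ell\in B}p_\ell$, where $B$ is the set of jobs scheduled strictly before $j$ in $S$. Group the jobs of $B$ by processing time: for each $\ell\in[P]$, let $a_\ell$ be the number of jobs in $B$ with processing time $\rho_\ell$.

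Then $\sum_{\ell\in B}p_\ell=\sum_{\ell=1}^P a_\ell\rho_\ell$. Since $|B|\le n-1$, each $a_\ell$ lies in $\{0,\ldots,n\}$, so $a\in\{0,\ldots,n\}^P$. Hence $r=\sum_{\ell=1}^P a_\ell\rho_\ell+p_j-d_j$, which is of the form required in the definition of $\widehat{\mathcal L}_P$.

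Zero processing times cause no problem: if some $\rho_\ell=0$, the corresponding term simply contributes nothing. The one point needing care is that the grouping of $B$ is by the value $p_\ell$ rather than by job identity. This is exactly why $a$ records counts per distinct processing time.

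\emph{Cardinality.} Every element of $\widehat{\mathcal L}_P$ is determined by a pair $(j,a)$ with $j\in\Jobs$ and $a\in\{0,\ldots,n\}^P$. There are $n$ choices of $j$ and $(n+1)^P$ choices of $a$, so $|\widehat{\mathcal L}_P|\le n(n+1)^P$.

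For fixed $P$ this is $O(n^{P+1})$; in general it is $n(n+1)^P$, with the constant depending on $P$ through $(1+1/n)^P$. If a bound uniform in $P$ is wanted, one can restrict to $\sum_\ell a_\ell\le n-1$. That restriction is still valid by the membership argument, and it yields at most $n\binom{n-1+P}{P}$ elements.

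Neither step is a real obstacle; the argument is a direct unpacking of the definitions.
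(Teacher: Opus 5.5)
Your proof is correct and matches the paper's argument: you write the start time of $j$ as $\sum_{\ell=1}^P a_\ell\rho_\ell$, where $a_\ell$ counts the earlier jobs with processing time $\rho_\ell$, and then count $n$ choices of $j$ times $(n+1)^P$ choices of $a$. Your caveat about uniformity in $P$ is unnecessary: since $P\le n$, we have $(1+1/n)^P\le e$, so $n(n+1)^P=O(n^{P+1})$ holds uniformly anyway.
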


\begin{proof}
Consider a schedule $S$ and a job $j$. The start time of $j$ is the sum of the
processing times of the jobs scheduled before $j$. Since the possible
processing times are $\rho_1,\ldots,\rho_P$, this start time can be written as
$\sum_{\ell=1}^P a_\ell\rho_\ell$, where $a_\ell$ is the number of earlier jobs
with processing time $\rho_\ell$. Therefore
$C_j(S)-d_j=\sum_{\ell=1}^P a_\ell\rho_\ell+p_j-d_j$, which belongs to
$\widehat{\mathcal L}_P$.

There are $n$ choices for $j$ and at most $(n+1)^P$ choices for the vector $a$.
Hence $|\widehat{\mathcal L}_P|=O(n^{P+1})$.\qedhere
\end{proof}

\begin{algorithm}[htbp]
\caption{Fixed-$P$ algorithm for $k$-sum lateness}
\label{alg:fixed-P}
\DontPrintSemicolon
Generate $\widehat{\mathcal L}_P$\;
$v_{\mathrm{best}}\leftarrow+\infty$; $S_{\mathrm{best}}\leftarrow\textsc{null}$\;
\ForEach{$r\in\widehat{\mathcal L}_P$}{
  Solve the shifted total-tardiness instance with due dates $d_j+r$ using Theorem~\ref{thm:TT-fixed-P}\;
  Let $S_r$ be the returned schedule, and let $\Phi(r)$ be its value\;
  \If{$kr+\Phi(r)<v_{\mathrm{best}}$}{
    $v_{\mathrm{best}}\leftarrow kr+\Phi(r)$\;
    $S_{\mathrm{best}}\leftarrow S_r$\;
  }
}
\Return $S_{\mathrm{best}}$\;
\end{algorithm}

\begin{theorem}
\label{thm:fixed-P}
Algorithm~\ref{alg:fixed-P} returns an optimal schedule for $k$-sum lateness in
time
\[
O(Pn^{2P+5}).
\]
In particular, for fixed $P$, this is $O(n^{2P+5})$.
\end{theorem}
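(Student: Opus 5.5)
Correctness and running time are proved separately, and both follow directly from the machinery already in place.

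\emph{Correctness.} By Theorem~\ref{thm:topk-dual-reduction}, $\OPT=\min_{r}\{kr+\Phi(r)\}$. By Lemma~\ref{lem:breakpoint-reduction}, this minimum is attained at a shift of the form $r=C_j(S)-d_j$. By Lemma~\ref{lem:fixed-P-candidate-shifts}, every such shift lies in $\widehat{\mathcal L}_P$. Hence
\[
\min_{r\in\widehat{\mathcal L}_P}\{kr+\Phi(r)\}=\OPT .
\]
Algorithm~\ref{alg:fixed-P} computes $\Phi(r)$ exactly for each candidate via Theorem~\ref{thm:TT-fixed-P}. Note that shifting all due dates by $r$ leaves the processing times untouched, so the shifted instance still has $P$ distinct processing times and the theorem applies.

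We must also check that the returned schedule $S_{\mathrm{best}}=S_{r^\ast}$ has $\lambda_k(S_{r^\ast})\le\OPT$, and not merely that the computed value is right. For this, use the dual identity~\eqref{eq:dual} for the fixed schedule $S_{r^\ast}$, which gives
\[
\lambda_k(S_{r^\ast})=\min_r f_{S_{r^\ast}}(r)\le f_{S_{r^\ast}}(r^\ast)=kr^\ast+\Phi(r^\ast)=\OPT .
\]
So $S_{r^\ast}$ is optimal.

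\emph{Running time.} Generating $\widehat{\mathcal L}_P$ amounts to enumerating $O(n^{P+1})$ pairs $(j,a)$, each costing $O(P)$ arithmetic; optionally we deduplicate by sorting. For each of the $O(n^{P+1})$ shifts, Theorem~\ref{thm:TT-fixed-P} runs in $O(Pn^{P+4})$ time. Here we must make sure the $O(n\log n)$ EDD sorting and the prefix-sum preprocessing are dominated, which they are. This gives a total of
\[
O(n^{P+1}\cdot Pn^{P+4})=O(Pn^{2P+5}).
\]
The algorithm must also return a schedule, not just $\Phi(r)$. This is handled by standard back-pointer recovery in the DP, which costs no more than computing the table. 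Finally, the bit lengths of all candidate values stay polynomial for fixed $P$, as already remarked in Section~\ref{sec:lawler-fixed-P}.

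The main obstacle is the bookkeeping that the returned schedule, rather than only the value, is optimal. The dual-identity inequality above handles this. The running-time count itself is routine.
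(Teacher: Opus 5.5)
Your proposal is correct and follows the paper's proof. The breakpoint reduction and the candidate-shift lemma restrict $r$ to $\widehat{\mathcal L}_P$, and $O(n^{P+1})$ calls to the $O(Pn^{P+4})$ compressed Lawler dynamic program give the bound. Your explicit use of the dual identity~\eqref{eq:dual} to show that the returned schedule $S_{r^\ast}$ itself satisfies $\lambda_k(S_{r^\ast})\le\OPT$ spells out a step the paper leaves implicit in its phrase ``optimal by~\eqref{eq:shift-tt}''.
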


\begin{proof}
By Lemmas~\ref{lem:breakpoint-reduction} and
\ref{lem:fixed-P-candidate-shifts}, it is enough to enumerate the shifts in
$\widehat{\mathcal L}_P$. This set has size $O(n^{P+1})$.

For each shift, Theorem~\ref{thm:TT-fixed-P} computes $\Phi(r)$ in time
$O(Pn^{P+4})$. The total running time is therefore
\[
O(n^{P+1}\cdot Pn^{P+4})=O(Pn^{2P+5}).
\]
For fixed $P$, this is $O(n^{2P+5})$.

The algorithm returns the schedule with minimum value $kr+\Phi(r)$ among all
enumerated shifts, so it is optimal by~\eqref{eq:shift-tt}.
\end{proof}

\subsubsection{An XP algorithm for fixed \texorpdfstring{$D$}{D}}
\label{sec:xp-d-dual}

Assume that there are $D$ distinct due dates $\delta_1<\cdots<\delta_D$. Use the
notation from Section~\ref{sec:TT-fixed-D}. Thus
\[
J_i=\{j\in\Jobs:d_j=\delta_i\}
\]
is sorted as a list in the strict SPT order, $n_i=|J_i|$, and
\[
Q_i(h)=\sum_{\ell=1}^h p_{J_i[\ell]},
\qquad
Q(a)=\sum_{i=1}^D Q_i(a_i).
\]

Define
\[
\widehat{\mathcal L}_D
=
\{Q(a)-\delta_i:
a\in\prod_{\ell=1}^D\{0,\ldots,n_\ell\},\
i\in[D],\
a_i>0\}.
\]

\begin{lemma}
\label{lem:fixed-D-candidate-shifts}
The minimum in~\eqref{eq:shift-tt} is attained at some
$r\in\widehat{\mathcal L}_D$. Moreover,
\[
|\widehat{\mathcal L}_D|
\le
D\prod_{i=1}^D(n_i+1)
\le
D(n+1)^D.
\]
\end{lemma}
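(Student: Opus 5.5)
The plan is to combine Lemma~\ref{lem:breakpoint-reduction} with Lemma~\ref{lem:spt-within-due-date}. The subtle point is that Lemma~\ref{lem:breakpoint-reduction} only says the minimum is attained at $r=C_j(S)-d_j$ for \emph{some} schedule $S$. In general $S$ need not be a merge of the sorted lists $J_1,\ldots,J_D$, so its completion times need not have the form $Q(a)$. I will therefore first choose the minimizing pair $(S^\star,r^\star)$ more carefully, so that $S^\star$ is such a merge, and only then apply the breakpoint argument.

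\textbf{Step 1 (choosing a merge).} Start from any global minimizer $(S,r^\star)$ of $f_S(r)$ over schedules and real $r$; one exists by the first paragraph of the proof of Lemma~\ref{lem:breakpoint-reduction}. Then $S$ is optimal for the shifted total-tardiness instance with due dates $d_j+r^\star$. Indeed, $f_S(r^\star)\ge kr^\star+\Phi(r^\star)\ge\OPT=f_S(r^\star)$, so $f_S(r^\star)-kr^\star=\Phi(r^\star)$. Apply the exchange argument of Lemma~\ref{lem:spt-within-due-date} to $S$ at shift $r^\star$. This yields a schedule $S^\star$ in which every $J_i$ appears in the order $J_i[1],\ldots,J_i[n_i]$, and whose shifted total tardiness is at most that of $S$. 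Hence $f_{S^\star}(r^\star)\le f_S(r^\star)$, so $(S^\star,r^\star)$ is also a global minimizer.

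\textbf{Step 2 (breakpoints of a merge).} Rerun the second half of the proof of Lemma~\ref{lem:breakpoint-reduction} with $S^\star$ fixed. This gives a breakpoint $r'=C_j(S^\star)-d_j$ with $f_{S^\star}(r')=f_{S^\star}(r^\star)$, and the same sandwich shows that $kr'+\Phi(r')=\OPT$. Now let $j=J_i[h]$, and let $a_\ell$ be the number of jobs of $J_\ell$ scheduled in $S^\star$ up to and including $j$. Because $S^\star$ is a merge of the sorted lists, the jobs of $J_\ell$ completed by then are exactly $J_\ell[1],\ldots,J_\ell[a_\ell]$. Therefore $C_j(S^\star)=\sum_\ell Q_\ell(a_\ell)=Q(a)$. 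Moreover $a_i=h>0$, and $d_j=\delta_i$. Thus $r'=Q(a)-\delta_i\in\widehat{\mathcal L}_D$.

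\textbf{Step 3 (counting).} The elements of $\widehat{\mathcal L}_D$ are indexed by pairs $(a,i)$. There are at most $\prod_i(n_i+1)$ choices of $a$ and $D$ choices of $i$. This gives the bound $D\prod_i(n_i+1)\le D(n+1)^D$, using $n_i\le n$.

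The main obstacle is Step 1: the breakpoint lemma cannot be applied as a black box. The point to verify is that the exchange in Lemma~\ref{lem:spt-within-due-date} preserves global optimality of the pair at the \emph{same} shift $r^\star$. This holds because the exchange never increases shifted tardiness, and $kr^\star$ is unaffected by it.
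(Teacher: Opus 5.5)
Your proposal is correct and follows essentially the same route as the paper. Both fix an optimal shift, take an optimal shifted total-tardiness schedule that is a merge of the sorted lists (via Lemma~\ref{lem:spt-within-due-date}), rerun the breakpoint argument for that fixed schedule, and read off $r'=C_j(S)-d_j=Q(a)-\delta_i$ from the merge structure. The only cosmetic difference is that the paper starts from an optimal shift $r_0$ and invokes the lemma's statement directly, whereas you start from a minimizing pair $(S,r^\star)$ and apply its exchange argument to $S$; the counting step is identical.
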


\begin{proof}
Let $r_0$ be an optimal shift in~\eqref{eq:shift-tt}. Choose an optimal
shifted total-tardiness schedule $S$ for this shift. By
Lemma~\ref{lem:spt-within-due-date}, we may choose $S$ so that it is a merge of
the ordered lists $J_1,\ldots,J_D$.

Now keep this schedule $S$ fixed and consider the function
\[
f_S(r)
=
kr+\sum_{j\in\Jobs}(C_j(S)-d_j-r)_+ .
\]
As in Lemma~\ref{lem:breakpoint-reduction}, there is a breakpoint $r'$ of
$f_S$ such that
\[
f_S(r')\le f_S(r_0).
\]
Since $S$ is optimal for the shift $r_0$,
\[
f_S(r_0)=kr_0+\Phi(r_0)=\OPT.
\]
Also,
\[
kr'+\Phi(r')\le f_S(r').
\]
Therefore $r'$ is also an optimal shift. It remains to show that $r'$ belongs
to $\widehat{\mathcal L}_D$.

Since $r'$ is a breakpoint of $f_S$, there is a job $j$ such that
$r'=C_j(S)-d_j$. Suppose $j\in J_i$. Because $S$ is a merge of the ordered
lists, the prefix of $S$ ending at $j$ contains the first $a_\ell$ jobs of each
list $J_\ell$, for some vector $a\in\prod_{\ell=1}^D\{0,\ldots,n_\ell\}$ with
$a_i>0$. Hence $C_j(S)=Q(a)$ and $d_j=\delta_i$. Thus
$r'=Q(a)-\delta_i\in\widehat{\mathcal L}_D$.

The size bound follows from the definition of $\widehat{\mathcal L}_D$: there
are $\prod_i(n_i+1)$ choices for $a$ and at most $D$ choices for $i$.
\end{proof}

\begin{algorithm}[htbp]
\caption{Fixed-$D$ algorithm for $k$-sum lateness}
\label{alg:fixed-D-dual}
\DontPrintSemicolon
Sort the distinct due dates as $\delta_1<\cdots<\delta_D$\;
\ForEach{$i\in[D]$}{
  Compute $J_i=\{j:d_j=\delta_i\}$ and sort $J_i$ by the strict \SPT{} order\;
}
Generate $\widehat{\mathcal L}_D$\;
$v_{\mathrm{best}}\leftarrow+\infty$; $S_{\mathrm{best}}\leftarrow\textsc{null}$\;
\ForEach{$r\in\widehat{\mathcal L}_D$}{
  Solve the shifted total-tardiness instance with due dates $d_j+r$ using Lemma~\ref{lem:TT-fixed-D}\;
  Let $S_r$ be the returned schedule, and let $\Phi(r)$ be its value\;
  \If{$kr+\Phi(r)<v_{\mathrm{best}}$}{
    $v_{\mathrm{best}}\leftarrow kr+\Phi(r)$\;
    $S_{\mathrm{best}}\leftarrow S_r$\;
  }
}
\Return $S_{\mathrm{best}}$\;
\end{algorithm}

\begin{theorem}
\label{thm:fixed-D-dual}
For fixed $D$, Algorithm~\ref{alg:fixed-D-dual} returns an optimal schedule for
$k$-sum lateness in time
\[
O\left(
D^2\left[\prod_{i=1}^D(n_i+1)\right]^2
\right)
\le
O(D^2(n+1)^{2D}),
\]
up to the time needed to sort the lists $J_i$.
\end{theorem}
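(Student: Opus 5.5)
The argument combines three earlier results: correctness comes from the dual identity~\eqref{eq:shift-tt} together with Lemma~\ref{lem:fixed-D-candidate-shifts}, and the running time comes from Lemma~\ref{lem:TT-fixed-D}.

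\emph{Correctness.} By Lemma~\ref{lem:fixed-D-candidate-shifts}, there is an optimal shift $r^\star\in\widehat{\mathcal L}_D$ with $kr^\star+\Phi(r^\star)=\OPT$.

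For every enumerated shift $r$, the algorithm computes $\Phi(r)$ exactly via recurrence~\eqref{eq:DP-D}; this is justified by Lemmas~\ref{lem:spt-within-due-date} and~\ref{lem:TT-fixed-D}. It also recovers a schedule $S_r$ attaining $\Phi(r)$ by standard backtracking through the DP table, which does not change the asymptotic cost.

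The value $v_{\mathrm{best}}$ is therefore the minimum of $kr+\Phi(r)$ over $\widehat{\mathcal L}_D$, and this equals $\OPT$. It remains to check that the returned schedule $S_{\mathrm{best}}=S_r$ actually has objective value $\OPT$. By the per-schedule dual identity~\eqref{eq:dual},
\[
\lambda_k(S_r)=\min_{r'}f_{S_r}(r')\le f_{S_r}(r)=kr+\Phi(r)=\OPT,
\]
and $\lambda_k(S_r)\ge\OPT$ trivially, so equality holds. The same remark implicitly justifies the earlier algorithms. This step, showing that the returned schedule (and not just the returned value) is optimal, is the only point needing care.

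\emph{Running time.} Generating $\widehat{\mathcal L}_D$ means enumerating every vector $a$ in the mixed-radix box together with every $i$ with $a_i>0$. Using the precomputed prefix sums $Q_i$, each value $Q(a)$ can be maintained incrementally, so the whole set is produced in $O(D\prod_i(n_i+1))$ time. No deduplication is needed, since repeated shifts only cost extra iterations and are already covered by the size bound.

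The main loop runs once per candidate shift, i.e., $|\widehat{\mathcal L}_D|\le D\prod_i(n_i+1)$ times. Each iteration solves one instance of~\eqref{eq:DP-D} in $O(D\prod_i(n_i+1))$ time by Lemma~\ref{lem:TT-fixed-D}. The sorting is done once and is excluded by the statement, and the mixed-radix indexing preprocessing is independent of $r$, so it is done once as well.

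Multiplying the number of iterations by the per-iteration cost gives
\[
O\!\left(D^2\Bigl[\prod_{i=1}^D(n_i+1)\Bigr]^2\right),
\]
and using $n_i\le n$ bounds this by $O(D^2(n+1)^{2D})$.
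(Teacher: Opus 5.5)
Your proposal is correct and follows the paper's proof. Correctness comes from Lemma~\ref{lem:fixed-D-candidate-shifts} together with~\eqref{eq:shift-tt}, and the running time is the product of $|\widehat{\mathcal L}_D|\le D\prod_{i=1}^D(n_i+1)$ with the per-shift cost from Lemma~\ref{lem:TT-fixed-D}. Two details the paper leaves implicit are handled soundly in your version: the check that the returned schedule itself is optimal, via $\OPT\le\lambda_k(S_r)\le f_{S_r}(r)=kr+\Phi(r)=\OPT$, and the accounting for the cost of generating $\widehat{\mathcal L}_D$.
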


\begin{proof}
By Lemma~\ref{lem:fixed-D-candidate-shifts}, it is enough to enumerate the
shifts in $\widehat{\mathcal L}_D$. This set has size at most
\[
D\prod_{i=1}^D(n_i+1).
\]
For each shift, Lemma~\ref{lem:TT-fixed-D} computes $\Phi(r)$ in time
\[
O\left(D\prod_{i=1}^D(n_i+1)\right).
\]
The total running time is therefore
\[
O\left(
D^2\left[\prod_{i=1}^D(n_i+1)\right]^2
\right)
\le
O(D^2(n+1)^{2D}).
\]
The algorithm returns the schedule with minimum value $kr+\Phi(r)$ among the
enumerated shifts, so it is optimal by~\eqref{eq:shift-tt}.
\end{proof}

    The remaining sections develop a different route based on the internal combinatorial
    structure of optimal schedules. This structural route is not needed for the shifted
    total-tardiness algorithms above, but it will be used to obtain improved XP bounds and
    an FPT algorithm for the parameter $D+k$.

\section{Combinatorial structure of optimal schedules}
\label{sec:structure}

This section presents a purely combinatorial route, which can be read
independently of the dual shifted-total-tardiness formulation in
Section~\ref{sec:dual}. We develop a canonical form for optimal schedules.
First, we show that some optimum respects a natural dominance order
(increasing schedules). We then fix a suitable top-$k$ set and prove that the
remaining jobs form consecutive due-date blocks ordered by due date. Finally,
we prove a suffix property inside each due-date class. These properties are
summarized in Theorem~\ref{thm:block-island} and are used later in the XP and
FPT algorithms.

None of the statements requires $D$ to be constant. The parameter $D$ becomes
relevant only when we turn these structures into explicit enumerations.

\subsection{Structured optimal solutions}

In $k$-sum tardiness, the penalty for a job completing before its due date is
truncated to zero, so such jobs can be scheduled earlier without changing the
objective value. In $k$-sum lateness, the penalty $C_j-d_j$ is not truncated.
Decreasing the completion time of any job strictly decreases its lateness.
Because every local reordering alters the lateness vector, establishing
structural properties requires explicit exchange arguments.

This section proves that there exists an optimal \emph{increasing} schedule (one
that respects a natural dominance order; see Definition~\ref{def:increasing})
that admits a block--island decomposition. Structurally, the schedule can be
written as
\[
S = I_0 \mid B_1 \mid I_1 \mid B_2 \mid I_2 \mid \cdots \mid B_D \mid I_D,
\]
where each block $B_i$ contains all jobs outside the top-$k$ set with the $i$-th
smallest due date, and each island $I_i$ is a (possibly empty) subsequence
consisting entirely of top-$k$ jobs. The islands $I_0, I_1, \dots, I_D$ represent
the top-$k$ subsequences that appear before the first block, between consecutive
blocks, and after the last block.

\subsubsection{Increasing schedules}

    We first isolate the perturbation subroutine used in the structural proofs.
    The perturbation is only an analysis device. The algorithms below use the original
    processing times, except that ties in \SPT{} order are broken by the fixed order from
    Section~\ref{sec:prelim}.
    If all processing times are distinct, no perturbation is needed.
    
    \begin{lemma}[Perturbation subroutine]
    \label{lem:symbolic-perturbation}
    Fix the tie-breaking order on jobs from Section~\ref{sec:prelim}. Let $\operatorname{rank}(j)\in\{1,\dots,n\}$
    be the rank of job $j$ in this order. For $\varepsilon>0$, let $\mathcal{I}^\varepsilon$ be the
    instance obtained from the original instance $\mathcal{I}$ by replacing
    $p_j$ by
    \[
    p_j^\varepsilon=p_j+\varepsilon \operatorname{rank}(j),
\]
and keeping all due dates unchanged. In particular, every perturbed processing
time is strictly positive. For sufficiently small $\varepsilon$, the following
properties hold:
    \begin{enumerate}[leftmargin=2.2em]
    \item the order of processing times in $\mathcal{I}^\varepsilon$ is exactly the lexicographic order
    by $(p_j,\operatorname{rank}(j))$;
    \item every schedule that is optimal for $\mathcal{I}^\varepsilon$ is optimal for $\mathcal{I}$;
    \item for every fixed schedule $S$, every top-$k$ set for $S$ in $\mathcal{I}^\varepsilon$ is also a
    top-$k$ set for $S$ in $\mathcal{I}$.
    \end{enumerate}
    If all processing times and due dates are integral, one may take
    $\varepsilon=1/(4kn(n+1))$.
    \end{lemma}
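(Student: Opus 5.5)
The plan is a finiteness and continuity argument: there are only finitely many schedules, and every quantity involved is an affine function of $\varepsilon$ with bounded slope. Throughout, write $C_j^\varepsilon(S)$ and $L_j^\varepsilon(S)$ for perturbed completion times and lateness. For every schedule $S$ and job $j$,
\[
0\le C_j^\varepsilon(S)-C_j(S)=\varepsilon\sum_{\ell:\pos_S(\ell)\le\pos_S(j)}\operatorname{rank}(\ell)\le \varepsilon\, n(n+1)/2 .
\]
Hence every lateness moves by at most $\eta:=\varepsilon n(n+1)/2$. By the same coordinatewise argument as in Lemma~\ref{lem:rounding-error}, $|\lambda_k^\varepsilon(S)-\lambda_k(S)|\le k\eta$ for every schedule $S$. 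I will use this uniform bound for all three items.

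\textbf{Item 1.} If $p_i<p_j$, then $p_i^\varepsilon<p_j^\varepsilon$ as soon as $\varepsilon(n-1)<p_j-p_i$. If $p_i=p_j$, the order of $p_i^\varepsilon,p_j^\varepsilon$ is decided by the ranks. So it suffices to take $\varepsilon<g/n$, where $g$ is the minimum positive gap between processing times. Strict positivity is immediate since $\operatorname{rank}\ge 1$.

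\textbf{Item 2.} Let $\gamma>0$ be the minimum positive value of $\lambda_k(S)-\OPT$ over all schedules. It exists because there are finitely many schedules; if none exists, every schedule is optimal and there is nothing to prove. Choose $\varepsilon$ so that $2k\eta<\gamma$. If $S$ is optimal for $\mathcal I^\varepsilon$ and $S^\star$ is optimal for $\mathcal I$, then
\[
\lambda_k(S)\le\lambda_k^\varepsilon(S)+k\eta\le\lambda_k^\varepsilon(S^\star)+k\eta\le\OPT+2k\eta<\OPT+\gamma ,
\]
so $\lambda_k(S)=\OPT$.

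\textbf{Item 3.} Let $\beta>0$ be the minimum positive difference $|L_x(S)-L_y(S)|$ over all schedules $S$ and jobs $x,y$. Take $2\eta<\beta$. Suppose $X$ is top-$k$ for $S$ in $\mathcal I^\varepsilon$ but not in $\mathcal I$. Then some $x\in X$ and $y\notin X$ satisfy $L_x(S)<L_y(S)$, hence $L_y(S)-L_x(S)\ge\beta$. But $L_x^\varepsilon(S)\ge L_y^\varepsilon(S)$ forces $L_y(S)-L_x(S)\le 2\eta<\beta$, a contradiction. Taking the minimum of the three thresholds gives the first claim.

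\textbf{Integral data.} This is the step needing care, since the thresholds must be checked for $\varepsilon=1/(4kn(n+1))$. With integral data, $g,\gamma,\beta\ge1$. Then $\eta=1/(8k)$, so $2k\eta=1/4<1$, $2\eta<1$, and $\varepsilon(n-1)<1$, so all three conditions hold.

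The only real obstacle is item 3. It needs the observation that ties in the original lateness may be broken arbitrarily by the perturbation. This is harmless because the implication is only one direction: any set that is top-$k$ after perturbation remains top-$k$ before.
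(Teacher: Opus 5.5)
Your proof is correct and follows the paper's argument essentially step for step. Both use the uniform bound $\varepsilon n(n+1)/2$ on completion-time shifts and the same three gap thresholds (processing times, lateness values, objective values), and both handle item 3 by the identical contradiction. The only difference is in item 2: you use the two-sided bound $|\lambda_k^\varepsilon-\lambda_k|\le k\eta$ and so need $2k\eta<\gamma$, whereas the paper uses the monotonicity $\lambda_k\le\lambda_k^\varepsilon$ and only needs $k\varepsilon n(n+1)/2<\Delta_\lambda$; either condition holds for $\varepsilon=1/(4kn(n+1))$.
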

    
    \begin{proof}
    Set $R=\sum_{j=1}^n \operatorname{rank}(j)=n(n+1)/2$. For every schedule $S$ and job $j$, the perturbation
    increases the completion time of $j$ by a number in $[0,\varepsilon R]$. Hence
    \[
    L_j(S)\le L_j^\varepsilon(S)\le L_j(S)+\varepsilon R
    \]
    and therefore
    \[
    \lambda_k(S)
    \le
    \lambda_k^\varepsilon(S)
    \le
    \lambda_k(S)+k\varepsilon R .
    \]

    Let $\Delta_p$ be the minimum positive difference between two distinct original processing
    times, if such a difference exists, and set $\Delta_p=1$ otherwise. Let $\Delta_L$ be the
    minimum positive difference between two lateness values $L_i(S)$ and $L_j(S)$ over all
    schedules $S$ and jobs $i,j$, if such a difference exists, and set $\Delta_L=1$ otherwise.
    Let $\Delta_\lambda$ be the minimum positive difference between two objective values
    $\lambda_k(S)$ and $\lambda_k(T)$, if such a difference exists, and set
    $\Delta_\lambda=1$ otherwise. Choose $\varepsilon>0$ so that
    \[
    \varepsilon n < \Delta_p,\qquad
    2\varepsilon R < \Delta_L,\qquad
    k\varepsilon R < \Delta_\lambda .
    \]

    The first inequality implies that the order of the perturbed processing times is the
    lexicographic order by $(p_j,\operatorname{rank}(j))$. The second inequality implies that
    no strict lateness comparison for a fixed schedule is reversed by the perturbation: if
    $L_a(S)>L_b(S)$ in the original instance, then $L_a^\varepsilon(S)>L_b^\varepsilon(S)$ in
    the perturbed instance.

    Now let $X$ be a top-$k$ set for $S$ in $\mathcal{I}^\varepsilon$. If $X$ were not a top-$k$
    set for $S$ in the original instance, then there would exist $x\in X$ and $y\notin X$ such
    that $L_y(S)>L_x(S)$ in the original instance. This strict inequality would be preserved in
    the perturbed instance, giving $L_y^\varepsilon(S)>L_x^\varepsilon(S)$, contradicting that
    $X$ is a top-$k$ set for $S$ in $\mathcal{I}^\varepsilon$. Hence $X$ is also a top-$k$ set
    for $S$ in $\mathcal{I}$.

    It remains to prove preservation of optimality. If all schedules have the same original
    objective value, there is nothing to prove. Otherwise,
    let $S^\varepsilon$ be optimal for the perturbed instance, and let $S^*$ be optimal for the
    original instance. If $S^\varepsilon$ were not optimal for the original instance, then
    \[
    \lambda_k(S^\varepsilon)\ge \lambda_k(S^*)+\Delta_\lambda .
    \]
    Thus
    \[
    \lambda_k^\varepsilon(S^*)
    <
    \lambda_k(S^*)+\Delta_\lambda
    \le
    \lambda_k(S^\varepsilon)
    \le
    \lambda_k^\varepsilon(S^\varepsilon),
    \]
    contradicting the optimality of $S^\varepsilon$ for the perturbed instance.

    If all data are integral, distinct processing times and distinct lateness values differ by
    at least $1$, and distinct objective values differ by at least $1$. With
    $\varepsilon=1/(4kn(n+1))$, we have $\varepsilon n<1$, $2\varepsilon R\le 1/(2k)\le 1/2$,
    and $k\varepsilon R=1/8$. Hence the same three inequalities hold.
    \end{proof}
    
    Unless explicitly stated otherwise, the structural proofs below are carried out
after the perturbation fixed in Lemma~\ref{lem:symbolic-perturbation}, or without
perturbation if the processing times are already distinct and positive. To keep
the notation readable, we write $p_j$, $C_j(S)$, $L_j(S)$, and $\lambda_k(S)$ for
the quantities used in the structural proof. The order \SPT{} is the corresponding
strict processing-time order. Equivalently, it is the lexicographic order obtained
from the input processing times and the fixed tie-breaking order. Thus all strict
inequalities involving processing times in the structural proofs below are
interpreted in the perturbed instance. After applying
Lemma~\ref{lem:symbolic-perturbation}, the resulting schedule and top-$k$ set are
transferred back to the original instance, whose processing times may be zero.
The algorithms later use the lexicographic \SPT{} order, but all objective values
are evaluated with the input processing times.

    \begin{definition}[Dominance and increasing schedules]
    \label{def:increasing}
    Job $i$ is \emph{dominated} by job $j$ if $i\prec_p j$ and $d_i\le d_j$.
    A schedule $S$ is \emph{increasing} if whenever $i$ is dominated by $j$, job $i$ appears
    before $j$ in $S$.
    \end{definition}
    
    For a vector $a\in\mathbb{R}^n$, let $s_k(a)$ denote the sum of its $k$ largest coordinates.
    
    \begin{lemma}[Top-$k$ exchange]
    \label{lem:topk-exchange}
    Let $a,b\in\mathbb{R}^n$ coincide in all coordinates except possibly $i$ and $j$.
    If $b_i+b_j\le a_i+a_j$ and $\max\{b_i,b_j\}\le \max\{a_i,a_j\}$, then the sum of the
    $k$ largest coordinates of $b$ is at most the sum of the $k$ largest coordinates of $a$.
    \end{lemma}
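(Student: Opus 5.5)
The plan is to prove the Top-$k$ exchange lemma through the variational characterization of $s_k$ used in Section~\ref{sec:dual}. We write $s_k(a)=\max\{\sum_\ell x_\ell a_\ell : x\in\{0,1\}^n,\ \sum_\ell x_\ell=k\}$, that is, $s_k(a)$ is the maximum of $\sum_{\ell\in X}a_\ell$ over all $k$-subsets $X$. It then suffices to show that for every $k$-subset $X$ there is a $k$-subset $X'$ with
\[
\sum_{\ell\in X}b_\ell\le \sum_{\ell\in X'}a_\ell .
\]
Taking $X$ to be a top-$k$ set of $b$ then gives $s_k(b)\le s_k(a)$.

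Fix a $k$-subset $X$. Since $a$ and $b$ agree outside $\{i,j\}$, only $X\cap\{i,j\}$ matters, and we split into three cases.

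\emph{Case 1: $X\cap\{i,j\}=\emptyset$.} Take $X'=X$; the sums coincide.

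\emph{Case 2: $\{i,j\}\subseteq X$.} Take $X'=X$. The difference between the two sums is $(a_i+a_j)-(b_i+b_j)\ge 0$ by the sum hypothesis.

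\emph{Case 3: exactly one of $i,j$ lies in $X$.} Say it is $i$; the other case is symmetric. Then $b_i\le\max\{b_i,b_j\}\le\max\{a_i,a_j\}$. Let $m\in\{i,j\}$ attain $\max\{a_i,a_j\}$, and set $X'=(X\setminus\{i\})\cup\{m\}$. This is still a $k$-set, because $j\notin X$; if $m=i$ nothing changes. Since the other coordinates agree,
\[
\sum_{\ell\in X'}a_\ell=\sum_{\ell\in X\setminus\{i\}}b_\ell+a_m\ge \sum_{\ell\in X}b_\ell .
\]

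Taking the maximum over $X$ finishes the proof. The main point needing care is Case 3: the swap from $i$ to $m$ must keep $|X'|=k$. This holds precisely because $j\notin X$, so replacing $i$ by $m$ never creates a collision. The degenerate case $i=j$ is excluded implicitly, or is trivial.
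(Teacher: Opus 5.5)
Your proposal is correct and follows essentially the same route as the paper: you use the same $0$--$1$ characterization of $s_k$ and the same three-case split on how many of $i,j$ the chosen $k$-set contains. In the one-index case you also make the same swap to the index with the larger $a$-coordinate. Phrasing it for an arbitrary $k$-subset $X$ before specializing to a top-$k$ set of $b$ is only a cosmetic generalization of the paper's argument.
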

    
    \begin{proof}
    Let $x\in\{0,1\}^n$ select the $k$ largest coordinates of $b$.
    If $x_i=x_j=0$, then $\sum_h x_h b_h=\sum_h x_h a_h\le s_k(a)$.
    If $x_i=x_j=1$, then
    $\sum_h x_h b_h\le \sum_h x_h a_h\le s_k(a)$ by the assumption on $b_i+b_j$.
    It remains to consider the case in which exactly one of $i,j$ is selected by $x$.
    Choose $y\in\{0,1\}^n$ equal to $x$ outside $\{i,j\}$ and selecting, among $i,j$, an index
    with larger $a$-coordinate.
    Then $\sum_h y_h=k$ and
    \[
    \sum_h x_h b_h
    \le \sum_{h\notin\{i,j\}} x_h a_h+\max\{a_i,a_j\}
    =\sum_h y_h a_h
    \le s_k(a).
    \qedhere\]
    \end{proof}
    
    \begin{lemma}[Dominated-pair exchange]
    \label{lem:dominated-exchange}
    Let $S$ be a schedule in which job $i$ appears before job $j$, with $p_i>p_j$ and
    $d_i\ge d_j$.
    Let $S'$ be obtained from $S$ by swapping the positions of $i$ and $j$.
    Then $\lambda_k(S')\le \lambda_k(S)$.
    \end{lemma}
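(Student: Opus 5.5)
The plan is to compare the lateness vectors of $S$ and $S'$ coordinate by coordinate, and then apply Lemma~\ref{lem:topk-exchange} to the two coordinates that genuinely change. Let $H$ be the block of jobs strictly between $i$ and $j$ in $S$, and let $t$ be the start time of $i$ in $S$. Jobs outside $\{i,j\}\cup H$ keep their completion times, since the total processing time of $\{i\}\cup H\cup\{j\}$ is unchanged. Every job in $H$ completes earlier by $p_i-p_j>0$ in $S'$, so its lateness does not increase. Hence it suffices to handle $i$ and $j$, after first replacing $H$'s coordinates.

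For this, I define an intermediate vector $a'$ that agrees with $L(S)$ except that the $H$ coordinates take their $S'$ values. Since $a'\le L(S)$ coordinatewise and $s_k$ is monotone, we get $s_k(a')\le \lambda_k(S)$. The vector $b=L(S')$ then differs from $a'$ only in coordinates $i$ and $j$, so the remaining task is to verify the two hypotheses of Lemma~\ref{lem:topk-exchange}.

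Write $T=t+p_i+p(H)+p_j$. In $S$ we have $C_i=t+p_i$ and $C_j=T$. In $S'$ we have $C'_j=t+p_j$ and $C'_i=T$.

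For the sum condition, observe that $C'_i+C'_j=T+t+p_j< T+t+p_i=C_i+C_j$, because $p_j<p_i$. Subtracting $d_i+d_j$ from both sides gives $b_i+b_j\le a_i+a_j$.

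For the max condition, I bound each of $b_i$ and $b_j$ by $a_j$:
\begin{align*}
b_i &= T-d_i\le T-d_j=a_j,\\
b_j &= t+p_j-d_j\le T-d_j=a_j.
\end{align*}
The first uses $d_i\ge d_j$. The second uses $t+p_j\le T$, which holds since all processing times are nonnegative. Therefore $\max\{b_i,b_j\}\le a_j\le\max\{a_i,a_j\}$.

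Lemma~\ref{lem:topk-exchange} now yields $\lambda_k(S')=s_k(b)\le s_k(a')\le\lambda_k(S)$. The only delicate point is the reduction handling $H$: the jobs in $H$ change as well, so the two-coordinate lemma cannot be applied directly to $L(S)$ and $L(S')$. The intermediate vector $a'$, together with monotonicity of $s_k$, resolves this.
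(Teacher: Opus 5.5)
Your proof is correct and follows the paper's argument: the same reduction to Lemma~\ref{lem:topk-exchange} via an intermediate vector and monotonicity of the top-$k$ sum, with the same two checks ($b_i+b_j<a_i+a_j$, and $b_i,b_j\le a_j$). The only difference is the order of the steps. The paper's intermediate vector takes the $S'$ values on $\{i,j\}$ and keeps $H$ at its $S$ values, so monotonicity is applied last; you lower the $H$ coordinates first and apply the exchange lemma last.
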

    
    \begin{proof}
    Let $H$ be the set of jobs strictly between $i$ and $j$ in $S$, and let
    $p(H)=\sum_{h\in H}p_h$.
    The swap changes lateness values as follows:
    \[
    \begin{array}{ll}
    L_i(S')=L_i(S)+p(H)+p_j, & L_j(S')=L_j(S)-p(H)-p_i,\\
    L_h(S')=L_h(S)+p_j-p_i \qquad \forall\, h\in H, &
    L_h(S')=L_h(S) \qquad \forall\, h\notin H\cup\{i,j\}.
    \end{array}
    \]
    Let $a$ be the lateness vector of $S$ and let $c$ be the lateness vector of $S'$.
    Define $b$ by $b_i=c_i$, $b_j=c_j$, and $b_h=a_h$ for $h\notin\{i,j\}$.
    Since $c_h\le b_h$ for every $h$, the top-$k$ sum of $c$ is at most the top-$k$ sum of $b$.
    Moreover, $b_i+b_j<a_i+a_j$, because $p_i>p_j$. Since $d_i\ge d_j$, we have $b_i\le a_j$,
    and since $p_i>p_j$, we have $b_j<a_j$. Thus $\max\{b_i,b_j\}\le \max\{a_i,a_j\}$.
    By Lemma~\ref{lem:topk-exchange}, $s_k(b)\le s_k(a)$, so $\lambda_k(S')\le\lambda_k(S)$.
    \end{proof}
    
    \begin{theorem}[Existence of an increasing optimum]
    \label{thm:increasing-opt}
    There exists an optimal schedule that is increasing.
    \end{theorem}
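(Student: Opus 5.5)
We work in the perturbed instance $\mathcal{I}^\varepsilon$ of Lemma~\ref{lem:symbolic-perturbation}, where all processing times are distinct and strictly positive and their order coincides with $\prec_p$. By part~2 of that lemma, an optimal schedule for $\mathcal{I}^\varepsilon$ is optimal for $\mathcal{I}$. The increasing property depends only on $\prec_p$ and the due dates, which the perturbation does not change. So it suffices to find an increasing optimal schedule for $\mathcal{I}^\varepsilon$.

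In $\mathcal{I}^\varepsilon$, the statement ``$i$ is dominated by $j$'' means $p_i<p_j$ and $d_i\le d_j$. A schedule fails to be increasing exactly when some pair $(j,i)$ has $j$ before $i$, $p_j>p_i$ and $d_j\ge d_i$. This is precisely the hypothesis of Lemma~\ref{lem:dominated-exchange}, with $j$ in the role of the first job.

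Among all optimal schedules for $\mathcal{I}^\varepsilon$, we choose one, $S$, minimizing the potential
\[
\Psi(S)=\#\{(u,v): u \text{ before } v \text{ in } S,\ p_u>p_v\},
\]
the number of SPT-inversions. Suppose $S$ is not increasing, and pick a violating pair $j$ before $i$ with $p_j>p_i$ and $d_j\ge d_i$. Let $S'$ be obtained by swapping the positions of $j$ and $i$. Lemma~\ref{lem:dominated-exchange} gives $\lambda_k(S')\le\lambda_k(S)$, so $S'$ is also optimal.

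It remains to check that $\Psi(S')<\Psi(S)$, which contradicts the choice of $S$. Let $H$ be the set of jobs strictly between $j$ and $i$. Pairs involving jobs outside $H\cup\{i,j\}$ keep their relative orders, so they are unaffected. The pair $\{i,j\}$ itself is an inversion in $S$ and not in $S'$. For each $h\in H$, the only pairs that change are $\{j,h\}$ and $\{h,i\}$, and we do a case analysis on $p_h$ using the distinctness of processing times.
\begin{itemize}
\item If $p_h>p_j$: in $S$ only $(h,i)$ is an inversion; in $S'$ only $(h,j)$ is. Net change $0$.
\item If $p_i<p_h<p_j$: in $S$ both $(j,h)$ and $(h,i)$ are inversions; in $S'$ neither is. Net change $-2$.
\item If $p_h<p_i$: in $S$ only $(j,h)$ is an inversion; in $S'$ only $(i,h)$ is. Net change $0$.
\end{itemize}
Hence $\Psi(S')\le\Psi(S)-1$, the required contradiction. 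Therefore the chosen $S$ is increasing and optimal for $\mathcal{I}^\varepsilon$, and hence optimal for $\mathcal{I}$.

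The step that needs care is this potential argument. The swap acts on a non-adjacent pair, and simply counting ``violating pairs'' could in principle increase. Measuring progress with SPT-inversions and using strict distinctness of perturbed processing times is what makes the count strictly decrease. If preferred, one can instead use adjacent swaps after choosing a violating pair at minimum distance, but the global potential above avoids that extra step.
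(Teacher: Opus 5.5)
Your proof is correct and follows the paper's argument: you work in the perturbed instance, choose an extremal optimal schedule, and apply Lemma~\ref{lem:dominated-exchange} to a violating pair to contradict extremality. The only difference is the potential. The paper minimizes total lateness $\sum_h L_h(S)$, whose strict decrease $(p_i-p_j)(|H|+1)<0$ is a one-line computation. Your SPT-inversion count instead needs the three-case analysis you give, which is correct.
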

    
    \begin{proof}
    In the perturbed instance, among all optimal schedules choose one, say $S$, that minimizes
    total lateness $\sum_j L_j(S)$.
    If $S$ is not increasing, then there are jobs $i,j$ such that $i$ is dominated by $j$ but
    $j$ appears before $i$ in $S$.
    Let $S'$ be obtained by swapping $j$ and $i$.
    Since $p_j>p_i$ and $d_j\ge d_i$ in the perturbed instance,
Lemma~\ref{lem:dominated-exchange} gives
$\lambda_k(S')\le\lambda_k(S)$, so $S'$ is also optimal.
    If $H$ is the set of jobs strictly between $j$ and $i$ in $S$, then
    \[
    \sum_h L_h(S')-\sum_h L_h(S)=(p_i-p_j)(|H|+1)<0,
    \]
    contradicting the choice of $S$.
    Therefore $S$ is increasing. By Lemma~\ref{lem:symbolic-perturbation}, $S$ is also optimal
    for the original instance.
    \end{proof}
    
\subsubsection{Proximity to \EDD{} outside the top-\texorpdfstring{$k$}{k} set}

The results in this subsection are stated and proved for the original instance.
The main ingredient is a proximity theorem: there exists an optimal schedule
close to the \EDD{} schedule in a strong combinatorial sense. We first recall
the notion of adjacent permutations and then prove the proximity theorem by
shifting due dates in an optimal schedule. The enumeration based on this
proximity result is given in Section~\ref{sec:xp-k}.
    
\paragraph{Adjacent permutations.}
    
    \begin{definition}[$m$-adjacent permutations]
    Given two schedules $S_1,S_2$ of $\Jobs$, we say that $S_2$ is \emph{$m$-adjacent}
    to $S_1$ if $S_2$ can be obtained from $S_1$ by relocating at most $m$ jobs.
    Equivalently, there exists a set $R\subseteq\Jobs$ with $|R|\le m$ such that after removing
    $R$ from both permutations, the remaining orders coincide.
    \end{definition}
    
    \begin{lemma}[Woeginger \cite{Woeginger1991}]
    \label{lem:adj-count}
    For any schedule $S$ and any $m\ge 0$, the number of $m$-adjacent permutations of $S$
    is $O(n^{2m})$.
    \end{lemma}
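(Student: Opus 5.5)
The plan is a direct counting argument. Every $m$-adjacent permutation $S_2$ of $S$ can be produced by a two-stage process. First choose a set $R$ of at most $m$ jobs to remove from $S$. Then reinsert the jobs of $R$, one at a time in some fixed order, into the current sequence. By the definition, $S_2$ restricted to $\Jobs\setminus R$ coincides with $S$ restricted to $\Jobs\setminus R$. So $S_2$ is completely determined by three pieces of data: the set $R$, the sequence $S|_{\Jobs\setminus R}$ (which is forced by $S$ and $R$), and the positions that the jobs of $R$ occupy in $S_2$. We will therefore bound the number of such triples.

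The key steps are as follows.
\begin{enumerate}
\item Fix $r=|R|\le m$. The number of choices of $R$ is $\binom{n}{r}\le n^r$.
\item Given $R$, the permutation $S_2$ is fixed once we specify the set of positions in $\{1,\dots,n\}$ occupied by $R$ together with the assignment of the jobs of $R$ to those positions. This is an injective map from $R$ into $[n]$, so there are at most $n^r$ choices.
\item The remaining $n-r$ positions are then filled by $\Jobs\setminus R$ in the order inherited from $S$.
\end{enumerate}
This yields at most $n^{2r}$ permutations for each $r$. Summing over $r=0,\dots,m$ gives at most $\sum_{r\le m} n^{2r}\le (m+1)n^{2m}$, which is $O(n^{2m})$ for fixed $m$. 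If a bound independent of $m$ is wanted, the geometric sum gives $O(n^{2m})$ for $n\ge 2$ as well.

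There is essentially no obstacle. The only point requiring care is to check that the map from triples to permutations is surjective onto the set of $m$-adjacent permutations; overcounting, where several triples give the same permutation, is harmless for an upper bound. Surjectivity follows immediately from the equivalent definition: the set $R$ witnessing adjacency, together with the positions of $R$ in $S_2$, reconstructs $S_2$. One should also note that the constant hidden in the $O$ depends on $m$. This is acceptable in the intended application, where $m=k-1$ is fixed.
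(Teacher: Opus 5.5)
Your counting argument is correct. An $m$-adjacent permutation is determined by the witness set $R$ (at most $\binom{n}{r}\le n^r$ choices) together with the injective placement of $R$ into $[n]$ (at most $n^r$ choices), with the remaining positions forced by $S$. The paper gives no proof of its own and simply cites Woeginger, whose bound rests on this same count: choose the relocated jobs, then their positions.

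Your closing caveat about the constant depending on $m$ is unnecessary. For $n\ge 2$ the geometric sum already gives $\sum_{r=0}^m n^{2r}\le \frac{n^2}{n^2-1}\,n^{2m}\le \frac{4}{3}\,n^{2m}$, which is uniform in $m$.
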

    
\paragraph{A due-date shifting transformation.}
    
    Fix an optimal schedule $S$ and let $S[k]=\{i_1,\dots,i_k\}$ denote the canonical top-$k$
    set of $S$ (with ties in lateness broken by $\prec_p$), ordered so that
    $L_{i_1}(S)\ge \cdots \ge L_{i_k}(S)$.
    We define a modified instance (same processing times) by shifting due dates:
    \begin{equation}
    \label{eq:shift}
    \tilde d_j =
    \begin{cases}
    d_j + L_j(S) & \text{if } j\in S[k],\\
    d_j + L_{i_k}(S) & \text{if } j\notin S[k].
    \end{cases}
    \end{equation}
    Note that, unlike tardiness, lateness values may be negative, so the shifts in~\eqref{eq:shift}
    may increase, decrease, or preserve due dates.
    
    Let $\tilde L_j(S)=C_j(S)-\tilde d_j$ and let $\tilde\lambda_k(S)$ be the
    top-$k$ sum of the modified instance.
    
    \begin{lemma}
    \label{lem:shift-properties}
    For the schedule $S$ used to define~\eqref{eq:shift}, we have:
    \begin{enumerate}[leftmargin=2.2em]
    \item $\tilde L_j(S)=0$ for all $j\in S[k]$,
    \item $\tilde L_j(S)\le 0$ for all $j\notin S[k]$,
    \item $\tilde\lambda_k(S)=0$.
    \end{enumerate}
    \end{lemma}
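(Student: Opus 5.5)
The plan is to compute the modified lateness values directly from the definition~\eqref{eq:shift}. The schedule $S$, and hence every completion time $C_j(S)$, stays the same. Only the due dates change, so for every job $j$ we have $\tilde L_j(S)=C_j(S)-\tilde d_j$, and this can be rewritten in terms of the original lateness $L_j(S)=C_j(S)-d_j$.

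For part~1, take $j\in S[k]$. Then $\tilde d_j=d_j+L_j(S)$, so
\[
\tilde L_j(S)=C_j(S)-d_j-L_j(S)=L_j(S)-L_j(S)=0 .
\]

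For part~2, take $j\notin S[k]$. Then $\tilde L_j(S)=L_j(S)-L_{i_k}(S)$. The set $S[k]$ is a top-$k$ set for $S$, so every job outside it satisfies $L_j(S)\le\min_{x\in S[k]}L_x(S)$. Because of the chosen ordering $L_{i_1}(S)\ge\cdots\ge L_{i_k}(S)$, this minimum is exactly $L_{i_k}(S)$. Hence $\tilde L_j(S)\le 0$.

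For part~3, parts~1 and~2 say that the modified lateness vector of $S$ has $k$ coordinates equal to $0$ and all remaining coordinates $\le 0$. So the $k$ largest modified lateness values are all $0$, and therefore $\tilde\lambda_k(S)=0$. Equivalently, $S[k]$ is a top-$k$ set for $S$ in the modified instance, and summing its modified lateness values gives $0$.

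No step here is a real obstacle. The only point needing care is the identity $\min_{x\in S[k]}L_x(S)=L_{i_k}(S)$, which follows immediately from how the $i_t$ are ordered. The tie-breaking used to pick the canonical set $S[k]$ plays no role, since any top-$k$ set would give the same three conclusions.
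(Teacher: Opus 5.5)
Your proof is correct and follows the same route as the paper: compute $\tilde L_j(S)=C_j(S)-\tilde d_j$ directly from~\eqref{eq:shift}, use that $L_{i_k}(S)=\min_{x\in S[k]}L_x(S)$ bounds every lateness outside $S[k]$, and read off $\tilde\lambda_k(S)=0$. Your explanation of part~3 is slightly more explicit than the paper's: exactly $k$ coordinates equal $0$ and all others are $\le 0$.
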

    
    \begin{proof}
    For $j\in S[k]$, by definition $\tilde d_j=d_j+L_j(S)$, hence
    $\tilde L_j(S)=C_j(S)-\tilde d_j=0$.
    For $j\notin S[k]$, we have $\tilde d_j=d_j+L_{i_k}(S)$, so
    $\tilde L_j(S)=L_j(S)-L_{i_k}(S)\le 0$ because $L_{i_k}(S)$ is the $k$-th
    largest lateness in $S$.
    The third claim follows immediately.
    \end{proof}
    
\paragraph{Proximity to \EDD{}.}
    
    \begin{theorem}[Proximity to \EDD{}]
    \label{thm:edd-proximity}
    For every optimal schedule $S$, there exists an optimal schedule $S^\star$ such that
    $S^\star$ is $(k-1)$-adjacent to the \EDD{} schedule for the original instance.
    \end{theorem}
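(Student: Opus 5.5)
The plan is to adapt Woeginger's proximity argument, using the due-date shift~\eqref{eq:shift}. Start from an optimal schedule $S$ and its canonical top-$k$ set $S[k]=\{i_1,\dots,i_k\}$. Pass to the modified instance with due dates $\tilde d_j$. By Lemma~\ref{lem:shift-properties}, $\max_j \tilde L_j(S)=0$. Let $S^\star$ be the \EDD{} schedule for the modified instance, sorting by $\tilde d_j$ and breaking ties by $\prec_p$. By Jackson's rule, $S^\star$ minimizes maximum lateness in the modified instance. Hence
\[
\tilde L_j(S^\star)\le \max_h \tilde L_h(S^\star)\le \max_h \tilde L_h(S)=0\qquad\text{for all } j .
\]

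Next I translate this back to the original instance. Since $L_j(S^\star)=\tilde L_j(S^\star)+(\tilde d_j-d_j)$, we get $L_j(S^\star)\le L_j(S)$ for $j\in S[k]$ and $L_j(S^\star)\le L_{i_k}(S)$ for $j\notin S[k]$. So the lateness vector of $S^\star$ is coordinatewise at most the vector $v$ with $v_j=L_j(S)$ on $S[k]$ and $v_j=L_{i_k}(S)$ elsewhere. The $k$ largest entries of $v$ are exactly the values $L_j(S)$ for $j\in S[k]$, so its top-$k$ sum is $\lambda_k(S)$. Because $s_k$ is monotone coordinatewise, $\lambda_k(S^\star)\le\lambda_k(S)$, and therefore $S^\star$ is optimal.

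For adjacency, let $R=S[k]\setminus\{i_k\}$, so $|R|=k-1$. Every job outside $R$, including $i_k$, satisfies $\tilde d_j=d_j+L_{i_k}(S)$, a common additive constant. Thus for any two jobs outside $R$, the comparison of $(\tilde d,\prec_p)$ coincides with the comparison of $(d,\prec_p)$. The restriction of $S^\star$ to $\Jobs\setminus R$ is therefore exactly the restriction of the original \EDD{} schedule. By definition, this makes $S^\star$ $(k-1)$-adjacent to \EDD{}.

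The main point needing care is this last step. It requires that both \EDD{} orders use the same tie-breaking rule $\prec_p$, so that equal shifted due dates among the common-shift jobs are resolved identically. It also relies on the observation that $i_k$ receives the same shift as the non-top jobs, which is what saves one relocation and gives $k-1$ rather than $k$. Ties in $\tilde d$ between a job of $R$ and a job outside $R$ are harmless, because jobs of $R$ are allowed to be relocated.
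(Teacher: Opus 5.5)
Your proof is correct and follows the paper's argument: shift the due dates via \eqref{eq:shift}, take the \EDD{} order for $\tilde d$ (with the same $\prec_p$ tie-breaking), apply Jackson's rule to get $\tilde L_j\le 0$ for every job, and note that all jobs outside $R=S[k]\setminus\{i_k\}$ receive the common shift $L_{i_k}(S)$. The only difference is cosmetic. You bound the objective by coordinatewise domination with the vector $v$ and monotonicity of the top-$k$ sum, whereas the paper sums over the canonical top-$k$ set of the shifted \EDD{} schedule and uses the counting identity $|S[k]\setminus\widetilde E[k]|=|\widetilde E[k]\setminus S[k]|$; both establish $\sum_{j\in\widetilde E[k]}v_j\le s_k(v)=\lambda_k(S)$.
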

    
    \begin{proof}
    Consider the shifted instance defined by~\eqref{eq:shift}.
    By Lemma~\ref{lem:shift-properties}, the maximum shifted lateness of $S$ is at most $0$.
    Let $E$ be the \EDD{} schedule for the original due dates $d$, and let $\widetilde E$ be
    the \EDD{} schedule for the shifted due dates $\tilde d$, using the \EDD{} tie-breaking
    rule fixed in Section~\ref{sec:prelim} in both cases.
    Since \EDD{} is optimal for maximum lateness \cite{Jackson1955}, we have
    $\max_j \tilde L_j(\widetilde E)\le 0$, and hence $\tilde L_j(\widetilde E)\le 0$ for every job $j$.
    
    We first compare the objective values in the original and shifted instances.
    Let $\widetilde E[k]$ denote the canonical top-$k$ set of $\widetilde E$ with respect to the
    original lateness values (breaking ties by $\prec_p$).
    Since completion times are unchanged by the due-date shift, for every job $j$ and every
    schedule $Q$, $L_j(Q)=\tilde L_j(Q)+(\tilde d_j-d_j)$.
    Therefore
    \begin{align*}
    \lambda_k(\widetilde E)
    &=\sum_{j\in \widetilde E[k]} L_j(\widetilde E)\\
    &=\sum_{j\in \widetilde E[k]} \tilde L_j(\widetilde E)
      + \sum_{j\in \widetilde E[k]}(\tilde d_j-d_j)\\
    &\le \sum_{j\in \widetilde E[k]}(\tilde d_j-d_j).
    \end{align*}
    By the definition of the shifted due dates,
    \begin{align*}
    \sum_{j\in \widetilde E[k]}(\tilde d_j-d_j)
    &=
    \sum_{j\in \widetilde E[k]\cap S[k]} L_j(S)
    +\sum_{j\in \widetilde E[k]\setminus S[k]} L_{i_k}(S)\\
    &=
    \sum_{j\in \widetilde E[k]\cap S[k]} L_j(S)
    + |S[k]\setminus \widetilde E[k]|\,L_{i_k}(S)\\
    &\le
    \sum_{j\in \widetilde E[k]\cap S[k]} L_j(S)
    + \sum_{j\in S[k]\setminus \widetilde E[k]} L_j(S)\\
    &=\lambda_k(S),
    \end{align*}
    where the second equality follows since
    $|S[k]\setminus \widetilde E[k]|=|\widetilde E[k]\setminus S[k]|$.
    The inequality holds because $i_k$ has minimum lateness among the jobs of $S[k]$.
    Thus $\lambda_k(\widetilde E)\le \lambda_k(S)$.
    
    It remains to compare $E$ and $\widetilde E$.
    Let $R=S[k]\setminus\{i_k\}$.
    Every job outside $R$ receives the same due-date shift, namely $L_{i_k}(S)$.
    Indeed, this is true for all jobs outside $S[k]$ by~\eqref{eq:shift}, and it is also true
    for $i_k$.
    Hence, for any two jobs $a,b\notin R$, we have
    $\tilde d_a\le \tilde d_b$ if and only if $d_a\le d_b$.
    With the fixed tie-breaking order, the restriction of $E$ to $\Jobs\setminus R$ is the same
    as the restriction of $\widetilde E$ to $\Jobs\setminus R$.
    Therefore, $\widetilde E$ is obtained from $E$ by relocating only the jobs in $R$.
    Since $|R|\le k-1$, $\widetilde E$ is $(k-1)$-adjacent to $E$. Setting
    $S^\star=\widetilde E$ proves the theorem.
    \end{proof}
\subsubsection{Block--island decomposition}
    
    Fix an optimal increasing schedule $S$ in the perturbed instance, and fix a
    top-$k$ set $X$ for $S$. For each due-date value $\delta$, define
    \[
    B_\delta(S,X)=\{j\in \Jobs\setminus X : d_j=\delta\}.
    \]
    We call the sets $B_\delta(S,X)$ the due-date blocks outside $X$. The maximal
    subsequences of jobs from $X$ between consecutive due-date blocks are called
    islands.
    Thus these blocks partition the jobs outside the fixed top-$k$ set $X$ according to
    their due date.
    
    \begin{lemma}[Strict block exchange]
    \label{lem:strict-block-exchange}
    Let $S$ be increasing, and let $X$ be a top-$k$ set for $S$.
    Suppose that $i,j\in \Jobs\setminus X$, that $i$ appears before $j$, that $i$ and $j$ are
    consecutive in the subsequence induced by $\Jobs\setminus X$, and that $d_i>d_j$.
    Let $S'$ be obtained from $S$ by removing $i$ and inserting it immediately after $j$.
    Then, in the perturbed instance, $\lambda_k(S')\le \lambda_k(S)$, the schedule $S'$ is
    increasing, and $\sum_h L_h(S')>\sum_h L_h(S)$.
    \end{lemma}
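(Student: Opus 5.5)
The plan is to track the lateness vector explicitly under the move and use the increasing property together with the top-$k$ property of $X$. Let $H$ be the set of jobs strictly between $i$ and $j$ in $S$. Since $i$ and $j$ are consecutive in the subsequence induced by $\Jobs\setminus X$, we have $H\subseteq X$. Moving $i$ to just after $j$ changes the lateness values as follows:
\[
L_i(S')=L_i(S)+p(H)+p_j,\qquad L_j(S')=L_j(S)-p_i,\qquad L_h(S')=L_h(S)-p_i\ \ (h\in H).
\]
All other lateness values are unchanged. Note that $L_i(S')=C_j(S)-d_i<C_j(S)-d_j=L_j(S)$.

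\emph{Step 1 (order relations).} In the perturbed instance all processing times are distinct and positive.
\begin{itemize}
\item Since $S$ is increasing, $i$ precedes $j$, and $d_j<d_i$, job $j$ cannot be dominated by $i$. Hence $i\prec_p j$, that is, $p_i<p_j$.
\item Next, for $h\in H\subseteq X$ we have $L_h(S)\ge L_j(S)$ because $X$ is a top-$k$ set and $j\notin X$. Since $C_h(S)<C_j(S)$ (positive perturbed processing times), this gives
\[
d_h\le d_j-(C_j(S)-C_h(S))<d_j<d_i.
\]
\item Consequently, if $p_h<p_i$, then $h$ would be dominated by $i$. Yet $h$ appears after $i$, contradicting that $S$ is increasing. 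So $p_h>p_i$ for every $h\in H$.
\end{itemize}

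\emph{Step 2 (total lateness strictly increases).} Summing the changes gives
\[
\sum_h L_h(S')-\sum_h L_h(S)=\sum_{h\in H}(p_h-p_i)+(p_j-p_i).
\]
Every term is positive by Step 1.

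\emph{Step 3 (objective does not increase).} Let $a$ and $c$ be the lateness vectors of $S$ and $S'$. Define $b$ to agree with $c$ on $i,j$ and with $a$ elsewhere. Then $c\le b$ coordinatewise, since only $H$-values differ and they decreased. Hence $s_k(c)\le s_k(b)$. In $b$, the coordinates indexed by $X$ equal those of $a$. The coordinates outside $X$ are $a$-values outside $X$, together with $b_j<a_j$ and $b_i<a_j$. All of these are at most $\min_{x\in X}a_x$. So $X$ remains a top-$k$ set for $b$, and $s_k(b)=\sum_{x\in X}a_x=\lambda_k(S)$.

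\emph{Step 4 (S' is increasing).} The only pairs whose relative order changes are $(i,j)$ and $(i,h)$ for $h\in H$.
\begin{itemize}
\item For $(i,j)$, which now appear in the order $j$ then $i$: $j$ does not dominate $i$ because $d_i>d_j$, and $i$ does not dominate $j$ because $p_i<p_j$.
\item For $(h,i)$, which now appear in the order $h$ then $i$: $i$ is not dominated by $h$ because $d_i>d_h$, and $h$ is not dominated by $i$ because $p_h>p_i$.
\end{itemize}
So no dominance constraint is violated.

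The main obstacle is Step 2. A naive count fails when some $p_h<p_i$. The key is to combine the top-$k$ inequality $L_h\ge L_j$ with the increasing property, which forces $d_h<d_i$ and hence rules out $p_h<p_i$. The same fact $d_h<d_i$ is also what makes Step 4 work.
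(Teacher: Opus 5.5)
Your proof is correct and follows essentially the same route as the paper. It uses the same lateness bookkeeping, derives $d_h<d_i$ for the intermediate top-$k$ jobs from $L_h(S)\ge L_j(S)$ together with positive processing times, and then obtains $p_h>p_i$ from the increasing property, which yields both the strict total-lateness increase and the preservation of the increasing property. The only minor difference is in the objective step: you dominate the new lateness vector by an auxiliary vector $b$ for which $X$ stays a top-$k$ set, as in the paper's dominated-pair exchange. The paper instead picks a top-$k$ set of $S'$ avoiding $j$ and splits on whether it contains $i$; both arguments are valid.
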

    
    \begin{proof}
    All processing times in this proof are the perturbed processing times. In
    particular, every job has strictly positive processing time, and the \SPT{}
    order is strict.
    Let $H$ be the set of jobs after $i$ and up to and including $j$ in $S$.
    Since $i$ and $j$ are consecutive in $\Jobs\setminus X$, every job in $H\setminus\{j\}$
    belongs to $X$.
    The reinsertion changes lateness values as follows:
    \[
    L_h(S')=L_h(S)-p_i \qquad \forall\, h\in H,\qquad
    L_q(S')=L_q(S) \qquad \forall\, q\notin H\cup\{i\},
    \]
    and
    \[
    L_i(S')=C_j(S)-d_i<C_j(S)-d_j=L_j(S).
    \]
    
\emph{Objective value.}
We first compare top-$k$ sums.
    If $h\ne i$ and $L_h(S)\ge L_j(S)$, then $L_h(S')\ge L_j(S')$.
    This is immediate if $h\in H$, since both sides decrease by $p_i$, and otherwise
    $L_h(S')=L_h(S)\ge L_j(S)>L_j(S')$.
    In particular, the $k$ jobs of the fixed top-$k$ set $X$ have lateness at
    least $L_j(S')$ in $S'$. Therefore there exists a top-$k$ set for $S'$ that
    does not contain $j$. Indeed, if a top-$k$ set $K_0$ selected $j$, then
    $K_0$ cannot contain all jobs of $X$, because $|X|=k$ and $j\notin X$.
    Choose $x\in X\setminus K_0$. Since $L_x(S')\ge L_j(S')$, replacing $j$ by
    $x$ gives another top-$k$ set for $S'$. Fix such a set $K$ with
    $j\notin K$.
    If $i\notin K$, then
    \[
    \lambda_k(S')=\sum_{h\in K}L_h(S')\le \sum_{h\in K}L_h(S)\le \lambda_k(S).
    \]
    If $i\in K$, then
    \[
    \lambda_k(S')
    \le \sum_{h\in K\setminus\{i\}}L_h(S)+L_j(S)
    \le \lambda_k(S),
    \]
    because $(K\setminus\{i\})\cup\{j\}$ has size $k$.
    
\emph{Increasing property.}
It remains to verify that the reinsertion preserves the increasing property.
    For every $h\in H$, we must have $d_h<d_i$.
    This is immediate for $h=j$.
    If $h\in H\setminus\{j\}$ and $d_h\ge d_i$, then
    $L_h(S)=C_h(S)-d_h<C_j(S)-d_h\le C_j(S)-d_j=L_j(S)$,
    contradicting $h\in X$ and $j\notin X$.
    Since $S$ is increasing and $i$ appears before every job in $H$, we also have
    $p_h>p_i$ for every $h\in H$.
    Indeed, if $p_h<p_i$, then $h$ is dominated by $i$, and hence $h$ would have to appear
    before $i$ in $S$.
    Thus the reinsertion moves $i$ across jobs that are incomparable with it, and no dominance
    relation is violated.
    All other relative orders are unchanged, so $S'$ is increasing.
    
    Finally,
    \[
    \sum_h L_h(S')-\sum_h L_h(S)
    =\sum_{h\in H}p_h-|H|p_i>0,
    \]
    because $p_h>p_i$ for every $h\in H$.
    \end{proof}
    
    \begin{lemma}[Equal-deadline block exchange]
    \label{lem:equal-deadline-block-exchange}
    Let $S$ be increasing, and let $X$ be a top-$k$ set for $S$.
    Suppose that $i,j\in \Jobs\setminus X$, that $i$ appears before $j$, that $i$ and $j$ are
    consecutive in the subsequence induced by $\Jobs\setminus X$, that $d_i=d_j$, and that
    $i$ and $j$ are not consecutive in $S$.
    Let $S'$ be obtained from $S$ by removing $i$ and inserting it immediately before $j$.
    Then, in the perturbed instance, $\lambda_k(S')\le \lambda_k(S)$, the schedule $S'$ is
    increasing, and $\sum_h L_h(S')>\sum_h L_h(S)$.
    \end{lemma}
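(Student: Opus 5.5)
The plan mirrors the proof of Lemma~\ref{lem:strict-block-exchange}. All quantities are taken in the perturbed instance, so processing times are strictly positive and the \SPT{} order is strict.

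\emph{Effect of the move.} Let $H$ be the set of jobs strictly between $i$ and $j$ in $S$. Since $i,j$ are consecutive in $\Jobs\setminus X$ but not in $S$, $H$ is nonempty and $H\subseteq X$. Moving $i$ to immediately before $j$ changes lateness values as follows:
\begin{itemize}
\item $L_h(S')=L_h(S)-p_i$ for $h\in H$;
\item $L_j(S')=L_j(S)$, and every job outside $H\cup\{i\}$ is unchanged;
\item $L_i(S')=C_j(S)-p_j-d_i=L_j(S)-p_j<L_j(S)$, using $d_i=d_j$ and $p_j>0$.
\end{itemize}
Consequently
\[
\sum_h L_h(S')-\sum_h L_h(S)=p(H)-|H|p_i .
\]

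\emph{Increasing property and total lateness.} Take $h\in H$. If $d_h\ge d_i=d_j$, then
\[
L_h(S)=C_h(S)-d_h<C_j(S)-d_j=L_j(S),
\]
because $C_h(S)<C_j(S)$ by positivity of $p_j$. This contradicts $h\in X$, $j\notin X$, since $X$ is a top-$k$ set. Hence $d_h<d_i$. If also $p_h<p_i$, then $h$ would be dominated by $i$, and since $S$ is increasing, $h$ would precede $i$; this is false. Thus $p_h>p_i$ for every $h\in H$. So $i$ only crosses jobs incomparable with it, all other relative orders (including that of $i$ and $j$) are kept, and $S'$ is increasing. Moreover the displayed difference is strictly positive because $H\ne\emptyset$.

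\emph{Objective.} Fix a top-$k$ set $K$ for $S'$; the argument splits into three cases.
\begin{itemize}
\item If $i\notin K$: every job of $K$ has $L_h(S')\le L_h(S)$, so $\lambda_k(S')\le\sum_{h\in K}L_h(S)\le\lambda_k(S)$.
\item If $i\in K$ and $j\notin K$: replace $i$ by $j$, using $L_i(S')<L_j(S)$. Then $\lambda_k(S')\le\sum_{h\in (K\setminus\{i\})\cup\{j\}}L_h(S)\le\lambda_k(S)$.
\item If $i,j\in K$: this is the step I expect to need care. Since $|K|=|X|=k$ and $j\in K\setminus X$, there is some $x\in X\setminus K$. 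Because $X$ is a top-$k$ set for $S$ and $j\notin X$, we have $L_x(S)\ge L_j(S)>L_i(S')$. Therefore
\[
\lambda_k(S')\le\sum_{h\in (K\setminus\{i\})\cup\{x\}}L_h(S)\le\lambda_k(S),
\]
again using $L_h(S')\le L_h(S)$ for $h\ne i$.
\end{itemize}
In all cases $\lambda_k(S')\le\lambda_k(S)$, which completes the plan.
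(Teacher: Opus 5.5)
Your proposal is correct. Your lateness bookkeeping, your proof that $S'$ is increasing, and your total-lateness computation coincide with the paper's. The one genuine difference is in the objective bound.

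The paper first derives $p_i<p_j$ from the increasing property and $d_i=d_j$. It uses this to show $L_h(S')>L_i(S')$ for every $h\in X$; for $h\in H$ this goes via $L_h(S)-p_i\ge L_j(S)-p_i>L_j(S)-p_j$. It then concludes that some top-$k$ set of $S'$ omits $i$, and finishes with $L_h(S')\le L_h(S)$ for $h\neq i$.

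You instead fix an arbitrary top-$k$ set $K$ of $S'$. When $i\in K$, you trade $i$ for a job whose \emph{old} lateness dominates $L_i(S')=L_j(S)-p_j$, namely $j$ or some $x\in X\setminus K$. This never uses $p_i<p_j$ or the increasing hypothesis, and with $\le$ in place of $<$ it even tolerates $p_j=0$. So your objective bound holds for arbitrary schedules. The paper's route instead yields the extra fact that $i$ can be excluded from a top-$k$ set of $S'$, though this is not used later.

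Finally, your middle case is redundant. If $i\in K$, then $K\not\subseteq X$, so $X\setminus K\neq\emptyset$, and the substitution by $x$ works whether or not $j\in K$.
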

    
    \begin{proof}
    All processing times in this proof are the perturbed processing times. In
    particular, every job has strictly positive processing time, and the \SPT{}
    order is strict.
    Let $B$ be the set of jobs strictly between $i$ and $j$ in $S$.
    Since $i$ and $j$ are consecutive in $\Jobs\setminus X$, we have $B\subseteq X$.
    Also $B$ is non-empty, because $i$ and $j$ are not consecutive in $S$.
    The reinsertion changes lateness values as follows:
    \[
    L_h(S')=L_h(S)-p_i \qquad \forall\, h\in B,\qquad
    L_q(S')=L_q(S) \qquad \forall\, q\notin B\cup\{i\},
    \]
    and
    \[
    L_i(S')=C_j(S)-p_j-d_i=L_j(S)-p_j.
    \]
    The completion time of $j$ is unchanged by this reinsertion.
    
    Since $S$ is increasing and $d_i=d_j$, we have $p_i<p_j$.
    Indeed, if $p_j<p_i$, then $j$ would be dominated by $i$, contradicting the order
    $i<_S j$ in an increasing schedule.
    For every $h\in B$, the inequality $h\in X$ and $j\notin X$ gives
    $L_h(S)\ge L_j(S)$.
    Hence
    \[
    L_h(S')=L_h(S)-p_i\ge L_j(S)-p_i>L_j(S)-p_j=L_i(S')
    \qquad (h\in B).
    \]
    The same strict inequality $L_h(S')>L_i(S')$ holds for $h\in X\setminus B$, since then
    $L_h(S')=L_h(S)\ge L_j(S)>L_i(S')$.
    Thus the $k$ jobs of the fixed top-$k$ set $X$ all have lateness strictly
    larger than $L_i(S')$ in $S'$. Therefore there exists a top-$k$ set $K$ for
    $S'$ with $i\notin K$. Indeed, if a top-$k$ set $K_0$ selected $i$, then
    $K_0$ cannot contain all jobs of $X$, because $|X|=k$ and $i\notin X$.
    Choose $x\in X\setminus K_0$. Since $L_x(S')>L_i(S')$, replacing $i$ by
    $x$ gives another top-$k$ set for $S'$. For such a set $K$,
    \[
    \lambda_k(S')=\sum_{h\in K}L_h(S')\le \sum_{h\in K}L_h(S)\le \lambda_k(S).
    \]
    
    It remains to verify that $S'$ is increasing.
    For every $h\in B$, we must have $d_h<d_i$.
    Indeed, if $d_h\ge d_i=d_j$, then
    $L_h(S)=C_h(S)-d_h<C_j(S)-d_h\le C_j(S)-d_j=L_j(S)$,
    contradicting $h\in X$ and $j\notin X$.
    Since $S$ is increasing and $i$ appears before every job in $B$, we also have
    $p_h>p_i$ for every $h\in B$.
    Indeed, if $p_h<p_i$, then $h$ would be dominated by $i$, and hence $h$ would have to
    appear before $i$ in $S$.
    Thus $i$ is incomparable with every job in $B$.
    The reinsertion moves $i$ only across the jobs of $B$, and $i$ remains before $j$.
    All other relative orders are unchanged, so $S'$ is increasing.
    
    Finally,
    \[
    \sum_h L_h(S')-\sum_h L_h(S)
    =\sum_{h\in B}p_h-|B|p_i>0,
    \]
    because $B$ is non-empty and $p_h>p_i$ for every $h\in B$.
    \end{proof}
    
    A key structural result shows that we may assume the blocks outside $X$ occupy consecutive
    positions in the schedule, and that these blocks follow due-date order.

    Recall that for a due date value $\delta$ we write
    $B_\delta(S,X)=\{j\in \Jobs\setminus X : d_j=\delta\}$.
    
    \begin{theorem}[Consecutive increasing blocks]
    \label{thm:consecutive-blocks}
    There exists an optimal increasing schedule $S$ and a top-$k$ set $X$ for $S$ such that:
    \begin{enumerate}[leftmargin=2.2em]
    \item For each due-date value $\delta$, the jobs in $B_\delta(S,X)$ occupy consecutive positions in $S$.
    \item If two blocks $B_{\delta}(S,X)$ and $B_{\delta'}(S,X)$ are non-empty and $B_{\delta}(S,X)$ is to the left
    of $B_{\delta'}(S,X)$, then $\delta<\delta'$.
    \end{enumerate}
    \end{theorem}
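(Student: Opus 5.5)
We argue by an extremal choice in the perturbed instance of Lemma~\ref{lem:symbolic-perturbation}, in the same spirit as the proof of Theorem~\ref{thm:increasing-opt}. By Theorem~\ref{thm:increasing-opt}, the set of pairs $(S,X)$ with $S$ an optimal increasing schedule (for the perturbed instance) and $X$ a top-$k$ set for $S$ is nonempty and finite. Among all such pairs we choose one maximizing the total lateness $\sum_h L_h(S)$. Note the direction: the block exchanges \emph{increase} total lateness, which is the opposite of the choice in Theorem~\ref{thm:increasing-opt}. This is harmless, because we now restrict to increasing optimal schedules and only use moves that preserve that property.

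We show that this extremal pair satisfies both properties. Consider the subsequence of $S$ induced by $\Jobs\setminus X$.

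\emph{Property 2.} Suppose two jobs $i,j\notin X$ are consecutive in this subsequence, $i$ is before $j$, and $d_i>d_j$. Then Lemma~\ref{lem:strict-block-exchange} produces $S'$ that is increasing, satisfies $\lambda_k(S')\le\lambda_k(S)$ (so $S'$ is optimal), and has strictly larger total lateness.

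\emph{Property 1.} Suppose $i,j\notin X$ are consecutive in the subsequence with $d_i=d_j$ but are not adjacent in $S$. Then Lemma~\ref{lem:equal-deadline-block-exchange} gives the same conclusion for its $S'$.

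In either case we also need a top-$k$ set $X'$ for $S'$, so that $(S',X')$ is a competitor contradicting maximality. The natural candidate is $X'=X$. In both lemmas every job of $X$ either keeps its lateness or loses exactly $p_i$, while the non-$X$ jobs shift accordingly. I expect verifying that $X$ remains a top-$k$ set for $S'$ to be the main obstacle.

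\emph{Equal-deadline case.} The proof of Lemma~\ref{lem:equal-deadline-block-exchange} already shows that $L_x(S')>L_i(S')$ for all $x\in X$. The jobs of $B$ drop by $p_i$. A non-$X$ job $q$ outside $B\cup\{i\}$ keeps its lateness and satisfies $L_q(S)\le L_h(S)$ for $h\in B$. This can fail by up to $p_i$. So instead of insisting on $X'=X$, I take $X'$ to be any top-$k$ set of $S'$ and argue via the identity $\lambda_k(S')=\lambda_k(S)$. The competitor must then be re-examined with respect to $X'$. Since $\lambda_k$ is what matters and the extremal quantity is only $\sum_h L_h$, which depends on $S'$ alone, any top-$k$ set $X'$ works for the maximality contradiction.

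\emph{Conclusion.} It therefore suffices to define the extremal choice over schedules only: pick an optimal increasing $S$ maximizing $\sum_h L_h(S)$, and then pick any top-$k$ set $X$ for $S$. If either property fails for $(S,X)$, the applicable lemma yields an optimal increasing $S'$ with larger total lateness. This contradicts the choice of $S$, regardless of which top-$k$ set $S'$ has.

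Failure of Property 2 for nonconsecutive blocks reduces to a consecutive violating pair in the subsequence. Failure of Property 1, given Property 2, means two jobs with equal due date are consecutive in the subsequence but separated in $S$. Finally, we transfer $S$ and $X$ back to the original instance via parts 2 and 3 of Lemma~\ref{lem:symbolic-perturbation}. Property 2 then holds with the original due dates, which are unchanged by the perturbation, and the increasing property is phrased via $\prec_p$, which matches the perturbed processing-time order.
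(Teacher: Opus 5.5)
Your proposal is correct and follows essentially the same route as the paper's proof. Both make an extremal choice that maximizes perturbed total lateness among optimal increasing schedules, use the strict and equal-deadline block-exchange lemmas to derive contradictions, and note that any top-$k$ set $X'$ for $S'$ suffices because the extremal quantity depends only on the schedule, so your detour about whether $X$ itself remains a top-$k$ set is unnecessary, as you yourself conclude.
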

    
    \begin{proof}
    In the perturbed instance, there exists an optimal increasing schedule by the exchange
    argument in Theorem~\ref{thm:increasing-opt}. Among all pairs consisting of a perturbed
    optimal increasing schedule and a top-$k$ set for it, choose one, say $(S,X)$, that
    maximizes the perturbed total lateness $\sum_h L_h(S)$.
    
    We first show that the subsequence induced by $\Jobs\setminus X$ is ordered by nondecreasing
    due dates.
    If not, there are jobs $i,j\in \Jobs\setminus X$ that are consecutive in this subsequence,
    with $i$ before $j$ and $d_i>d_j$.

    Let $S'$ be obtained by removing $i$ and inserting it immediately after $j$.
    By Lemma~\ref{lem:strict-block-exchange}, $S'$ is increasing and satisfies
    $\lambda_k(S')\le \lambda_k(S)$, so $S'$ is also optimal. Choose any top-$k$
    set $X'$ for $S'$ in the perturbed instance. Then $(S',X')$ is one of the
    pairs over which the maximum was taken. The same lemma gives
    $\sum_h L_h(S')>\sum_h L_h(S)$, contradicting the choice of $(S,X)$.

    It remains to show that each block $B_{\delta}(S,X)$ is consecutive.
    Since the jobs in $\Jobs\setminus X$ are ordered by due date, if the jobs of some due-date
    class are not consecutive in $S$, then there are jobs $i,j\in \Jobs\setminus X$ that are
    consecutive in the subsequence induced by $\Jobs\setminus X$, have $d_i=d_j$, and are not
    consecutive in $S$.

    Lemma~\ref{lem:equal-deadline-block-exchange} gives an increasing schedule
    $S'$ with $\lambda_k(S')\le\lambda_k(S)$ and
    $\sum_h L_h(S')>\sum_h L_h(S)$. Hence $S'$ is optimal. Choosing any top-$k$
    set $X'$ for $S'$ gives a feasible pair $(S',X')$ with larger total lateness
    than $(S,X)$, another contradiction.

    Therefore all blocks $B_{\delta}(S,X)$ are consecutive and appear in increasing due-date
    order in the perturbed instance. By Lemma~\ref{lem:symbolic-perturbation}, the schedule is
    optimal for the original instance, and $X$ is also a top-$k$ set for the original lateness
    vector.
    \end{proof}
    
    \begin{theorem}[Block-island decomposition]
\label{thm:block-island}
There exists an optimal schedule $S$ for the original instance and a top-$k$
set $X$ for $S$ with the following properties.

\begin{enumerate}
\item $S$ is increasing with respect to the lexicographic \SPT{} order fixed in
Section~\ref{sec:prelim}.
\item Let $\delta_1<\cdots<\delta_D$ be the distinct due dates. Then $S$ can be
written as
\[
S = I_0 \mid B_1 \mid I_1 \mid B_2 \mid I_2 \mid \cdots \mid B_D \mid I_D,
\]
where, for each $i\in[D]$, the block $B_i$ consists exactly of the jobs
\[
B_i=\{j\in \Jobs\setminus X:d_j=\delta_i\}.
\]
Each island $I_i$ consists only of jobs from $X$ and may contain jobs with
different due dates. A block $B_i$ or island $I_i$ may be empty.
\item For each due-date class $J_i=\{j\in \Jobs:d_j=\delta_i\}$, sorted by the
lexicographic \SPT{} order, the set $J_i\setminus X$ is a prefix of $J_i$, and
$X\cap J_i$ is the corresponding suffix.
\end{enumerate}
\end{theorem}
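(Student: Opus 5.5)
We start from the pair $(S,X)$ supplied by Theorem~\ref{thm:consecutive-blocks}, taking it in the perturbed instance of Lemma~\ref{lem:symbolic-perturbation} and maximizing $\sum_h L_h(S)$ over all pairs of an optimal increasing schedule and a top-$k$ set for it. Properties~1 and~2 then follow almost directly. $S$ is increasing in the perturbed instance, whose strict processing-time order is exactly the lexicographic \SPT{} order $\prec_p$. Dominance is therefore defined identically in both instances, and $S$ is increasing in the sense required by property~1. Theorem~\ref{thm:consecutive-blocks} says each nonempty $B_\delta(S,X)$ is consecutive and that the blocks appear in increasing due-date order. Writing the maximal runs of $X$-jobs between consecutive (possibly empty) blocks as islands gives the form $I_0\mid B_1\mid I_1\mid\cdots\mid B_D\mid I_D$. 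Lemma~\ref{lem:symbolic-perturbation} transfers optimality of $S$ and the top-$k$ property of $X$ back to the original instance.

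The remaining task is property~3. Suppose it fails for some class $J_i$. Then there are jobs $a\in X$ and $b\notin X$ with $d_a=d_b=\delta_i$ and $a\prec_p b$. Since $a$ is dominated by $b$ and $S$ is increasing, $a$ precedes $b$ in $S$, so $C_a(S)<C_b(S)$ because perturbed processing times are positive. Hence $L_a(S)=C_a-\delta_i<C_b-\delta_i=L_b(S)$. This contradicts $a\in X$, $b\notin X$, since a top-$k$ set requires $L_a(S)\ge L_b(S)$. So within each class the non-$X$ jobs are $\prec_p$-smaller than the $X$-jobs, which gives the prefix/suffix statement. The comparison uses only the strict order $\prec_p$, so it is unaffected by transferring back to the original instance.

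I expect the main obstacle to be the bookkeeping for perturbed versus original data rather than the exchange itself. Property~3 must be argued in the perturbed instance, where strict completion-time inequalities hold. Zero original processing times could otherwise produce ties in lateness that would leave membership in $X$ ambiguous. Since $X$ is fixed once and merely reinterpreted, with Lemma~\ref{lem:symbolic-perturbation}(3) guaranteeing it remains a top-$k$ set, no argument beyond this check should be needed.
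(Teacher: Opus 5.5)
Your proposal is correct and essentially matches the paper's proof. You use the same pair from Theorem~\ref{thm:consecutive-blocks} in the perturbed instance, the same construction of islands, the same transfer back via Lemma~\ref{lem:symbolic-perturbation}, and the same three ingredients for property~3 (increasing order within a due-date class, strictly positive perturbed processing times, and the top-$k$ inequality); the only difference is that you argue property~3 by contradiction from an \SPT{} inversion, whereas the paper first shows that non-$X$ jobs precede $X$-jobs in the schedule and then invokes the increasing property.
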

\begin{proof}
We use the perturbation only inside this proof. Apply
Theorem~\ref{thm:consecutive-blocks} in the perturbed instance. We obtain an
optimal increasing schedule $S$ and a top-$k$ set $X$ such that the jobs outside
$X$ form consecutive blocks grouped by due date, and these blocks appear in
increasing due-date order.

Let $\delta_1<\cdots<\delta_D$ be the distinct due dates, and set
\[
B_i=\{j\in \Jobs\setminus X:d_j=\delta_i\}
\]
for each $i\in[D]$. The maximal subsequences of jobs from $X$ before the first
block, between two consecutive blocks, and after the last block are the islands
$I_0,I_1,\ldots,I_D$. This gives the stated block--island decomposition.

It remains to prove the suffix property. Fix a due-date class $J_i$. Let
$x\in X\cap J_i$ and $y\in J_i\setminus X$. Since $X$ is a top-$k$ set for $S$,
we have $L_x(S)\ge L_y(S)$. Since $d_x=d_y$, this implies
$C_x(S)\ge C_y(S)$. In the perturbed instance all processing times are strictly
positive. Hence $x$ cannot appear before $y$, because then $C_x(S)<C_y(S)$.
Therefore every job of $J_i\setminus X$ appears before every job of $X\cap J_i$.
Since $S$ is increasing, the jobs in one due-date class appear in the
lexicographic \SPT{} order. Thus $J_i\setminus X$ is a prefix of $J_i$, and
$X\cap J_i$ is the corresponding suffix.

By Lemma~\ref{lem:symbolic-perturbation}, the same schedule $S$ is optimal for
the original instance, and the same set $X$ is a top-$k$ set for $S$ in the
original instance. The remaining properties are order properties of this same
pair $(S,X)$. The block positions are unchanged because the schedule and the set
$X$ are unchanged, and due dates are not perturbed. The suffix property is stated
with respect to the lexicographic \SPT{} order, which is exactly the strict
processing-time order of the perturbed instance. Therefore, the block--island
decomposition and the prefix/suffix property obtained in the perturbed instance
are precisely the three properties stated for the original instance.
\end{proof}

\subsection{Algorithmic consequences}
    \label{sec:xp-structural}
    
    We now give algorithmic consequences of the structural results proved above.
    The algorithms in this section are stated for the original processing times and due dates.
    The perturbation subroutine is used only in the analysis, to argue that there exists an
    optimal schedule that satisfies the required structural properties.

\subsubsection{XP algorithms for fixed \texorpdfstring{$k$}{k}}

\paragraph{A proximity-based algorithm (cf.~Woeginger).}
Theorem~\ref{thm:edd-proximity} implies that there exists an optimal schedule that is
$(k-1)$-adjacent to the \EDD{} schedule (in the sense of relocating at most $k-1$ jobs).
Thus, for fixed $k$, we may enumerate all $(k-1)$-adjacent schedules and keep the best
one. By Lemma~\ref{lem:adj-count}, there are $O(n^{2(k-1)})$ such schedules, and each
candidate schedule can be evaluated in $O(n)$ time by computing its lateness vector and
selecting the $k$ largest values. This yields an $O(n^{2k-1})$-time XP algorithm.
This is analogous in spirit to Woeginger's proximity approach for $k$-sum tardiness.

\paragraph{An improved XP algorithm for fixed \texorpdfstring{$k$}{k}.}
\label{sec:xp-k}

We give an XP algorithm for fixed $k$. The algorithm guesses the top-$k$ set
$X$ and its order in the schedule. For this guess, the set
$Y=\Jobs\setminus X$ is ordered by increasing due date, breaking ties by
$\prec_p$. The algorithm also guesses a job $x_q\in X$ with minimum
lateness inside $X$, and the number of jobs of $Y$ that appear before it.
These guesses fix a threshold.  A dynamic program then computes the best
interleaving of the two orders.

For an ordered tuple $X=(x_1,\ldots,x_k)$, let
$Y=(y_1,\ldots,y_m)$, where $m=n-k$, be the order of $\Jobs\setminus X$
defined above.  Let $P_X(0)=P_Y(0)=0$,
\[
P_X(i)=\sum_{a=1}^i p_{x_a},
\qquad
P_Y(j)=\sum_{b=1}^j p_{y_b}.
\]

Fix $q\in[k]$ and $b\in\{0,\ldots,m\}$.  The intended meaning is
that $x_q$ has minimum lateness among the jobs of $X$, and exactly $b$ jobs
of $Y$ appear before $x_q$.  Define
\[
\theta=P_X(q)+P_Y(b)-d_{x_q}.
\]
Thus $x_q$ has lateness $\theta$ whenever the two orders are preserved and
$x_q$ is placed after exactly $b$ jobs of $Y$.

For this fixed choice of $(X,q,b)$, define a table $M[i,j]$, with
$0\le i\le k$ and $0\le j\le m$.  The entry $M[i,j]$ is the minimum value of
$\sum_{a=1}^i L_{x_a}$ over all partial schedules that place exactly
$x_1,\ldots,x_i$ and $y_1,\ldots,y_j$, preserve the two orders, and satisfy
the following conditions:
each placed job of $Y$ has lateness at most $\theta$;
each placed job $x_a$ with $a\ne q$ has lateness at least $\theta$;
$x_q$, if placed, is placed after exactly $b$ jobs of $Y$; and, if $x_q$
has not yet been placed, then at most $b$ jobs of $Y$ have been placed.
If no such partial schedule exists, then $M[i,j]=+\infty$.

The table is initialized by $M[0,0]=0$ and $M[i,j]=+\infty$ for all other
entries.  It is filled by forward transitions.  From a state $(i,j)$ with
$M[i,j]<+\infty$, the current time is $P_X(i)+P_Y(j)$.

If $j<m$, we may append $y_{j+1}$, unless this would place more than $b$ jobs
of $Y$ before $x_q$.  Thus the transition is allowed only when
$i\ge q$ or $j<b$.  In that case, let
\[
\ell_Y=P_X(i)+P_Y(j+1)-d_{y_{j+1}}.
\]
If $\ell_Y\le\theta$, set
\[
M[i,j+1]\leftarrow \min\{M[i,j+1],M[i,j]\}.
\]

If $i<k$, we may append $x_{i+1}$.  Let
\[
\ell_X=P_X(i+1)+P_Y(j)-d_{x_{i+1}}.
\]
If $i+1=q$, the transition is allowed only when $j=b$.  If $i+1\ne q$, the
transition is allowed only when $\ell_X\ge\theta$.  Whenever the transition is
allowed, set
\[
M[i+1,j]\leftarrow \min\{M[i+1,j],M[i,j]+\ell_X\}.
\]

Algorithm~\ref{alg:xp-k-dp} implements these transitions.

\begin{algorithm}[htbp]
\caption{XP algorithm for $k$-sum lateness}
\label{alg:xp-k-dp}
\DontPrintSemicolon
$v_{\mathrm{best}}\leftarrow +\infty$\;
\ForEach{ordered $k$-tuple $X=(x_1,\ldots,x_k)$ of distinct jobs}{
  $Y\leftarrow \Jobs\setminus\{x_1,\ldots,x_k\}$, sorted by increasing due date, breaking ties by $\prec_p$\;
  Write $Y=(y_1,\ldots,y_m)$, where $m=n-k$\;
  Compute $P_X(i)$ for $0\le i\le k$ and $P_Y(j)$ for $0\le j\le m$\;
  \ForEach{$(q,b)$ with $q\in[k]$ and $b\in\{0,\ldots,m\}$}{
    $\theta\leftarrow P_X(q)+P_Y(b)-d_{x_q}$\;
    Set all entries of $M[0..k,0..m]$ to $+\infty$ and then update $M[0,0]\leftarrow 0$\;
    \ForEach{$(i,j)$ with $0\le i\le k$, $0\le j\le m$, in nondecreasing order of $i+j$}{
      \If{$M[i,j]<+\infty$}{
        \If{$j<m$ and $(i\ge q$ or $j<b)$}{
          $\ell_Y\leftarrow P_X(i)+P_Y(j+1)-d_{y_{j+1}}$\;
          \lIf{$\ell_Y\le\theta$}{$M[i,j+1]\leftarrow \min\{M[i,j+1],M[i,j]\}$}
        }
        \If{$i<k$}{
          $\ell_X\leftarrow P_X(i+1)+P_Y(j)-d_{x_{i+1}}$\;
          \lIf{$(i+1=q$ and $j=b)$ or $(i+1\ne q$ and $\ell_X\ge\theta)$}{$M[i+1,j]\leftarrow \min\{M[i+1,j],M[i,j]+\ell_X\}$}
        }
      }
    }
    $v_{\mathrm{best}}\leftarrow \min\{v_{\mathrm{best}},M[k,m]\}$\;
  }
}
\Return $v_{\mathrm{best}}$\;
\end{algorithm}

\begin{theorem}
\label{thm:xp-k-improved}
For fixed $k\ge 1$, the optimal value of $k$-sum lateness can be computed in
$O(k^2n^{k+2})$ time. An optimal schedule can be recovered within the same
asymptotic time by storing predecessors.
\end{theorem}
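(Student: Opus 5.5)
We prove correctness in two directions, soundness and completeness, and then count the work.

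\emph{Soundness.} Every finite value $v_{\mathrm{best}}$ returned is at least $\OPT$. Fix a guess $(X,q,b)$ and suppose $M[k,m]<+\infty$. Following predecessors yields a full schedule $S$ that interleaves $(x_1,\ldots,x_k)$ with $Y$ and preserves both orders. The constraints imposed by the transitions give the following lateness values in $S$:
\begin{itemize}
\item every $y\in Y$ has lateness at most $\theta$;
\item every $x_a$ with $a\ne q$ has lateness at least $\theta$;
\item $x_q$ has lateness exactly $\theta$, since it is placed after exactly $b$ jobs of $Y$ and $a<q$ jobs of $X$ precede it.
\end{itemize}
Hence $\min_{x\in X}L_x(S)\ge \max_{y\notin X}L_y(S)$, so $X$ is a top-$k$ set for $S$. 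Consequently $M[k,m]=\sum_{a}L_{x_a}(S)=\lambda_k(S)\ge\OPT$.

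\emph{Completeness.} Some guess attains $\OPT$. Take the pair $(S,X)$ from Theorem~\ref{thm:block-island}, which gives three facts:
\begin{itemize}
\item by the block--island decomposition, the jobs outside $X$ appear in $S$ grouped by due date in increasing order;
\item inside each class they appear in lexicographic \SPT{} order, since $S$ is increasing and equal-due-date jobs are comparable under dominance;
\item therefore the restriction $S|_{\Jobs\setminus X}$ is exactly the order $Y$ sorted by due date with ties broken by $\prec_p$.
\end{itemize}

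Now set up the guess. Let $(x_1,\ldots,x_k)$ be the order of $X$ in $S$. Let $x_q$ be a job of minimum lateness in $X$, and let $b$ be the number of $Y$-jobs before $x_q$ in $S$. Then $\theta=L_{x_q}(S)$. Because $X$ is a top-$k$ set:
\begin{itemize}
\item every $y\in Y$ satisfies $L_y(S)\le\theta$;
\item every $x_a$ satisfies $L_{x_a}(S)\ge\theta$.
\end{itemize}
So every prefix of $S$ is a feasible partial schedule for the table, and the DP path following $S$ keeps all entries finite. Its value is $\sum_a L_{x_a}(S)=\lambda_k(S)=\OPT$, so $M[k,m]\le\OPT$. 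Combined with soundness, the algorithm returns exactly $\OPT$.

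One subtle point needs checking. The DP correctly computes the minimum over constrained interleavings because the constraints are local to each appended job, and the lateness of a job depends only on the state $(i,j)$. The remaining subtlety is the $y$-transition guard "$i\ge q$ or $j<b$", which correctly enforces that at most $b$ jobs of $Y$ precede $x_q$.

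\emph{Running time.} The work factors as follows:
\begin{itemize}
\item there are $n^{\underline{k}}\le n^k$ ordered tuples;
\item for each tuple, sorting $Y$ and computing prefix sums costs $O(n\log n)$, or $O(n)$ by filtering a single presorted \EDD{} list;
\item there are $k(m+1)=O(kn)$ pairs $(q,b)$;
\item each table has $O(kn)$ entries with $O(1)$ transitions.
\end{itemize}
This gives $O(n^k\cdot kn\cdot kn)=O(k^2n^{k+2})$, which dominates the per-tuple $O(n)$ preprocessing. Storing a predecessor pointer per entry, together with the best $(X,q,b)$, recovers an optimal schedule in the same time.

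The main obstacle is completeness, specifically justifying that the restriction of the structured optimum to $\Jobs\setminus X$ coincides with the algorithm's fixed tie-broken order. That step rests entirely on parts 1 and 2 of Theorem~\ref{thm:block-island}.
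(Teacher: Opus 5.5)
Your proposal is correct and follows essentially the same route as the paper. Soundness holds because any finite $M[k,m]$ yields an interleaving in which $X$ is a top-$k$ set at threshold $\theta$. Completeness comes from the structured optimum of Theorem~\ref{thm:block-island}, whose restriction to $\Jobs\setminus X$ is exactly the algorithm's tie-broken \EDD{} order, with $x_q$ of minimum lateness and $b$ the number of preceding $Y$-jobs. The running time is the same $O(n^k)\cdot O(kn)\cdot O(kn)$ count. The only difference is presentational: the paper proves the invariant $M[i,j]=\min\{c(S):S\in\mathcal F_{i,j}\}$ by induction on $i+j$, while you argue directly along DP paths via predecessor reconstruction, using the same key fact that the completion time at state $(i,j)$ is $P_X(i)+P_Y(j)$ regardless of the interleaving.
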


\begin{proof}
Fix one choice of $(X,q,b)$.  For each state $(i,j)$, let $\mathcal{F}_{i,j}$ be the set of partial schedules used
in the definition of $M[i,j]$. Thus each schedule in $\mathcal{F}_{i,j}$ schedules exactly
$x_1,\ldots,x_i$ and $y_1,\ldots,y_j$, preserves the two orders, and satisfies the
threshold and position conditions stated above. In particular, if $i<q$, then
$j\le b$.  For $S\in\mathcal F_{i,j}$, let
$c(S)=\sum_{a=1}^i L_{x_a}(S)$.

We prove that the table satisfies
\[
M[i,j]=\min\{c(S):S\in\mathcal F_{i,j}\},
\]
with value $+\infty$ if $\mathcal F_{i,j}$ is empty.  The proof is by
induction on $i+j$.

The claim is clear for $(0,0)$.  Let $(i,j)$ be a state with $i+j>0$, and
assume the claim for smaller values of $i+j$.  Every schedule in
$\mathcal F_{i,j}$ has completion time $P_X(i)+P_Y(j)$.  Hence the lateness
of the last job is determined by the state and by the identity of that last
job.

Let $S\in\mathcal F_{i,j}$.  If the last job of $S$ is $y_j$, then deleting
$y_j$ gives a schedule in $\mathcal F_{i,j-1}$.  The deleted job has lateness
\[
P_X(i)+P_Y(j)-d_{y_j}.
\]
Thus the condition for appending $y_j$ is exactly the condition checked by
the first transition.  Since $y_j\notin X$, this transition adds no cost.
Therefore the best schedule in this case is obtained by the first transition,
when that transition is allowed.

If the last job of $S$ is $x_i$, then deleting $x_i$ gives a schedule in
$\mathcal F_{i-1,j}$.  The deleted job has lateness
\[
\ell_X=P_X(i)+P_Y(j)-d_{x_i}.
\]
If $i=q$, then $x_i=x_q$ must be placed after exactly $b$ jobs of $Y$, so
$j=b$; in this case $\ell_X=\theta$. If $i\ne q$, the required condition is
$\ell_X\ge\theta$. These are exactly the conditions checked by the second
transition. This transition adds exactly $\ell_X$ to the cost. Therefore
the best schedule in this case is obtained by the second transition, when
that transition is allowed.

Every nonempty schedule in $\mathcal F_{i,j}$ ends either with $y_j$ or with
$x_i$.  The two transitions cover these two cases.  Conversely, every allowed
transition appends the next job in one of the two fixed orders, checks the
required condition for that job, and preserves the condition on the position
of $x_q$.  Hence every allowed transition produces a schedule in the
corresponding set $\mathcal F_{i,j}$ with the value assigned by the table.
This proves the claim.

Taking $(i,j)=(k,m)$, the value $M[k,m]$ is the minimum of
$\sum_{x\in X}L_x$ over all schedules consistent with the fixed choice
$(X,q,b)$.  Every such schedule satisfies $L_x\ge\theta$ for all $x\in X$ and
$L_y\le\theta$ for all $y\in Y$.  Since $|X|=k$, the set $X$ is a top-$k$ set,
allowing ties at value $\theta$.  Therefore $M[k,m]$ is the $k$-sum lateness
value of the best schedule consistent with $(X,q,b)$.

It remains to show that one choice considered by the algorithm contains an
optimum. By Theorem~\ref{thm:block-island}, there is an optimal
schedule $S^*$ and a top-$k$ set $X^*$ such that $S^*$ is increasing and
$(S^*,X^*)$ admits a block--island decomposition. List the jobs of $X^*$ in
their order in $S^*$ as $X=(x_1,\ldots,x_k)$, and let $Y^*=\Jobs\setminus X^*$.

The block--island decomposition orders the jobs of $Y^*$ by nondecreasing due
date.  Since $S^*$ is increasing, jobs of $Y^*$ with the same due date appear
in \SPT{} order, with ties broken by $\prec$.  Hence the order of $Y^*$ in
$S^*$ is the order used by the algorithm for this choice of $X$.

Let $x_q$ be a job of minimum lateness inside $X^*$, and let $b$ be the number
of jobs of $Y^*$ placed before $x_q$ in $S^*$.  The algorithm enumerates this
choice of $(X,q,b)$.  For this choice, $\theta=L_{x_q}(S^*)$.  Since $X^*$ is a
top-$k$ set, all jobs of $X^*$ have lateness at least $\theta$, and all jobs of
$Y^*$ have lateness at most $\theta$.  Thus $S^*$ is feasible for the dynamic
program for this choice.  Hence the value computed for this choice is at most
$\lambda_k(S^*)$.

Conversely, every finite value computed by the dynamic program is the
$k$-sum lateness value of an actual schedule.  Hence no computed value is
smaller than the optimum.  Therefore Algorithm~\ref{alg:xp-k-dp} returns the
optimal value.

There are $n(n-1)\cdots(n-k+1)=O(n^k)$ ordered choices for $X$.  For each
choice there are $k$ choices for $q$ and $m+1=O(n)$ choices for $b$.  Each
dynamic program has $O(kn)$ states and constant-time transitions.  Thus the
total running time is
$O(n^k\cdot k\cdot n\cdot kn)=O(k^2 n^{k+2})$.  The sorting and the
construction of each sequence $Y$ are absorbed in this bound.  Storing
predecessors gives an optimal schedule within the same asymptotic time.
\end{proof}

\subsubsection{Consequence for top-\texorpdfstring{$k$}{k} tardiness}

The same algorithm gives the corresponding bound for $k$-sum tardiness. This
answers Woeginger's question of whether the problem admits an $O(n^{k+c})$-time
algorithm for a constant~$c$.

\begin{corollary}
\label{cor:woeginger-tardiness}
For fixed $k$, $k$-sum tardiness on one machine can be solved in
time
\[
O(k^2 n^{k+2}),
\]
where the asymptotic notation is with respect to $n$.
\end{corollary}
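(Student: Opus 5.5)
The plan is to compose Proposition~\ref{prop:zero-dummy-reduction} with Theorem~\ref{thm:xp-k-improved}. Given a $k$-sum tardiness instance on $n$ jobs, we add the set $Z$ of $k$ dummy jobs with processing time $0$ and due date $0$. This gives a $k$-sum lateness instance on $n'=n+k$ jobs, with the same $k$. The proposition says two things. First, the optimal values coincide, $\OPT_L=\OPT_T$. Second, the restriction to $\Jobs$ of any optimal lateness schedule is an optimal tardiness schedule, and this restriction is computable in linear time.

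We then run Algorithm~\ref{alg:xp-k-dp} on the enlarged instance. Theorem~\ref{thm:xp-k-improved} gives an optimal value and an optimal schedule $Q$ in time $O(k^2 (n+k)^{k+2})$. Zero processing times are allowed there, since the structural results were proved through the perturbation of Lemma~\ref{lem:symbolic-perturbation} and the algorithm evaluates objectives with the input data. Restricting $Q$ to $\Jobs$ then yields an optimal $k$-sum tardiness schedule.

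It remains to check the running-time bound. We may assume $k\le n$, since otherwise $k$-sum tardiness coincides with total tardiness of all jobs under the convention of summing available values. In any case, fixing $k\le n$ as in the problem definition gives $n+k\le 2n$. Hence
\[
k^2(n+k)^{k+2}\le k^2 2^{k+2} n^{k+2}.
\]
For fixed $k$ the factor $2^{k+2}$ is a constant, so the bound is $O(k^2 n^{k+2})$ with the asymptotics taken in $n$, as the corollary states. The construction of the dummy jobs and the final restriction cost $O(n+k)$, which is absorbed.

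The only step that needs care is this constant-factor blow-up from $n+k$ versus $n$. It is harmless precisely because $k$ is fixed and the corollary's $O(\cdot)$ is with respect to $n$. I would also note that if one prefers strictly positive processing times, the scaled reduction from Theorem~\ref{thm:NP-completeness} could be used instead. It is unnecessary here, since the lateness algorithm already handles zero-processing jobs.
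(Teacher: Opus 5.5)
Your proposal is correct and follows the paper's proof. Like the paper, you add $k$ zero-processing, zero-due-date dummy jobs, combine Proposition~\ref{prop:zero-dummy-reduction} with Theorem~\ref{thm:xp-k-improved} on the $n+k$ jobs, and restrict the returned schedule; your explicit bound $(n+k)^{k+2}\le 2^{k+2}n^{k+2}$ just makes precise the paper's step ``for fixed $k$, this is $O(k^2n^{k+2})$''.
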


\begin{proof}
Given an instance of $k$-sum tardiness on $n$ jobs, add $k$ dummy
jobs with processing time $0$ and due date $0$. By
Proposition~\ref{prop:zero-dummy-reduction}, solving $k$-sum lateness on
the augmented instance is equivalent to solving the original $k$-sum tardiness
instance.

The augmented instance has $n+k$ jobs. Applying
Theorem~\ref{thm:xp-k-improved} gives running time
\[
O(k^2(n+k)^{k+2}).
\]
For fixed $k$, this is
\[
O(k^2n^{k+2})
\]
as a function of $n$. Restricting the returned augmented schedule to the
original jobs gives an optimal schedule for the original tardiness instance.
\end{proof}

\subsubsection{FPT algorithm for \texorpdfstring{$D+k$}{D+k}}
\label{sec:fpt-kd-dp}

This subsection gives an FPT algorithm for the parameter $D+k$, where $D$ is the number of distinct due dates. It is a direct refinement of the XP algorithm for fixed $k$ presented in Section~\ref{sec:xp-k}, taking advantage of the block--island decomposition to reduce the search space.

Let $\delta_1<\cdots<\delta_D$ be the distinct due dates. For each $i\in[D]$, let $J_i=\{j\in\Jobs:d_j=\delta_i\}$, let $n_i=|J_i|$, and write $J_i=(j_{i,1},\ldots,j_{i,n_i})$ for this class sorted by \SPT{} (with ties broken by $\prec$). We compute the prefix sums $A_i(t)=\sum_{s=1}^t p_{j_{i,s}}$ for $0\le t\le n_i$.

By Theorem~\ref{thm:block-island}, there is an optimal schedule with a top-$k$ set $X$ such that, for each $i\in[D]$, the set $X\cap J_i$ is a suffix of $J_i$ in lexicographic \SPT{} order, and $B_i=J_i\setminus X$ is the corresponding prefix block.

Instead of guessing the complete order of all jobs as we did in the XP algorithm, we now only need to guess the sizes $c_i = |X \cap J_i|$ and the relative order of the jobs in $X$. We do this by enumerating words $\sigma \in [D]^k$, which encode (in order) the due-date classes of the $k$ selected jobs. Let $c_i(\sigma)$ be the number of occurrences of symbol $i$ in $\sigma$. If $c_i(\sigma) > n_i$ for some $i$, the guess is discarded. Otherwise, define $X_i(\sigma)$ as the suffix of $J_i$ of size $c_i(\sigma)$, define the guessed top-$k$ set as $X(\sigma)=\bigcup_{i=1}^D X_i(\sigma)$, and define $B_i(\sigma)=J_i\setminus X_i(\sigma)$.

Given such a guess $\sigma$, we also need a concrete order of the jobs in $X(\sigma)$. We denote this ordered list by $\vec X(\sigma)=(x_1,\ldots,x_k)$, where the underlying set is $X(\sigma)=\{x_1,\ldots,x_k\}$. The word $\sigma$ determines $x_1,\ldots,x_k$ as follows. For each class $i$, list the jobs of $X_i(\sigma)$ in lexicographic \SPT{} order. Then assign these jobs, in that order, to the positions $a$ with $\sigma_a=i$, also taken in increasing order. This defines one job $x_a$ for every position $a\in[k]$. Let $P_X(a)=\sum_{b=1}^a p_{x_b}$ and $P_B(h)=\sum_{i=1}^h p(B_i(\sigma))$, where $p(B_i(\sigma))=\sum_{j\in B_i(\sigma)}p_j=A_i(n_i-c_i(\sigma))$.

Just as in the XP algorithm, we guess a position $q\in[k]$ for the job $x_q$ that achieves the minimum lateness within $X(\sigma)$. Instead of guessing the number of individual jobs outside the top-$k$ set that appear before $x_q$, we guess an index $h_0\in\{0,\ldots,D\}$: the DP will place exactly the block transitions $B_1,\ldots,B_{h_0}$ before placing $x_q$. Empty blocks are treated as zero-length block transitions, so $h_0$ is an index in the ordered block sequence, not the number of non-empty blocks before $x_q$ in the expanded schedule. This fixes the minimum lateness threshold $\theta = P_X(q) + P_B(h_0) - d_{x_q}$ of the top-$k$ set.

For this fixed choice of $(\sigma,q,h_0)$, define a table $M[a,h]$, with $0\le a\le k$ and $0\le h\le D$. The entry $M[a,h]$ is the minimum total lateness contributed by the placed jobs of $X(\sigma)$ over all partial schedules that place $x_1,\ldots,x_a$ and have processed the block transitions $B_1,\ldots,B_h$, preserve the relative orders of the $x$-jobs and of the block sequence, and satisfy the following conditions: each processed non-empty block has maximum lateness at most $\theta$; each placed job $x_m$ with $m\ne q$ has lateness at least $\theta$; $x_q$, if placed, is placed after exactly the block transitions $B_1,\ldots,B_{h_0}$; and, if $x_q$ has not yet been placed, then only block transitions with index at most $h_0$ have been processed. If no such partial schedule exists, $M[a,h]=+\infty$.

The table is initialized by $M[0,0]=0$ and $M[a,h]=+\infty$ otherwise. From a state $(a,h)$ with $M[a,h]<+\infty$, the current time is $P_X(a)+P_B(h)$. 

If $h<D$, we may append $B_{h+1}$, unless this would place more than $h_0$ blocks before $x_q$. Thus the transition is allowed only when $a\ge q$ or $h<h_0$. If $B_{h+1}$ is non-empty, we also require its maximum lateness $P_X(a)+P_B(h+1)-\delta_{h+1}$ to be at most $\theta$. If allowed, we update $M[a,h+1]\leftarrow \min\{M[a,h+1],M[a,h]\}$.

If $a<k$, we may append $x_{a+1}$. Let $\ell_X = P_X(a+1)+P_B(h)-d_{x_{a+1}}$. If $a+1=q$, the transition is allowed only when $h=h_0$. If $a+1\ne q$, it is allowed only when $\ell_X\ge\theta$. Whenever allowed, we update $M[a+1,h]\leftarrow \min\{M[a+1,h],M[a,h]+\ell_X\}$.

The minimum of $M[k,D]$ over all valid guesses gives the optimal $k$-sum lateness. The full procedure is shown in Algorithm~\ref{alg:fpt-Dk-dp}. The dynamic program evaluates $(k+1)(D+1)$ states for each of the $O(k D^{k+1})$ guesses, yielding a running time of $O(n \log n + k^2 D^{k+2})$, where the polynomial term accounts for the initial sorting and prefix sums. This establishes that the problem is FPT parameterized by $D+k$.

\begin{algorithm}[htb]
\caption{FPT algorithm for $k$-sum lateness parameterized by $D+k$}
\label{alg:fpt-Dk-dp}
\DontPrintSemicolon
Sort the distinct due dates as $\delta_1<\cdots<\delta_D$\;
\ForEach{$i\in[D]$}{
  Compute $J_i=\{j\in\Jobs:d_j=\delta_i\}$ and sort by $\prec_p$\;
  Compute $A_i(t)=\sum_{s=1}^t p_{j_{i,s}}$, for $0\le t\le |J_i|$\;
}
$v_{\mathrm{best}}\leftarrow+\infty$\;
\ForEach{$\sigma\in[D]^k$}{
  Compute $c_i(\sigma)$ for all $i\in[D]$\;
  \lIf{$c_i(\sigma)>|J_i|$ for some $i$}{discard $\sigma$ and continue}
  Define $X_i(\sigma)$, $B_i(\sigma)$, and the ordered jobs $x_1,\ldots,x_k$\;
  Compute $P_X(a)$, $0\le a\le k$, and $P_B(h)$, $0\le h\le D$\;
  \ForEach{$(q,h_0)$ with $q\in[k]$ and $h_0\in\{0,\ldots,D\}$}{
    $\theta\leftarrow P_X(q)+P_B(h_0)-d_{x_q}$\;
    Set all entries of $M[0..k,0..D]$ to $+\infty$ and then update $M[0,0]\leftarrow 0$\;
    \ForEach{$(a,h)$ with $0\le a\le k$, $0\le h\le D$, in nondecreasing order of $a+h$}{
      \If{$M[a,h]<+\infty$}{
        \If{$h<D$ and $(a\ge q$ or $h<h_0)$}{
          \lIf{$B_{h+1}=\emptyset$ or $P_X(a)+P_B(h+1)-\delta_{h+1}\le \theta$}{$M[a,h+1]\leftarrow \min\{M[a,h+1],M[a,h]\}$}
        }
        \If{$a<k$}{
          $\ell_X\leftarrow P_X(a+1)+P_B(h)-d_{x_{a+1}}$\;
          \lIf{$(a+1=q$ and $h=h_0)$ or $(a+1\ne q$ and $\ell_X\ge \theta)$}{$M[a+1,h]\leftarrow \min\{M[a+1,h],M[a,h]+\ell_X\}$}
        }
      }
    }
    $v_{\mathrm{best}}\leftarrow\min\{v_{\mathrm{best}},M[k,D]\}$\;
  }
}
\Return $v_{\mathrm{best}}$\;
\end{algorithm}

\begin{theorem}
\label{thm:fpt-kD-dp}
The optimal value of the $k$-sum lateness problem can be computed in time
\[
O(n\log n+k^2D^{k+2}).
\]
\end{theorem}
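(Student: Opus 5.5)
The proof mirrors that of Theorem~\ref{thm:xp-k-improved}, with individual jobs of $Y$ replaced by whole blocks. It has three parts: soundness, completeness, and running time.

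\emph{Soundness.} For a fixed guess $(\sigma,q,h_0)$, we show by induction on $a+h$ that $M[a,h]$ equals the minimum of $\sum_{m\le a}L_{x_m}$ over the partial schedules described in the definition of the table. The key observation is that every such partial schedule ends at time $P_X(a)+P_B(h)$. Hence the last appended item is either $x_a$ or the block $B_h$, and its contribution is determined by the state.

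When a non-empty block $B_{h+1}$ is appended, its jobs are placed consecutively in \SPT{} order. All of them share the due date $\delta_{h+1}$, so the maximum lateness in the block is that of its last job, namely $P_X(a)+P_B(h+1)-\delta_{h+1}$. The single check in the algorithm therefore certifies that every job of the block has lateness at most $\theta$. Blocks contribute no cost, and $x$-jobs contribute $\ell_X$.

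At $(k,D)$, every job of $X(\sigma)$ has lateness at least $\theta$, with $x_q$ attaining exactly $\theta$. Every job outside $X(\sigma)$ has lateness at most $\theta$. So $X(\sigma)$ is a top-$k$ set of the resulting full schedule, and $M[k,D]$ is its true $\lambda_k$-value. Consequently, every finite value the algorithm computes is at least $\OPT$.

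\emph{Completeness.} This is the step needing care. Invoke Theorem~\ref{thm:block-island} to obtain an optimal $S^*$ with top-$k$ set $X^*$ and decomposition $I_0\mid B_1\mid\cdots\mid B_D\mid I_D$.

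Let $\sigma$ record the due-date classes of the jobs of $X^*$ in their order in $S^*$. By the suffix property (part~3), $X^*\cap J_i$ is exactly the suffix $X_i(\sigma)$, and each $B_i$ equals $B_i(\sigma)$.

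We must also check that the order $\vec X(\sigma)$ coincides with the order of $X^*$ in $S^*$. Within one class, the jobs have equal due dates. Since $S^*$ is increasing (part~1), jobs with equal due date appear in lexicographic \SPT{} order. This is exactly the assignment rule used to build $\vec X(\sigma)$.

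Inside each block, the jobs also appear in \SPT{} order for the same reason, so the block's processing total and maximum lateness match the DP's formulas.

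Now take $x_q$ of minimum lateness in $X^*$, and let $h_0$ be the number of block indices, empty blocks included, preceding $x_q$ in the decomposition. Some care is needed in placing empty blocks, since they are zero-length transitions. We place each empty block at the last island boundary consistent with the decomposition, so that the constraint "$h\le h_0$ before $x_q$" is met. Then $\theta=L_{x_q}(S^*)$, and $S^*$ traces an allowed path in the table. Therefore the minimum over all guesses is at most $\lambda_k(S^*)=\OPT$.

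\emph{Running time.} There are at most $D^k$ words $\sigma$, $k$ choices of $q$, and $D+1$ choices of $h_0$. Each DP has $(k+1)(D+1)$ states with $O(1)$ transitions each. For a fixed $\sigma$, the quantities $c_i(\sigma)$, $P_X$, and $P_B$ (via the $A_i$) are computed in $O(k+D)$ time, which is absorbed. The total is $O(kD^k\cdot D\cdot kD)=O(k^2D^{k+2})$, plus $O(n\log n)$ for sorting and prefix sums.

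The main obstacle is the completeness step. It requires aligning the guessed order $\vec X(\sigma)$ and the block contents with $S^*$, which rests on parts~1 and~3 of Theorem~\ref{thm:block-island}, and it requires handling the bookkeeping of empty blocks relative to $h_0$.
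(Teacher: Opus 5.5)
Your proposal is correct and follows the paper's proof essentially step for step. Completeness goes through Theorem~\ref{thm:block-island}: the suffix property fixes $X^*\cap J_i$, the increasing property fixes the order $\vec X(\sigma^*)$, and $\theta=L_{x_q}(S^*)$. Soundness shows that every finite DP value is the $\lambda_k$-value of an actual schedule in which $X(\sigma)$ is a top-$k$ set, and the running-time count $D^k\cdot k(D+1)\cdot O(kD)$ is the same.

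The differences are cosmetic. You phrase soundness as an induction on $a+h$, as in the proof of Theorem~\ref{thm:xp-k-improved}, where the paper expands a finite DP path directly. Your special placement of empty blocks is not needed: any placement works once $h_0$ is defined as the number of block transitions preceding $x_q$ on the chosen path, since $P_B(h_0)$, and hence $\theta$, does not depend on that choice.
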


\begin{proof}
We first prove correctness. By Theorem~\ref{thm:block-island}, choose an
optimal schedule $S^*$ and a top-$k$ set $X^*$ satisfying the block--island
decomposition and the suffix property inside each due-date class. Let $\sigma^*\in[D]^k$ be the word of the due-date classes of the jobs of
$X^*$ in their order in $S^*$. Thus $\sigma^*_a=i$ if the $a$-th job of $X^*$
in $S^*$ belongs to $J_i$. The suffix property fixes the set $X^*\cap J_i$ in
each class $J_i$, and the increasing property fixes the order of these jobs inside the
class. Hence the word $\sigma^*$ reconstructs exactly the ordered list of jobs
from $X^*$. For this word, the sets $X_i(\sigma^*)$ are exactly $X^*\cap J_i$,
and the blocks $B_i(\sigma^*)$ are exactly $J_i\setminus X^*$.

By the block--island decomposition, the jobs outside $X^*$ appear as complete
blocks $B_1,\ldots,B_D$ in due-date order, interleaved with the jobs of $X^*$.
Empty blocks are included as zero-length block transitions. Thus $S^*$ corresponds
to a path in the DP for the word $\sigma^*$.

Choose $q$ so that $x_q$ has minimum lateness among the jobs of $X^*$ in
$S^*$. Let $h_0$ be the largest index such that the block transition $B_{h_0}$
is processed before $x_q$ on this path, with $h_0=0$ if no block transition is
processed before $x_q$. Then exactly the block transitions $B_1,\ldots,B_{h_0}$
are processed before $x_q$, and
\[
\theta=P_X(q)+P_B(h_0)-d_{x_q}=L_{x_q}(S^*).
\]
Since $x_q$ has minimum lateness inside $X^*$, every job of $X^*$ has
lateness at least $\theta$. Since $X^*$ is a top-$k$ set for $S^*$, every
job outside $X^*$ has lateness at most $\theta$.

It follows that every transition used by $S^*$ is allowed by the DP. The
transition that places $x_q$ is made exactly when $h=h_0$. Every other job
$x_a\in X^*$ has lateness at least $\theta$. Every nonempty block $B_i$
contains only jobs outside $X^*$, so all jobs in $B_i$ have lateness at most
$\theta$. The DP checks only the last job of $B_i$. This is enough because
all jobs in $B_i$ have due date $\delta_i$, and their completion times
inside the block are nondecreasing. This also holds when some processing times
are zero. Hence the block test is satisfied. Empty blocks impose no condition.

Along this path, the accumulated value is
\[
\sum_{x\in X^*} L_x(S^*)=\lambda_k(S^*).
\]
Therefore the algorithm returns a value at most the optimum.

Conversely, we prove that every finite DP path represents a real schedule whose
DP value is its $k$-sum lateness. Consider any finite DP path for some guess
$(\sigma,q,h_0)$. Expand the path into a schedule $S$ by replacing each block
transition for $B_i$ with the jobs of $B_i$ in their \SPT{} order, and each
job transition with the corresponding job $x_a$. This gives a schedule of all
jobs.

We claim that
\[
X(\sigma)=\{x_1,\ldots,x_k\}
\]
is a top-$k$ set for $S$. The DP places $x_q$ with lateness exactly
$\theta$. It also requires every other job $x_a$ to have lateness at least
$\theta$. Hence every job in $X(\sigma)$ has lateness at least $\theta$.

Now let $y\notin X(\sigma)$. Then $y$ belongs to some block $B_i$. If
$B_i$ is empty there is nothing to prove. Otherwise, all jobs in $B_i$ have
due date $\delta_i$, and their completion times inside the block are
nondecreasing. Thus the maximum lateness in $B_i$ is attained by the last job
of the block. The block transition checks that this last job has lateness at
most $\theta$. Therefore $L_y(S)\le \theta$.

Thus every job in $X(\sigma)$ has lateness at least $\theta$, and every job
outside $X(\sigma)$ has lateness at most $\theta$. Hence $X(\sigma)$ is a
top-$k$ set for $S$. The value accumulated by the DP is exactly
\[
\sum_{a=1}^k L_{x_a}(S).
\]
Since $X(\sigma)$ is a top-$k$ set for $S$, this sum is equal to
$\lambda_k(S)$. Therefore every finite DP value is the value of a real
schedule. No finite DP value is smaller than the optimum. Together with the
previous paragraph, this proves correctness.

We now bound the running time. Sorting all jobs by due date and sorting each
due-date class by \SPT{} takes $O(n\log n)$ time. The prefix sums $A_i$
are computed in $O(n)$ time.

There are $D^k$ words $\sigma$. For one word, the values $c_i(\sigma)$, the
ordered list $x_1,\ldots,x_k$, and the arrays $P_X$ and $P_B$ are
computed in $O(k+D)$ time from the sorted lists and prefix sums. For this
word, there are $k(D+1)$ choices of $(q,h_0)$. For each such choice, the DP has
$(k+1)(D+1)=O(kD)$ states. Each state has at most two transitions, and each
transition is evaluated in constant time. Hence one DP takes $O(kD)$ time.

The total time after preprocessing is $D^k\cdot k(D+1)\cdot O(kD)=O(k^2D^{k+2})$.
This proves the claimed bound for the optimal value.

To output a schedule, store parent pointers for the best DP run, or rerun that
run with parent pointers. The DP path has length $O(k+D)$. Expanding the
block transitions writes each job once. This adds $O(n)$ time.
\end{proof}

\begin{corollary}
\label{cor:fpt-tardiness}
The $k$-sum tardiness problem can be solved in time
\[
O(n\log n+k^2(D+1)^{k+2})
\]
where $D$ is the number of distinct due dates in the original tardiness instance.
\end{corollary}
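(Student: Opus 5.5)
We reduce to Theorem~\ref{thm:fpt-kD-dp} through the zero-dummy reduction of Proposition~\ref{prop:zero-dummy-reduction}, in the same way that Corollary~\ref{cor:woeginger-tardiness} follows from Theorem~\ref{thm:xp-k-improved}. Let the tardiness instance have $n$ jobs and $D$ distinct due dates. We add a set $Z$ of $k$ dummy jobs, each with processing time $0$ and due date $0$. By Proposition~\ref{prop:zero-dummy-reduction}, the optimal $k$-sum lateness value of the augmented instance equals the optimal $k$-sum tardiness value of the original instance. Moreover, the restriction of any optimal augmented schedule to the original jobs is optimal for the tardiness instance, and computing this restriction takes linear time.

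Next we count the parameters of the augmented instance. It has $n+k$ jobs. The number $D'$ of distinct due dates is $D$ if $0$ is already a due date of the original instance, and $D+1$ otherwise, so in both cases $D'\le D+1$. Applying Theorem~\ref{thm:fpt-kD-dp} to the augmented instance gives running time
\[
O\bigl((n+k)\log(n+k)+k^2(D+1)^{k+2}\bigr).
\]

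The only step that needs care is absorbing the $(n+k)\log(n+k)$ term into the stated bound. We may assume $k\le n$, since $k\in[n]$ in the tardiness instance. Then $n+k\le 2n$, so $(n+k)\log(n+k)=O(n\log n)$. This gives the claimed running time $O(n\log n+k^2(D+1)^{k+2})$. If an optimal schedule is required and not only the value, we recover it from the parent pointers described at the end of the proof of Theorem~\ref{thm:fpt-kD-dp}, in additional $O(n+k)=O(n)$ time, and then restrict it to the original jobs.
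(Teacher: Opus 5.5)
Your proposal is correct and follows the paper's proof essentially step for step. You add $k$ dummy jobs with processing time $0$ and due date $0$ via Proposition~\ref{prop:zero-dummy-reduction}, note that the augmented instance has $n+k$ jobs and at most $D+1$ distinct due dates, apply Theorem~\ref{thm:fpt-kD-dp}, and absorb $(n+k)\log(n+k)$ into $O(n\log n)$ using $k\le n$.
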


\begin{proof}
Given an instance of $k$-sum tardiness on $n$ jobs with $D$ distinct due dates,
we apply the reduction from Proposition~\ref{prop:zero-dummy-reduction}. We add
$k$ dummy jobs with processing time $0$ and due date $0$. This produces an
equivalent instance of $k$-sum lateness with $n+k$ jobs and at most $D+1$
distinct due dates. By Theorem~\ref{thm:fpt-kD-dp}, this augmented instance can
be solved in time
\[
O((n+k)\log(n+k)+k^2(D+1)^{k+2}).
\]
Since $k\le n$, this is
\[
O(n\log n+k^2(D+1)^{k+2}).
\]
Restricting the returned schedule to the original jobs gives an optimal
schedule for the original tardiness instance.
\end{proof}

\section{Conclusions, open problems, and the \EDD{}--\SPT{} asymmetry}
    \label{sec:concl}
    
    We presented XP algorithms for fixed $k$, fixed $D$, and fixed $P$, a
    pseudopolynomial algorithm for integral data, and an FPT algorithm parameterized by $D+k$.
    For unrestricted $k$, the problem is weakly NP-complete.

    It remains open whether the dependence on a constant number of due dates can be improved
    substantially, or whether the problem is fixed-parameter tractable with respect to $D$ alone.
    One possible route would be a mixed-integer linear formulation with $O(D)$ integer variables
    and polynomially many continuous variables and constraints.
    Such a formulation would imply an $f(D)\operatorname{poly}(n)$ algorithm by Lenstra's
    theorem for integer programming in fixed dimension \cite{Lenstra1983}.

\paragraph{Asymmetry between small and large $k$.}
A natural question is whether the small-$k$ and large-$k$ regimes behave symmetrically.
For $k=1$, the \EDD{} order is optimal, while for $k=n$, the \SPT{} order is optimal.
For fixed (small) $k$, one can use that some optimal schedule is close to \EDD{}.
This suggests that for small $h=n-k$, some optimal schedule might be close to \SPT{}.
The example below shows this is false: even for $h=1$, the unique optimal
schedule can be at adjacency distance $n-1$ from \SPT{}. Hence any algorithm
for small $h$ needs a different structural principle.

\begin{proposition}[No \SPT{} proximity for $h=1$]
\label{prop:no-spt-proximity-h1}
Fix $n\ge 2$ and let $k=n-1$ (equivalently $h=n-k=1$). There is an instance
whose unique optimal schedule has adjacency distance $n-1$ from the \SPT{}
order.
\end{proposition}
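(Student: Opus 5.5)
The idea is to use the identity that, for $k=n-1$, the objective is the total lateness minus the minimum lateness:
\[
\lambda_{n-1}(S)=\sum_j C_j(S)-\sum_j d_j-\min_j L_j(S).
\]
The term $\sum_j C_j$ favours \SPT{}, while the term $-\min_j L_j$ favours schedules that make the smallest lateness as large as possible. I will choose processing times that are almost equal, so that the total-completion-time term barely distinguishes between schedules. I will choose due dates so that the min-lateness term strongly rewards the exact reverse of \SPT{}.

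\emph{Construction.} Fix $0<\varepsilon<1/n^3$ and set $p_i=1+i\varepsilon$ for $i\in[n]$. These processing times are distinct and positive, so \SPT{} is the order $(1,2,\ldots,n)$ regardless of tie-breaking. Let $E=(n,n-1,\ldots,1)$ be the reverse order. Define $d_i:=C_i(E)$. Then every job has lateness exactly $0$ in $E$, so $\min_j L_j(E)=0$.

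\emph{Key step: strict decrease of the minimum lateness.} Let $S\neq E$ and take the first position where $S$ and $E$ differ. Let $j$ be the job that $S$ places there. In $E$, job $j$ appears strictly later, so $E$ schedules before $j$ every job of the common prefix plus at least one further job, say $e$. Hence
\[
C_j(S)\le C_j(E)-p_e\le C_j(E)-1,
\]
so $L_j(S)\le -1$ and $\min_j L_j(S)\le -1$.

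\emph{Comparison of the completion-time terms.} For every schedule, $\sum_j C_j$ lies between its \SPT{} value and its \LPT{} value. Completion times are sums of at most $n$ processing times, each in $[1,1+n\varepsilon]$. Therefore any two schedules differ in $\sum_j C_j$ by at most $n\cdot n\cdot n\varepsilon=n^3\varepsilon<1$. Combining the two terms, for every $S\neq E$,
\[
\lambda_{n-1}(S)-\lambda_{n-1}(E)=\Bigl(\sum_j C_j(S)-\sum_j C_j(E)\Bigr)-\min_j L_j(S)>-1+1=0.
\]
So $E$ is the unique optimal schedule.

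\emph{Distance to \SPT{}.} $E$ is the reversal of the \SPT{} order, so the longest common subsequence of the two orders has length $1$. Keeping $n-m$ jobs in common relative order requires a common subsequence of length $n-m$, so at least $n-1$ jobs must be relocated. Relocating $n-1$ jobs clearly suffices. Hence the adjacency distance is exactly $n-1$.

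The step needing most care is the strict-decrease argument: some job must finish at least one full unit earlier, and this uses the lower bound $p_e\ge1$. The rest is choosing $\varepsilon$ small enough that the total-completion-time gap, bounded crudely by $n^3\varepsilon$, cannot compensate for it. The case $n=2$ is covered by the same argument.
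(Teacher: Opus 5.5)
Your proof is correct and is essentially the paper's argument. It uses the same instance ($p_j=1+\varepsilon j$, with $d_j$ equal to the completion times in the reverse-\SPT{} order) and the same reduction of $\lambda_{n-1}(S)$ to total lateness minus minimum lateness; the only difference is that you prove $\min_j L_j(S)\le -1$ directly in the perturbed instance via the first position where $S$ departs from the reverse order, whereas the paper proves the bound at $\varepsilon=0$ (using $\sum_j\Delta_j=0$) and then controls the perturbation of the whole expression.

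One point is worth stating explicitly. Your $n^3\varepsilon$ bound on the spread of $\sum_j C_j$ relies on the $t$-th completion time of every schedule being a sum of exactly $t$ processing times, so the integer part $n(n+1)/2$ is the same for all schedules.
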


\begin{proof}
Set $\varepsilon:=1/(4n^3)$ and $p_j:=1+\varepsilon j$ for $j=1,\ldots,n$. Then the \SPT{}
order is
\[
S_{\mathrm{spt}}=(1,2,\ldots,n),
\]
and the reverse \SPT{} order is
\[
S_{\mathrm{rev}}=(n,n-1,\ldots,1).
\]
Define due dates by $d_j:=C_j(S_{\mathrm{rev}})$. Then $L_j(S_{\mathrm{rev}})=0$ for all
$j$, so $\lambda_{n-1}(S_{\mathrm{rev}})=0$.

We claim that $S_{\mathrm{rev}}$ is the unique optimum. Take any schedule
$S\neq S_{\mathrm{rev}}$ and set
$\Delta_j:=C_j(S)-C_j(S_{\mathrm{rev}})$. Since
$d_j=C_j(S_{\mathrm{rev}})$, we have $L_j(S)=\Delta_j$ and hence
\[
\lambda_{n-1}(S)=\sum_{j=1}^n \Delta_j-\min_j \Delta_j.
\]

First consider $\varepsilon=0$ (unit processing times). Then
$\Delta_j=\operatorname{pos}_S(j)-\operatorname{pos}_{S_{\mathrm{rev}}}(j)$ and
$\sum_j\Delta_j=0$. Since $S\neq S_{\mathrm{rev}}$, not all $\Delta_j$ are zero,
so $\min_j\Delta_j\le -1$ and therefore
\[
\sum_j\Delta_j-\min_j\Delta_j\ge 1.
\]

Now restore $\varepsilon=1/(4n^3)$. For any schedule $T$, the change in
$C_j(T)$ from the unit-processing-time instance to the instance
$p_i=1+\varepsilon i$ is $
\varepsilon\cdot \sum_{i:\operatorname{pos}_T(i)\le \operatorname{pos}_T(j)} i,
$
which lies in $[0,B]$, where
$
B:=\varepsilon\frac{n(n+1)}2$.
Hence, for every $j$, the change in $
\Delta_j=C_j(S)-C_j(S_{\mathrm{rev}})$
is the difference of two numbers in $[0,B]$, and therefore has absolute value
at most $B$. It follows that $\sum_j\Delta_j$ changes by at most $nB$ and that
$\min_j\Delta_j$ changes by at most $B$. Therefore the quantity
$\sum_j\Delta_j-\min_j\Delta_j$ changes by at most
\[
(n+1)B=(n+1)\varepsilon\frac{n(n+1)}2<1.
\]
Thus for every $S\neq S_{\mathrm{rev}}$ we still have
$\lambda_{n-1}(S)>0$, so $S_{\mathrm{rev}}$ is uniquely optimal.

Finally, $S_{\mathrm{spt}}=(1,\ldots,n)$ and
$S_{\mathrm{rev}}=(n,\ldots,1)$ share no common subsequence of length greater
than $1$, so their adjacency distance is $n-1$.
\end{proof}
\medskip
\paragraph*{Acknowledgements} This work was partially supported by ANID (Agencia Nacional de Investigaci\'on y Desarrollo, Chile) through FONDECYT Grant No.~1231669 and the BASAL Center for Mathematical Modeling (FB210005).
    \bibliographystyle{plain}
    \bibliography{refs}

\end{document}